\documentclass[a4paper,hidelinks, 12pt]{article}
\usepackage[utf8]{inputenc}
\usepackage{a4wide}
\usepackage{amsmath}
\usepackage{amsfonts}
\usepackage{amssymb}
\usepackage{amsthm}
\usepackage{mathabx}
\usepackage{array}
\usepackage{forest}
\usepackage{rotating}
\usepackage{thmtools}
\usepackage{thm-restate}

\usepackage{subfigure}
\usepackage{cite}
\usepackage{xcolor}
\usepackage{soul}
\usepackage{cancel}
\usepackage[toc,page]{appendix}
\numberwithin{equation}{section}
\usepackage{mathtools}
\usepackage{hyperref}
\usepackage{url}
\usepackage{enumitem}
\usepackage{algorithm}
\usepackage{float}
\usepackage{lipsum}
\usepackage{MnSymbol}
\allowdisplaybreaks

\newtheorem{proposition}{Proposition}
\newtheorem{corollary}{Corollary}

\theoremstyle{plain}

\newcommand{\La}{\mathfrak}
\newcommand{\Lg}{\mathsf}

\newcommand{\N}[1]{\mathbb{#1}}

\newcommand{\nn}{\nonumber}
\newcommand{\mc}{\mathcal}
\newcommand{\on}{\operatorname}
\newcommand{\spa}{\operatorname{span}}
\newcommand{\pr}{\operatorname{pr}}

\newcommand{\rank}{\operatorname{rank}}
\begin{document}
\begin{titlepage}
\begin{center}

{\large \bf {Scalar Lie point symmetries of the Standard Model with one or two real gauge singlets}}

\vskip 1cm

M.~Aa.~Solberg\footnote{E-mail: marius.solberg@ntnu.no} 

\vspace{1.0cm}

Department of Structural Engineering, \\ Norwegian University of Science and Technology, \\
Trondheim, Norway\\

\end{center}

\vskip 3cm

\begin{abstract}
We present a classification of all scalar Lie point symmetries of the Standard Model with
one or two real gauge-singlet scalars (SM+S and SM+2S). By analyzing the associated field equations, we identify all realizable and inequivalent Lie point
symmetry algebras of these models, distinguishing strict variational, variational (including
divergence symmetries), and Euler--Lagrange cases. In addition, we devise efficient algorithms that, for any given numerical instance of the models,
determine the Lie point symmetry algebra in each of the three categories by a parameter-based
decision procedure using affine reparametrizations and simple parameter tests, thereby avoiding
explicit symmetry analysis and the need to derive and solve the determining equations.
Finally, we prove several relevant general results, including a characterization of the three disjoint types of Lie point symmetry generators---strict variational, divergence, and non-variational---for a broad class of Lagrangians with potentials, including the SM+S and SM+2S.

\end{abstract}

\end{titlepage}

\setcounter{footnote}{0}

\tableofcontents

\section{Introduction}
\label{sect:intro}
Real scalar singlet extensions of the Standard Model (SM) are among the simplest Higgs-portal scenarios, in which new physics couples to the SM only through Higgs doublets \cite{SILVEIRA1985136,McDonald1994GaugeSS,Burgess_2001,ARCADI20201}. They can simultaneously offer a viable dark matter candidate and induce a strongly first-order electroweak phase transition, as required for successful electroweak baryogenesis and for explaining the observed matter--antimatter asymmetry of the Universe \cite{Ahriche_2007,Espinosa:2011ax,Ellis:2022lft}.

Adding one real scalar singlet to the SM (often denoted SM+S) can provide a dark matter candidate if a stabilizing symmetry is imposed, most commonly an unbroken $\mathbb{Z}_2$ symmetry $s\to -s$. In that case the renormalizable Higgs-portal interaction is $\lambda_{s\phi}\,s^2\,\Phi^\dagger\Phi$, where $s$ is the singlet and $\Phi$ the Higgs doublet. Although phenomenologically restricted, the $\mathbb{Z}_2$-symmetric SM+S remains a viable dark matter model in limited regions of parameter space \cite{yu2025newconstraintssingletscalar}. The SM+S without imposing a $\mathbb{Z}_2$ symmetry has also been considered for its impact on the electroweak phase transition and its relevance for electroweak baryogenesis, see e.g.~\cite{Lewis_2017,Chen_2017,Ellis:2022lft}.

Extensions with two real singlets (SM+2S) allow for a richer phenomenology and a larger set of realizable symmetry groups. The added freedom can furthermore strengthen a first-order electroweak phase transition and accommodate a dark matter candidate, while remaining consistent with experimental constraints over a wider region of parameter space \cite{Tofighi_2015,Shajiee_2019}.
 Beyond discrete stabilizing symmetries, continuous global symmetries acting on the singlets may also lead to phenomenologically viable scenarios. For instance, in the SM+2S an $\Lg{SO}(2)$ rotation symmetry is equivalent to describing the two real singlets as a single complex singlet with a global $\Lg{U}(1)$ symmetry, and such scenarios can yield pseudo-Goldstone boson dark matter when the symmetry is softly broken \cite{PhysRevD.79.015018,Kannike:2019wsn,Arina_2020}. More generally, symmetries constrain model building by reducing the number of independent parameters in the scalar potential, and continuous symmetries may imply conserved currents.

A systematic classification of the discrete and continuous variational symmetries admitted by a
model provides a catalogue of distinct model classes. Variational (Noether) symmetries are
symmetries of the action integral and are therefore, in the absence of quantum anomalies,
inherited by the quantized theory. In a suitable field basis, the resulting classes are
characterized by symmetry-imposed relations among the parameters, which are stable under
renormalization-group evolution and can be explored both theoretically and phenomenologically. Non-variational (also called dynamical) symmetries are also important in quantum field
theory \cite{Skinner:QFT2:Chap6}. A celebrated example is electromagnetic duality, which is a
symmetry of the Euler--Lagrange equations (i.e.\ the equations of motion) but not of the action;
see, e.g., \cite{AlvarezGaumeZamora:1997}.

Among the most important symmetries are Lie point symmetries, i.e.\ continuous transformations
connected to the identity that act on the spacetime and field variables and, when prolonged to
derivatives, map solutions of the field equations to solutions.

Lie point symmetries may be divided into strict variational (SVS), divergence (DS), and
non-variational symmetries (NVS), where SVS and DS are variational symmetries. Infinitesimally,
these three symmetry types correspond to three nested Lie algebras; namely, the strict variational
symmetry algebra (spanned by the SVS generators), the variational symmetry algebra (spanned by
the SVS and DS generators), and the Euler--Lagrange symmetry algebra (spanned by the SVS, DS,
and NVS generators). The latter algebra contains all Lie point symmetry generators of the field equations.

The aim of this work is to classify all realizable Lie point symmetry algebras of the three types
mentioned above in the generic SM+S and SM+2S models, and to provide practical algorithms that,
for any given point in the parameter space of either model, determine the resulting symmetry
algebras without resorting to a full explicit symmetry analysis. We achieve this by applying Lie symmetry analysis of systems of partial differential equations (PDEs) \cite{Lie1893} to the models' Euler--Lagrange equations, i.e.\ their field equations.
This approach was previously applied to the two-Higgs-doublet model (2HDM) in \cite{Solberg:2025ybf}, which also provides pedagogical examples and further discussion of Lie symmetry analysis in the context of multi-scalar and particle physics models.

The results of the present work may be summarized as follows. The classification
shows that, apart from the ubiquitous hypercharge generator \(X_Y\), all
additional scalar generators are affine transformations of the singlet fields,
namely shifts and linear transformations of the singlet field space. This is in
line with the natural expectation that, in models with polynomial singlet
potentials of degree at most four and canonical kinetic terms, continuous scalar
point symmetries should be affine field-space transformations rather than
nonlinear transformations of the singlet fields. More precisely, the
classification determines, for the SM+S and SM+2S, which Lie algebras spanned by
these affine generators occur as maximal symmetry algebras in the strict
variational, variational, and Euler--Lagrange categories.

\paragraph{Outline of the article}
\label{sec:OutlineOfTheArticle}
In Sections \ref{sec:PointSymmetriesOfSystemsOfPDEs}--\ref{sec:SymmetriesOfTheActionMcS} we review standard Lie point symmetry theory for systems of PDEs, including the connection to symmetries of the action. In Section \ref{sec:GeneralResultsforSymmetryClassification} we state several general results for theories with scalar potentials, culminating in a theorem that characterizes the three categories of Lie point symmetries for a wide class of models; see Corollary \ref{C:charVarScalSyms}. In Section \ref{sec:SMS} we carry out the Lie point symmetry classification of the SM+S, while Section \ref{sec:SM2S} classifies the Lie point symmetries of the SM+2S. An efficient algorithm for determining the symmetries of a given SM+2S potential is presented in Section \ref{sec:ASimpleAlgorithmForDecidingSM2SSymmetry}. Finally, Section \ref{sec:SummaryAndOutlook} contains a summary and outlook, Appendix~\ref{app:TechnicalProofs} contains the proofs of the technical results from Section~\ref{sec:GeneralResultsforSymmetryClassification}, Appendix~\ref{app:Leaves13to31} collects symmetry calculations for SM+2S parameter cases that do not lead to additional symmetries, and Appendix~\ref{sec:SymmetryAnalysisWithLgSO2AdaptedSM2SPotential} presents an alternative SO(2)-adapted formulation of the SM+2S potential.

\paragraph{Conventions and notation}
\label{sec:ConventionsAndNotation}
In this article we adopt the mathematicians' convention for Lie algebras, in which the generators are chosen so that $[X,Y]\in \La{g}$ (rather than $[X,Y]= iZ$ with $Z\in\La{g}$).
 Moreover, 
$d_\mu\equiv d/dx^\mu$ denotes the total derivative, while $D_\mu$ is reserved for the covariant derivative \eqref{E:covarDerDef}, unless the index is a multi-index $J$, see section \ref{sec:InfinitesimalGenerators}. 
Since a given abstract Lie algebra may act on the scalar fields in multiple
inequivalent ways, we distinguish abstract Lie-algebra isomorphism, denoted
\(\La{g}\cong\La{h}\), from the stronger equivalence
\(\La{g}\eqsim_O\La{h}\). The latter means that the algebras are abstractly
isomorphic and that, after an affine reparametrization with orthogonal linear
part, possibly the identity, one can choose bases of generators for the two
algebras such that corresponding generators act identically on the scalar
fields. The notion of equivalence encoded by \(\eqsim_O\) is made precise in
Section~\ref{sec:AffineReparametrizations}. This distinction will be relevant
below, where abstractly isomorphic but inequivalent field actions occur; see,
e.g.~\eqref{E:LieAlgX1FreeMassiveSMS}--\eqref{E:LieAlgX3FreeMassiveSMS}
in Section~\ref{sec:DeterminingEquationsSMS}.
Moreover, symbols such as $\La{a}(n)$ will be used to denote specific realizations of algebras by vector fields on the scalar field space, rather than the corresponding abstract Lie algebras.
Finally, repeated indices are implicitly summed over (Einstein's summation convention). 

\section{Lie symmetry analysis of PDEs}
\label{sec:LieSymmetryAnalysisOfPDEs}
In this section we review the aspects of Lie symmetry theory for systems of PDEs that are relevant for our analysis.
The presentation in Sections~\ref{sec:PointSymmetriesOfSystemsOfPDEs}--\ref{sec:SymmetriesOfTheActionMcS} partly overlaps with our earlier review in \cite{Solberg:2025ybf}, but the treatment here is condensed, with a few points elaborated.
The underlying material is standard; see, e.g., \cite{olver1998applications,OlverLecturesLGaDE,hydon2000symmetry,bluman2010applications,cantwell2002introduction}.

\subsection{Point symmetries of systems of PDEs} 
\label{sec:PointSymmetriesOfSystemsOfPDEs}

Consider an $n$th–order system of PDEs
\begin{align}\label{E:nthOrdPDEs}
  \Delta_i(x,y,y^{(1)},\ldots, y^{(n)}) = 0, \qquad i = 1,\ldots,m,
\end{align}
with $d$ independent variables $x=(x^0,\ldots, x^{d-1})$ and
$q$ dependent variables $y=(y^1,\ldots,y^q)$. Here $y^{(k)}$ collectively denotes all $k$th–order partial derivatives of the $y^j$ with respect to the $x^\mu$. We write the system compactly as $\Delta=0$.

A \emph{point symmetry} $S$ of \eqref{E:nthOrdPDEs} is a diffeomorphism of the space of independent and dependent variables (a point transformation)
that maps solutions to solutions:
\begin{align}
  S : U \subset \N{R}^{d+q} \to  \N{R}^{d+q}, \qquad S(x,y) = (\hat x,\hat y),
\end{align}
for some open set $U$, where its action is prolonged to the derivatives so that
\begin{align}
  S\!\left(\frac{\partial^k y^i}{\partial x^{\mu_1}\cdots \partial x^{\mu_k}}\right)
  = \frac{\partial^k \hat y^i}{\partial \hat x^{\mu_1}\cdots \partial \hat x^{\mu_k}},
\end{align}
and the transformed system
\begin{align}\label{E:nthOrdPDEsTransf}
  \Delta_i(\hat x,\hat y,\hat y^{(1)},\ldots,\hat y^{(n)}) = 0, \qquad i=1,\ldots,m,
\end{align}
holds whenever \eqref{E:nthOrdPDEs} holds. In compact notation,
\begin{align}\label{E:invarCond}
  \Delta = 0 \;\Rightarrow\; \hat{\Delta} = 0,
\end{align}
where $\hat{\Delta} \equiv \Delta(\hat x,\hat y,\hat y^{(1)},\ldots,\hat y^{(n)})$.

\subsection{Prolongations of infinitesimal generators} 
\label{sec:InfinitesimalGenerators}

An infinitesimal generator of a point transformation is a vector field
\begin{align}\label{E:infGenPoint}
  X = \xi^\mu(x,y)\,\partial_{\mu} + \eta^i(x,y)\,\partial_{y^i},
\end{align}
with $\partial_{\mu} \equiv \partial_{x^\mu}$, $\mu = 0,\ldots,d-1$ and $i=1,\ldots,q$. The one-parameter group $S_\epsilon=\exp(\epsilon X)$ generated by $X$ acts on $z=(x,y)$ by
\begin{align}
  S_\epsilon(z)=\hat z = z + \epsilon X(z) + \mathcal{O}(\epsilon^2),
\end{align}
for $|\epsilon|<\epsilon_0$, where $\epsilon_0>0$ may be finite or infinite.

The $k$th prolongation of $X$ extends the action of $X$ to derivatives of $y$ up to order $k$,
\begin{align}\label{E:kProlong}
  \operatorname{pr}^{(k)}X
  = X + \sum_{1 \le |J| \le k} \eta^i_J\,\partial_{y^i_J},
\end{align}
where $J=(j_0,\ldots,j_{d-1})$ is a multi–index of length
$|J|$ and $y^i_J$ is the derivative with respect to $J$;
\begin{align}
  |J| &= j_0 + \cdots + j_{d-1}, \label{E:|J|} \\
  y^i_J &\equiv 
  \frac{\partial^{|J|} y^i}{(\partial x^0)^{j_0}\cdots(\partial x^{d-1})^{j_{d-1}}}. \label{E:y^i_J}
\end{align}
The prolonged coefficients $\eta^i_J$ are given by
\begin{align}
  \eta^i_J &= D_J(Q^i) + \xi^\mu y^i_{J,\mu}, \\
  D_J &= \Big(\frac{d}{dx^0}\Big)^{j_0}\cdots\Big(\frac{d}{dx^{d-1}}\Big)^{j_{d-1}},\label{E:itTotder} \\
  Q^i &= \eta^i - \xi^\mu y^i_{,\mu}, \label{E:characteristicDef}
\end{align}
where $d/dx^\mu$ denotes the total derivative and $Q^i$ is known as the \emph{characteristic} of $X$. At this stage $X$ is a candidate point symmetry generator; it becomes a symmetry generator only after imposing the symmetry condition \eqref{E:invarCond}.

If $\xi^\mu=0$ for all $\mu$, the vector field $X$ is in \emph{evolutionary form},
\begin{align}\label{E:evolRepr}
  X=X_Q \equiv Q^i \partial_{y^i}.
\end{align}
The vector field $X_Q$ is called the \emph{evolutionary representative} of $X$. If $Q$ depends on derivatives, then $X_Q$ is a generalized (non-point) vector field \cite{olver1998applications}.

Throughout this work we restrict to \emph{scalar Lie point transformations}, i.e.\ point transformations that leave the independent variables fixed. Hence $\xi=0$ and
\begin{align}
  Q^i=\eta^i=\eta^i(x,y),
\end{align}
so that $X$ is a point vector field in evolutionary form. For vector fields in evolutionary form, the prolongation simplifies to
\begin{align}
  \operatorname{pr}X=\operatorname{pr}X_Q
  = \sum_{i,J} (D_J Q^i)\,\partial_{y^i_J}
  = \sum_{i,J} (D_J \eta^i)\,\partial_{y^i_J}.
\end{align}

We moreover define the (formal) infinite prolongation 
\begin{align}
  \operatorname{pr}X \equiv \operatorname{pr}^{(\infty)}X
\end{align}
by \eqref{E:kProlong} with $k=\infty$.

Let now $\Lg{G}$ be a connected Lie group with Lie algebra $\La{g}$, acting at least locally (i.e., possibly only defined in a neighbourhood of the identity) on the variables $z=(x,y)$ of a fully regular (locally solvable with non-vanishing Jacobian) system of $m$ PDEs $\Delta=0$. Then it can be shown that $\Lg{G}$ is a (local) symmetry group of $\Delta=0$ if and only if
\begin{align}\label{E:prDelta=0PDEsymCond}
  \bigl(\operatorname{pr}X(\Delta_i)\bigr)\big|_{\Delta = 0} = 0,
  \qquad \forall i\in \{1,\ldots,m \},
\end{align}
for all infinitesimal generators $X \in \La{g}$. In practice, only the prolongation up to the system order $n$ is needed, i.e.~$\operatorname{pr}^{(n)}X$.

The condition \eqref{E:prDelta=0PDEsymCond} is the \emph{linearized symmetry condition}, and the resulting equations are the \emph{determining equations} for the symmetry algebra $\La{g}$. Combining \eqref{E:kProlong} and \eqref{E:prDelta=0PDEsymCond} yields an overdetermined linear system of PDEs for the coefficients $\xi^\mu(x,y)$ and $\eta^i(x,y)$, which can typically be solved explicitly to obtain the infinitesimal symmetries. In this work we use the \texttt{Mathematica} package \texttt{SYM} \cite{dimas2005sym} to compute the determining equations.
    
\subsection{Variational symmetries}
\label{sec:SymmetriesOfTheActionMcS}

Consider the action
\begin{align}
  \mc S[y]
  = \int_{\Omega} \mc L(x,y,\ldots,y^{(n)})\,dx^0\cdots dx^{d-1}.
\end{align}
A one-parameter Lie point transformation group $S_\epsilon=\exp(\epsilon X)$ is a \emph{variational}
(Noether) symmetry group if it leaves the action $\mc S$ invariant up to a boundary term.
Infinitesimally (i.e.\ to first order in $\epsilon$), this holds if and only if\footnote{For strict variational symmetries ($\beta^\mu\equiv 0$) the equivalence also holds for
finite $\epsilon$, at least locally for $\epsilon\in\langle-\epsilon_0,\epsilon_0\rangle$ where $S_\epsilon$ is
defined, whereas the situation is in general more complicated for divergence symmetries; see
Theorem~4.12 and the discussion following Def.~4.33 in \cite{olver1998applications}, respectively.
Nevertheless, whenever \eqref{E:condVarSym} holds, $S_\epsilon$ yields a (local) symmetry group of
the associated Euler--Lagrange equations; cf.~Theorem~4.34 in \cite{olver1998applications}.}
\begin{align}\label{E:condVarSym}
  \on{pr}X(\mc L) + \mc L\,d_\mu \xi^\mu = d_\mu \beta^\mu,
\end{align}
where $\beta^\mu$ is a local function (i.e.\ a function on an appropriate jet space; see e.g.\
\cite{Solberg:2025ybf}) and $d_\mu$ denotes the total derivative with respect to $x^\mu$,
\begin{align}\label{E:dmuexpanded}
  d_\mu
  = \partial_\mu
    + y^i_{,\mu}\,\partial_{y^i}
    + y^i_{,\mu\nu}\,\partial_{y^i_{,\nu}}
    + \cdots\,.
\end{align}

If \eqref{E:condVarSym} holds with $\beta^\mu=0$, we call $X$ a \emph{strict variational symmetry},
\begin{align}\label{E:condStrictVarSym}
  \on{pr}X(\mc L) + \mc L\,d_\mu \xi^\mu = 0,
\end{align}
and, if \eqref{E:condVarSym} holds but \eqref{E:condStrictVarSym} does not, we call $X$ a \emph{divergence symmetry}. In the latter case, the action changes by a boundary term at first order in $\epsilon$,
\begin{align}
  \hat{\mc S}
  = \mc S + \epsilon \int_{\partial\Omega} \beta^\mu\,dF_\mu
  + \mc O(\epsilon^2),
\end{align}
with $dF_\mu$ the outward normal surface element on the boundary $\partial\Omega$.
A first-order boundary term in $\epsilon$ is sufficient to guarantee an on-shell conserved Noether current, cf.~\eqref{E:NoetherCurrent} below, 
and if the boundary term vanishes (e.g.\ under suitable fall-off or periodicity conditions), then the action is invariant to first order in $\epsilon$.


Adding a total divergence does not change the Euler--Lagrange equations: $\mc L'=\mc L+d_\mu \beta^\mu$ yields the same field equations as $\mc L$.
Define the Euler operator $E=(E_1,\ldots,E_q)$ by
\begin{align}\label{E:EulerOpComps}
  E_i
  = \sum_{J}(-1)^{|J|}D_J\frac{\partial}{\partial y^i_J}
  = \frac{\partial}{\partial y^i}
    - d_\mu\frac{\partial}{\partial y^i_{,\mu}} + \cdots,
  \qquad i=1,\ldots,q,
\end{align}
so that the Euler--Lagrange equations read
\begin{align}\label{E:ELeqs}
  E(\mc L)=0.
\end{align}
Moreover, a differential function $f$ is a total divergence if and only if \cite{olver1998applications}
\begin{align}\label{E:EtotDiv}
  E(f)=0.
\end{align}
Hence, if $\on{pr}X(\mc L)+\mc L\,d_\mu\xi^\mu=f$, then $X$ is a divergence symmetry precisely when \eqref{E:EtotDiv} holds, in which case $f=d_\mu\beta^\mu$ for some local $\beta^\mu$.

For a fixed allowed parameter choice, let
\(\La g_{\text{svar}}\), \(\La g_{\text{var}}\), and
\(\La g_{\text{EL}}\) denote the corresponding full Lie algebras of strict
variational, variational, and Euler--Lagrange symmetry generators,
respectively, with the variational generators satisfying
\eqref{E:condVarSym}. Then
\begin{align}\label{E:symAlgIncl}
  \La g_{\text{svar}}\subseteq \La g_{\text{var}}\subseteq \La g_{\text{EL}}.
\end{align}
Elements of \(\La g_{\text{EL}}\setminus \La g_{\text{var}}\) are called
\emph{non-variational} symmetry generators. We will call a symmetry algebra \emph{realizable} if, for at least one allowed
parameter choice, it is maximal within one of these three symmetry categories.
Subalgebras are therefore not regarded as realizable merely because they are
contained in a realizable algebra; they are realizable only if, for at least
one allowed parameter choice, they are themselves maximal in one of the three
symmetry categories.

The maximality requirement is included because the aim is to classify potentials
by their full symmetry content. For a fixed potential and a fixed symmetry category, a proper subalgebra of the
maximal symmetry algebra in that category is also a symmetry algebra of the same
potential, but it is not the maximal one and therefore does not characterize the
potential's full symmetry content in that category.

By Noether's theorem, variational symmetries yield conserved currents. For a first-order Lagrangian $\mc L(x,y,y^{(1)})$, the current associated to $X=\xi^\mu\partial_{\mu}+\eta^i\partial_{y^i}$ can be written as
\begin{align}\label{E:NoetherCurrent}
  j^\mu
  = (\eta^i-\xi^\nu y^i_{,\nu})\frac{\partial\mc L}{\partial y^i_{,\mu}}
    + \xi^\mu \mc L - \beta^\mu,
  \qquad d_\mu j^\mu=0.
\end{align}
At the quantum level, variational symmetries typically lift to the path integral provided the measure is invariant (i.e.\ in the absence of anomalies).

\subsection{Characterization and preservation of symmetry types}
\label{sec:GeneralResultsforSymmetryClassification}
In this section we characterize the different symmetry types (strict variational, divergence, and
non-variational) and show that the symmetry type is preserved under affine reparametrizations of
the fields.

Let
\begin{align}
  \mc{L}=T-V
\end{align}
be the Lagrangian of a theory, where $T$ denotes the kinetic part (the sum of the kinetic terms),
while $V(\varphi_1,\ldots,\varphi_m)$ is a potential, i.e.\ a real polynomial whose variables
$\varphi_j$ form a subset of the dependent variables of the theory,
\begin{align}\label{E:varphiSubsety}
  \varphi=\{\varphi_1,\ldots,\varphi_m\}\subset y=\{y^1,\ldots,y^q\},
\end{align}
where, for simplicity, we (re-)order the $\varphi$'s first in $y$,
\begin{align}
  \varphi_i\equiv y^i, \quad \forall i\in \{1,2,\ldots, m\}.
\end{align}
Note that $\varphi$ may consist of any of the fields in $y$, although the intention is to take
$V(\varphi)=V(\phi)$, i.e.\ let $V$ be a scalar potential. Also note that the ``kinetic'' part $T$
may in fact consist of any terms complementary to $V$.

Moreover, we say that \(E(\mc{L})=0\) has a \emph{polynomial consequence} if it implies a relation of the form
\begin{align}\label{E:polCons}
  p(\varphi)=0,
\end{align}
where \(p\) is a non-zero polynomial in the same fields as the potential \(V\).

Furthermore, we define the complement 
\begin{align}\label{E:varphiDef}
	\varphi^c= y \setminus \varphi =\{y^i \}_{i=m+1}^q \equiv \{\varphi^c_j \}_{j=1}^{q-m},
\end{align}
 i.e.~all elements of $y$ that are not in $\varphi$.
We also let the set of all spacetime derivatives of all fields be given by
\begin{align}
	\on{der}(y)= \{\, y^i_J \mid 1\le i\le q,\ |J|\ge 1 \,\},
\end{align}
and we will assume that $T$ is a (finite) polynomial in the fields
and their spacetime derivatives of arbitrary order,
\begin{align}
	T\in \N{R}\{y\}\equiv \N{R}[y,y^{(1)}, y^{(2)},\ldots],
\end{align}
possibly with a constant term.
Finally, $V$ is assumed to be a real polynomial in the field variables $\varphi\subset y$ and to contain no derivatives. Thus
\begin{align}
  V = V(\varphi)\in \N{R}[\varphi].
\end{align}
Likewise, the notation $\eta^i\in \N{R}[y]$ means that $\eta^i$ is a real polynomial in the variables $y^j$.

We now present a substantially modified version of a theorem proved in \cite{Solberg:2025ybf}, which will be useful for characterizing strict variational, divergence, and non-variational Lie point symmetries.
 The assumptions are chosen to fit models of the type NHDM+KS (i.e.\ $N$-Higgs-doublet models with $K$ singlets): In such theories the kinetic part $T$ is either at least linear in derivatives or at least quadratic in $\varphi^c$ (e.g.\ the NHDM gauge boson terms, which are at least quadratic in the gauge fields when no derivatives are present). Nevertheless, the theorem applies to a much broader class of models with potentials.
\begin{restatable}{theorem}{PotentialSymmetryTheorem}\label{T:prXannTannV}
Let $\mc{L}=T-V$ be a Lagrangian with $T\in \N{R}\{y\}$ and $V(\varphi_1,\ldots,\varphi_m) \in \N{R}[\varphi]$ with $\varphi \subset y$,
and let the infinitesimal generator 
\[
X=\eta^i(y^1,\ldots,y^{q})\partial_{y^i}, 
\]
where $\eta^i \in \N{R}[y]$ for all $i$, be a symmetry of $E(\mc{L})=0$,
with constant terms
  $a_i=\eta^i(0)$. Moreover, let all terms in $T$ be either at least quadratic in elements of the set $\varphi^c$ or at least linear in the elements of $\on{der}(y)$. 
 Finally, assume that the system $E(\mc{L})=0$ has no polynomial consequences and that $V$ may contain linear terms $\alpha_i \varphi_i$.
Then the following hold: 
\begin{enumerate}[label=(\roman*)]
\item If $\on{pr}X(T)=d_\mu \beta^\mu$ for some (possibly vanishing) divergence $d_\mu \beta^\mu$, then $\on{pr}X(V)=a_i\alpha_i\in \N{R}$. 
 \item If $\on{pr}X(T)=0$ and $a_i\alpha_i=0$, then $X$ is a strict variational symmetry generator.
  \item If $\on{pr}X(T)=d_\mu \beta^\mu$ for some non-vanishing divergence $d_\mu \beta^\mu$, or if the 
	divergence is vanishing but $a_i\alpha_i\ne 0$, 
        then $X$ is a divergence symmetry generator.
\end{enumerate}
\end{restatable}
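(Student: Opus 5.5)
The plan is to apply the Euler operator to the infinitesimal variation of the Lagrangian and use the hypothesis that $X$ is a symmetry of $E(\mc L)=0$ together with the absence of polynomial consequences. Since $\xi=0$, condition \eqref{E:condVarSym} reads $\on{pr}X(\mc L)=d_\mu\beta^\mu$. Under the hypothesis of (i), $\on{pr}X(T)=d_\mu\beta^\mu$, so
\[
\on{pr}X(\mc L)=d_\mu\beta^\mu-\on{pr}X(V),\qquad \on{pr}X(V)=\eta^i\,\partial_{\varphi_i}V=:P(y),
\]
where $P\in\N{R}[y]$ contains no derivatives. The first step is to show that $P$ is constant. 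By \eqref{E:EtotDiv}, $E(\on{pr}X(\mc L))=-E(P)=-\partial_{y^j}P$, since $P$ has no derivatives. On the other hand, for an evolutionary $X$ there is the standard commutation identity (Olver, Prop.~5.49-type)
\[
E_j(\on{pr}X(\mc L))=\on{pr}X(E_j(\mc L))+\sum_{i,J}(-D)_J\Bigl(\tfrac{\partial \eta^i}{\partial y^j_J}\,E_i(\mc L)\Bigr),
\]
and here $\eta^i$ depends only on $y$. The right-hand side vanishes on solutions of $E(\mc L)=0$, because $X$ is a symmetry, so $\on{pr}X(E(\mc L))|_{E(\mc L)=0}=0$, and the second term is a combination of $E_i(\mc L)$ with coefficients $\partial_{y^j}\eta^i$. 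Hence every solution satisfies $\partial_{y^j}P=0$ for all $j$.

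The next step converts this into a polynomial consequence in $\varphi$. I would first show that $P$ depends only on $\varphi$. The obstacle is that $\eta^i$ for $i\le m$ may depend on $\varphi^c$, so $P$ may involve $\varphi^c$ too, while the definition \eqref{E:polCons} concerns polynomials in $\varphi$ only. To handle this, I would use the structural hypothesis on $T$: terms of $T$ are quadratic in $\varphi^c$ or contain derivatives. Hence, evaluating on configurations with $\varphi^c\equiv 0$ and $\varphi$ constant, $T$ and its Euler derivatives vanish to the relevant order, except possibly linear-in-$\varphi^c$ pieces without derivatives, which are excluded. Those constant configurations are admissible points: the Euler–Lagrange equations there reduce to $\partial_{\varphi}V=0$ and $\varphi^c$-equations that are trivially satisfied. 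Alternatively, and more robustly, I would argue that $\partial_{y^j}P$ vanishing on the solution set, combined with the no-polynomial-consequence assumption, applied after setting $\varphi^c=0$, forces each coefficient polynomial to vanish identically. Expanding $P$ in monomials of $\varphi^c$ and iterating the argument should give $P$ constant. I expect this step, making precise that ``vanishes on all solutions'' means ``vanishes identically'' via the no-polynomial-consequence hypothesis when $\varphi^c$ appears, to be the main technical point. Once $P=c$ is constant, evaluating at $y=0$ gives $c=\eta^i(0)\,\partial_{\varphi_i}V(0)=a_i\alpha_i$, since $\partial_{\varphi_i}V(0)$ is the linear coefficient $\alpha_i$. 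This proves (i).

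For (ii), $\on{pr}X(\mc L)=0-a_i\alpha_i=0$, which is \eqref{E:condStrictVarSym}. For (iii), $\on{pr}X(\mc L)=d_\mu\beta^\mu-a_i\alpha_i$, and a constant is a divergence ($c=d_\mu(c\,x^\mu/d)$), so \eqref{E:condVarSym} holds. It remains to show that \eqref{E:condStrictVarSym} fails. Here $\on{pr}X(\mc L)\neq 0$ in either subcase: if $a_i\alpha_i\ne0$ with vanishing divergence, this is immediate. If $d_\mu\beta^\mu\ne0$, it is a nonconstant expression or a nonzero one; I would check that $d_\mu\beta^\mu-a_i\alpha_i\ne0$. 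A possible cancellation, $d_\mu\beta^\mu$ equal to a nonzero constant, is still a nonzero $\on{pr}X(T)$ offset by the constant. I would treat this degenerate case by noting that $\on{pr}X(T)$ has every term linear in derivatives or quadratic in $\varphi^c$, so it has no constant term and cannot equal $a_i\alpha_i\ne0$. Hence $X$ is a divergence symmetry.
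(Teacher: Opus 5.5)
Your proposal follows the paper's proof essentially step by step. It uses the same commutation identity $\on{pr}X(E(\mc L))=E(\on{pr}X(\mc L))-D_Q^\ast E(\mc L)$ evaluated on solutions, the no-polynomial-consequence hypothesis to make $\on{pr}X(V)$ constant, and evaluation at $y=0$ to identify that constant as $a_i\alpha_i$. For (iii) it uses the same fact that $\on{pr}X(T)$ has no constant term and so cannot cancel $a_i\alpha_i\neq 0$. (Strictly, each term of $\on{pr}X(T)$ contains at least one factor from $\varphi^c\cup\on{der}(y)$, not necessarily two factors of $\varphi^c$, but that is all you need.)

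The one place you go beyond the paper is your observation about $\varphi^c$-dependence. Since $\eta^i\in\N{R}[y]$, the polynomial $\partial_{y^j}\on{pr}X(V)$ may involve $\varphi^c$, whereas the hypothesis only concerns relations $p(\varphi)=0$. That observation is correct, and the paper's proof does not address it. The paper applies the hypothesis directly to $\partial_{y^i}\on{pr}X(V)$, i.e.\ it tacitly reads ``no polynomial consequences'' as excluding any nonzero polynomial relation among the undifferentiated fields. In every application in the paper, e.g.\ Corollary~\ref{C:charVarScalSyms}, one has $\eta^i\in\N{R}[\varphi]$, so the issue does not arise there.

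Neither of your proposed remedies closes this gap. Constant configurations with $\varphi^c\equiv 0$ solve the field equations only at critical points of $V$. They therefore give vanishing of $\partial_{y^j}\on{pr}X(V)$ only on an algebraic set (possibly empty, e.g.\ for tadpole potentials without stationary points), not identically. The hypothesis, as defined, only guarantees for each nonzero $p(\varphi)$ some solution on which $p\neq 0$, with no control over the values of $\varphi^c$ there. So it says nothing about the family of solutions with $\varphi^c\equiv 0$, and ``iterating over $\varphi^c$-monomials'' has nothing to iterate on. Rather than trying to derive this step, state the assumption you need explicitly: either $\eta^i$ for $i\le m$ depends only on $\varphi$, or the system admits no nonzero polynomial relation in the undifferentiated fields $y$.
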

\begin{proof}
The proof is given in Appendix~\ref{app:TechnicalProofs}.
\end{proof}

The following proposition is, to a large extent, a converse to 
Theorem \ref{T:prXannTannV}:
\begin{restatable}{proposition}{PotentialSymmetryConverse}\label{P:prXannTannVconv}
Let $\mc{L}=T-V$ be a Lagrangian, where $V\in \N{R}[\varphi]$ with $\varphi\subset y$, $T\in \N{R}\{y\}$, and let
 \[
X=\eta^i(\varphi) \partial_{y^i}, 
\]
 be a symmetry of $E(\mc{L})=0$, where $\eta^i \in \N{R}[\varphi]$ for all $i$. 
Denote $a_i=\eta^i(0)$ and let $\alpha_i \varphi_i$ be the (possibly vanishing) linear terms of $V$.
Moreover, let all terms in $T$ be either at least quadratic in elements of the set $\varphi^c$ or at least linear in the elements of $\on{der}(y)$. 
Then the following hold:
\begin{enumerate}[label=(\roman*)]
 \item  If $X$ is a strict variational symmetry generator, then $\on{pr}X(T)=0$ and $\on{pr}X(V)=a_i\alpha_i=0$.
  \item  If $X$ is a divergence symmetry generator, then
	$\on{pr}X(T)$ is a total divergence, that is, $\on{pr}X(T)=d_\mu \beta^\mu$,
        while $\on{pr}X(V)=a_i\alpha_i \in \N{R}$. Here $d_\mu \beta^\mu$ does not contain a constant term, and $d_\mu \beta^\mu$ and $a_i\alpha_i$ cannot be simultaneously vanishing.
\end{enumerate}
\end{restatable}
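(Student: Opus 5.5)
The plan is to separate the variational condition \eqref{E:condVarSym} (with $\xi^\mu=0$) into a derivative-free part depending only on $\varphi$ and a remainder. The structural fact I will use first is this: since $\eta^i=\eta^i(\varphi)$ contains no derivatives and no $\varphi^c$, the operator $\on{pr}X$ maps each monomial of $T$ to terms with the same number of derivative factors $y^i_J$, $|J|\ge1$. Indeed $\on{pr}X$ replaces a factor $y^i_J$ by $D_J\eta^i$, which is a sum of terms each containing at least one derivative. It also maps each monomial to terms with at least as many derivative factors plus $\varphi^c$ factors as before, minus one, when a $\varphi^c$ factor is replaced by $\eta^i(\varphi)$. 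Hence every term of $\on{pr}X(T)$ is at least linear in $\on{der}(y)$ or at least linear in $\varphi^c$. Moreover, $\on{pr}X(V)=\eta^i\partial_{\varphi_i}V$ is a polynomial in $\varphi$ alone, because $\partial V/\partial y^i=0$ for $i>m$. The consequence is that, in the polynomial ring $\N{R}\{y\}$, the pieces $\on{pr}X(T)$ and $\on{pr}X(V)$ have disjoint monomial supports except for constants, and $\on{pr}X(T)$ has no constant term.

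For (i), suppose $\on{pr}X(\mc L)=\on{pr}X(T)-\on{pr}X(V)=0$ identically. By the support separation, $\on{pr}X(T)=0$ and $\on{pr}X(V)=0$. Evaluating $\on{pr}X(V)=\eta^i\partial_iV$ at $\varphi=0$ gives $a_i\alpha_i$, so $a_i\alpha_i=0$.

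For (ii), suppose $\on{pr}X(\mc L)=d_\mu\beta^\mu$ and apply the Euler operator; by \eqref{E:EtotDiv}, $E(\on{pr}X(T))=E(\on{pr}X(V))$. The step I expect to be the main obstacle is showing that $P:=\on{pr}X(V)\in\N{R}[\varphi]$ is constant. I would argue as follows. $E_i(P)=\partial_{\varphi_i}P$ for $i\le m$, and $0$ otherwise, which is a derivative-free polynomial in $\varphi$ only. On the other side, $E_i(\on{pr}X(T))$ must then also equal this $\varphi$-only polynomial. I would try to show directly that the part of $\on{pr}X(V)$ of degree $\ge1$ is itself a null Lagrangian. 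A function of $\varphi$ alone is a total divergence only if it is constant, since $E_i(P)=\partial_iP=0$ for all $i$ forces this. So I need to split $d_\mu\beta^\mu$ into a derivative-free $\varphi$-only part and a remainder, and show that the remainder, which equals $\on{pr}X(T)$, is itself a divergence.

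To do this, I would set $\varphi^c=0$ and all derivatives to zero in the identity $\on{pr}X(T)-P=d_\mu\beta^\mu$. The left side reduces to $-P$ plus at most a constant from $\on{pr}X(T)$, and that constant is zero. The right side needs care, since $d_\mu\beta^\mu$ evaluated on jets with zero derivatives is $\partial_\mu\beta^\mu(x,\varphi)$ at those points. Because everything is $x$-independent, I would first replace $\beta$ by an $x$-independent polynomial representative. A total divergence with no explicit $x$-dependence has every term containing a derivative, up to a constant that arises only from explicit $x$-dependence such as $\beta^\mu=c\,x^\mu/d$. This gives $P=\text{const}=a_i\alpha_i$.

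Once $P$ is constant, $\on{pr}X(T)=d_\mu\beta^\mu+P$. Here $P$ is itself a divergence ($P=d_\mu(Px^\mu/d)$), so $\on{pr}X(T)$ is a total divergence with no constant term. Finally, if both $\on{pr}X(T)$ and $a_i\alpha_i$ vanished, $X$ would satisfy \eqref{E:condStrictVarSym} and hence be strict variational, contradicting the assumption that it is a divergence symmetry.
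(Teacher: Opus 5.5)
Your part (i), the support-separation observation, and the closing steps of (ii) agree with the paper. Those closing steps are writing the constant $P=a_i\alpha_i$ as $d_\mu(Px^\mu/d)$, the absence of a constant term in $\on{pr}X(T)$, and the contradiction if both vanish. The gap is at the crux of (ii): the proof that $P=\on{pr}X(V)$ is constant. You start the Euler-operator route but abandon it. Instead you rest the step on the claim that a total divergence with no explicit $x$-dependence has a constant derivative-free part. Your only justification is ``replacing $\beta$ by an $x$-independent polynomial representative''. The $\beta^\mu$ in the definition of a divergence symmetry may depend explicitly on $x$. Producing an $x$-independent representative (up to a constant) is a homotopy-operator result that you neither prove nor cite, and it carries the whole content of the step. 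Evaluating $\on{pr}X(T)-P=d_\mu\beta^\mu$ at zero derivatives cannot by itself rule out a non-constant $P$. For $\beta^\mu=-P(\varphi)\,x^\mu/d$ one has $d_\mu\beta^\mu=-P-\tfrac{1}{d}x^\mu\,\partial_{\varphi_j}P\,\partial_\mu\varphi_j$, which equals $-P$ at zero derivatives. Only the derivative-carrying, $x$-dependent remainder excludes this.

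The missing step is short, and you had already written the identity it needs. In $E(\on{pr}X(T))=E(P)$, take a component $i\le m$. The term $\partial_{\varphi_i}\on{pr}X(T)$ stays in the ideal generated by $\varphi^c\cup\on{der}(y)$. Every other term of $E_i(\on{pr}X(T))$ is a total derivative $D_J$, $|J|\ge 1$, of an $x$-independent differential polynomial, so it carries a derivative factor. Hence $E_i(\on{pr}X(T))$ has no part depending on $\varphi$ alone, while $E_i(P)=\partial_{\varphi_i}P$ is entirely such a part. Therefore $\partial_{\varphi_i}P=0$ for all $i\le m$, so $P$ is constant. This is exactly the paper's argument. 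Its advantage is that $E$ annihilates $d_\mu\beta^\mu$ whatever the $x$-dependence of $\beta$, so no choice of representative is needed. Your lemma also follows this way, by setting all derivatives to zero in $E_i(f)=0$. Restricting to $i\le m$ is essential: for $i>m$, $\partial_{y^i}$ can strip the only $\varphi^c$ factor from a term of $\on{pr}X(T)$. A minor point: $\on{pr}X$ need not preserve the number of derivative factors, since $D_J\eta^i$ may contain several. It only preserves the presence of at least one, which is all you use.
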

\begin{proof}
The proof is given in Appendix~\ref{app:TechnicalProofs}.
\end{proof}

We now state and prove a corollary that characterizes the three types of scalar Lie point symmetries, again under assumptions tailored to models of the type NHDM+KS. However, the result applies more generally to evolutionary symmetries beyond the scalar case.
\begin{corollary}
\label{C:charVarScalSyms}
Let $\mc{L}=T-V$ be a Lagrangian, where $V\in \N{R}[\varphi]$ with $\varphi\subset y$, $T\in \N{R}\{y\}$, and let
 \[
X=\eta^i(\varphi) \partial_{y^i}, 
\]
 be a symmetry of $E(\mc{L})=0$, where $\eta^i \in \N{R}[\varphi]$ for all $i$. 
Denote $a_i=\eta^i(0)$ and let $\alpha_i \varphi_i$ be the (possibly vanishing) linear terms of $V$.
Assume moreover that $T$ is such that every term is either at least quadratic in elements of the set $\varphi^c$ or at least linear in elements of $\on{der}(y)$, and that the system $E(\mc{L})=0$ has no polynomial consequences.
Then the following hold:
\begin{enumerate}[label=(\roman*)]
 \item  $X$ is a strict variational symmetry generator if and only if $\,\on{pr}X(T)=0$ and $a_i\alpha_i=0$.
\item $X$ is a divergence symmetry generator if and only if
$\,\on{pr}X(T)=d_\mu \beta^\mu$ for some $\beta^\mu$, and at least one of the conditions
$\on{pr}X(T)=0$ and $a_i\alpha_i=0$ fails, i.e.\ $(d_\mu \beta^\mu\neq 0)\ \text{or}\ (a_i\alpha_i\neq 0)$.
	\item	If $X$ is a variational symmetry generator, then
	      $\on{pr}X(V)=a_i\alpha_i$.
		\item $X$ is a non-variational symmetry generator if and only if \\ $E (\on{pr}X(T))$ $\ne 0$.		
\end{enumerate}
\end{corollary}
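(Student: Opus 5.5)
The plan is to combine Theorem~\ref{T:prXannTannV} and Proposition~\ref{P:prXannTannVconv}. Both apply here: $X=\eta^i(\varphi)\partial_{y^i}$ with $\eta^i\in\N{R}[\varphi]\subset\N{R}[y]$ is of the form required by the theorem, and the corollary assumes the structural condition on $T$ and the absence of polynomial consequences. The one extra ingredient is the definitional fact from Section~\ref{sec:SymmetriesOfTheActionMcS}. Since $\xi=0$, $X$ is variational if and only if $\on{pr}X(\mc L)=d_\mu\beta^\mu$, and strict variational if and only if $\on{pr}X(\mc L)=0$. Moreover, a differential function is a total divergence if and only if its Euler operator vanishes, by \eqref{E:EtotDiv}.

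For (i), the forward direction is Proposition~\ref{P:prXannTannVconv}(i), and the converse is Theorem~\ref{T:prXannTannV}(ii). For (ii), the forward direction is Proposition~\ref{P:prXannTannVconv}(ii). That result gives $\on{pr}X(T)=d_\mu\beta^\mu$ and $\on{pr}X(V)=a_i\alpha_i$, with the two not both zero, which is precisely the failure of at least one of $\on{pr}X(T)=0$ and $a_i\alpha_i=0$. The converse is Theorem~\ref{T:prXannTannV}(iii). Its two alternatives, a non-vanishing divergence, or a vanishing divergence with $a_i\alpha_i\neq0$, together say exactly ``$\on{pr}X(T)$ is a divergence and $(d_\mu\beta^\mu\neq0$ or $a_i\alpha_i\neq0)$''. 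For (iii), a variational generator is either strict or divergence. In either case Proposition~\ref{P:prXannTannVconv} gives $\on{pr}X(V)=a_i\alpha_i$; in the strict case this constant is $0$.

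For (iv), I will show that $X$ is variational if and only if $E(\on{pr}X(T))=0$; (iv) is the contrapositive, since $X$ is assumed to be a symmetry of $E(\mc L)=0$. If $X$ is variational, then by (i)--(ii) $\on{pr}X(T)$ is a total divergence (possibly zero), so $E(\on{pr}X(T))=0$ by \eqref{E:EtotDiv}. Conversely, if $E(\on{pr}X(T))=0$, then \eqref{E:EtotDiv} gives $\on{pr}X(T)=d_\mu\beta^\mu$ for some local $\beta^\mu$. Theorem~\ref{T:prXannTannV}(i) then yields $\on{pr}X(V)=a_i\alpha_i\in\N{R}$. Hence $\on{pr}X(\mc L)=d_\mu\beta^\mu-a_i\alpha_i$. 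A constant $c$ is itself a total divergence, e.g.\ $c=d_0(c\,x^0)$, so $\on{pr}X(\mc L)$ is a total divergence and $X$ is variational by \eqref{E:condVarSym}.

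The main obstacle is this last converse in (iv). It needs $\on{pr}X(V)$ to be a divergence, which is exactly where Theorem~\ref{T:prXannTannV}(i) and the hypothesis of no polynomial consequences enter. Without that hypothesis, $\on{pr}X(V)$ could be a non-constant polynomial in $\varphi$, which is not a divergence. One point needs checking: a constant is a divergence only through an explicitly $x$-dependent $\beta^\mu$. The local functions in \eqref{E:condVarSym} allow this, since the jet space includes $x$. This matches (ii) and (iii), where a nonzero $a_i\alpha_i$ is counted as a divergence symmetry.
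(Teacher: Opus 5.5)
Your proposal is correct and follows the paper's proof. The rightward implications of (i) and (ii) come from Proposition~\ref{P:prXannTannVconv}, the leftward ones from Theorem~\ref{T:prXannTannV}, (iii) from the Proposition, and (iv) from the divergence criterion \eqref{E:EtotDiv}. The only minor difference is in the converse of (iv): you go directly through Theorem~\ref{T:prXannTannV}(i) and the fact that a constant is a total divergence, whereas the paper simply reads off from (i) and (ii) that $X$ is variational exactly when $\on{pr}X(T)$ is a total divergence.
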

\begin{proof}
(i) and (ii): The rightward implications follow from Proposition \ref{P:prXannTannVconv}, while the leftward implications follow from Theorem \ref{T:prXannTannV}. (iii) follows from Proposition \ref{P:prXannTannVconv}. For (iv), we by (i) and (ii) know that
the symmetry $X$ has to be non-variational if and only if
 $\on{pr}X(T)$ cannot be written as a total divergence, which is equivalent to $E(\on{pr}X(T))\ne 0$, cf.~\eqref{E:EtotDiv}.
\end{proof}
In Corollary \ref{C:charVarScalSyms} we did not include $\on{pr}X(V)=a_i\alpha_i$ on the right-hand side of the equivalences in (i) and (ii), since $\on{pr}X(V)=a_i\alpha_i$ would then appear as a condition to be proved every time we apply the leftward implications of (i) and (ii). This is not the case, as $\on{pr}X(V)=a_i\alpha_i$ follows from both sides of the two equivalences (i) and (ii); cf. Proposition \ref{P:prXannTannVconv} and Theorem \ref{T:prXannTannV}.

\subsubsection{Affine reparametrizations}
\label{sec:AffineReparametrizations}
In the next proposition, we show that affine field reparametrizations preserve the symmetry algebra and, moreover, preserve the symmetry type: strict variational (SVS), divergence (DS), and non-variational symmetries (NVS).
\begin{restatable}[Affine reparametrizations]{proposition}{AffineReparametrizationProp}\label{P:AffRepar}
Let
\begin{align}\label{E:GenRepar}
  \tilde y = A y + \gamma
\end{align}
be an affine reparametrization of the fields, where $A$ is a real, invertible $q\times q$ matrix and $\gamma\in\N{R}^q$.
Define the reparametrized Lagrangian $\tilde{\mc L}$ by
\begin{align}\label{E:Ly=Ltyt}
  \tilde{\mc L}\big(x,\tilde y,\tilde y^{(1)},\ldots,\tilde y^{(n)}\big)
  =\mc L\big(x,y,y^{(1)},\ldots,y^{(n)}\big),
\end{align}
where $y=A^{-1}(\tilde y-\gamma)$.
Then
\begin{align}
  X=\xi^\mu\partial_\mu+\eta^i\partial_{y^i}
\end{align}
is a point symmetry of $E(\mc L)=0$ if and only if
\begin{align}\label{E:Xtieti}
  \tilde X=\xi^\mu\partial_\mu+\tilde\eta^i\partial_{\tilde y^i},
  \qquad \tilde\eta^i = A_{ij}\eta^j,
\end{align}
is a point symmetry of $\tilde{E}(\tilde{\mc L})=0$. Moreover, $X$ and $\tilde X$ are of the same symmetry type (SVS/DS/NVS), and any symmetry algebra
$\La g=\spa(X_1,\ldots,X_n)$ is mapped to an isomorphic algebra
$\tilde{\La g}=\spa(\tilde X_1,\ldots,\tilde X_n)$, where $\theta(X)=\tilde X$ defines a Lie algebra isomorphism.
\end{restatable}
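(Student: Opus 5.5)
The plan is to treat the affine map $\Psi:(x,y)\mapsto(x,\tilde y)=(x,Ay+\gamma)$ as a point diffeomorphism of the base space and to show that every relevant object is carried covariantly by $\Psi$: solutions, prolonged vector fields, Euler operators, and total divergences. First, I would note that $\tilde X$ in \eqref{E:Xtieti} is precisely the push-forward $\Psi_*X$. Since $\Psi$ fixes $x$, the $\partial_\mu$ components are unchanged, and $\partial_{y^j}=A_{ij}\partial_{\tilde y^i}$, so $\eta^j\partial_{y^j}=A_{ij}\eta^j\partial_{\tilde y^i}$ with $\eta^j$ expressed through $y=A^{-1}(\tilde y-\gamma)$. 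Because push-forward by a diffeomorphism preserves Lie brackets, $\theta(X)=\Psi_*X$ is a linear bracket-preserving bijection, with inverse $\Psi^{-1}_*$. This immediately gives the isomorphism $\La g\cong\tilde{\La g}$ once the symmetry statements are established.

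Next, I would prolong $\Psi$. Since $A$ and $\gamma$ are constant, $\tilde y^i_J=A_{ij}y^j_J$. Prolongation commutes with push-forward, so $\on{pr}\tilde X=(\on{pr}\Psi)_*\on{pr}X$. Similarly, total derivatives agree: $\tilde d_\mu=d_\mu$ as operators on differential functions written in either coordinate system. For the Euler operator, the chain rule applied to $\partial/\partial y^j_J=A_{ij}\,\partial/\partial\tilde y^i_J$ gives $E_j(\mc L)=A_{ij}\tilde E_i(\tilde{\mc L})$. Since $A$ is invertible, this means $E(\mc L)=0\iff\tilde E(\tilde{\mc L})=0$, so $\Psi$ maps solutions to solutions. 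Then $X$ satisfies the linearized condition \eqref{E:prDelta=0PDEsymCond} exactly when $\tilde X$ does. One can see this either directly, via $\on{pr}\tilde X(\tilde E_i)=(A^{-T})_{ij}\on{pr}X(E_j)$ up to the identification (as $A$ is constant, no extra terms arise), or by conjugating the flows: $\exp(\epsilon\tilde X)=\Psi\circ\exp(\epsilon X)\circ\Psi^{-1}$.

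For the symmetry type, I would use \eqref{E:Ly=Ltyt} together with $\on{pr}\tilde X(\tilde{\mc L})=\on{pr}X(\mc L)$, both written as the same function after substitution, and $d_\mu\xi^\mu=\tilde d_\mu\xi^\mu$. The left-hand side of \eqref{E:condVarSym} is then literally the same differential function $f$ in both coordinate systems. Hence $f=0$ in one system iff in the other (SVS). Likewise, $f=d_\mu\beta^\mu$ iff $f=\tilde d_\mu\tilde\beta^\mu$ with $\tilde\beta^\mu=\beta^\mu\circ\on{pr}\Psi^{-1}$, which is local (DS). Equivalently, by \eqref{E:EtotDiv}, $E(f)=0\iff\tilde E(f)=0$ by the transformation rule above. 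NVS is the complement within the Euler--Lagrange symmetries, so it is preserved too.

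The main obstacle, modest as it is, is making rigorous the identification ``same function in both coordinates'', and in particular verifying that prolongation and the Euler operator transform covariantly. For a general point transformation, this requires Olver's change-of-variables formula for $E$ (Theorem 4.8 type), which involves a Jacobian factor. Here I would exploit that $\Psi$ does not touch $x$ and is affine with constant coefficients. The Jacobian in $x$ is therefore $1$, and all chain-rule terms involving derivatives of $A$ vanish, so the computation reduces to a direct linear-algebra check.
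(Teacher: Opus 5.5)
Your proposal is correct and follows essentially the same route as the paper. Both arguments use the chain rule on jet coordinates ($\tilde y^i_J=A_{ij}y^j_J$, with $\tilde d_\mu=d_\mu$), identify $\on{pr}\tilde X$ with $\on{pr}X$, use $\tilde E(\tilde{\mc L})=A^{-T}E(\mc L)$ together with the invertibility of $A$, and transfer the symmetry type via the invariance of $\on{pr}X(\mc L)$ and of total divergences. The only difference is cosmetic: you describe $\tilde X$ actively as the push-forward $\Psi_*X$, which preserves brackets, whereas the paper takes the passive view that $\tilde X=X$ is the same vector field in new coordinates, so $\theta$ is literally the identity.
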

\begin{proof}
The proof is given in Appendix~\ref{app:TechnicalProofs}.
\end{proof}

For practical reasons, the matrix $A$ is usually chosen to be orthogonal, so that the kinetic terms retain their canonical form.
This is also the case for unitary reparametrizations of the doublets in NHDMs: A unitary transformation acting on the doublets induces an orthogonal transformation of their real component fields \cite{Olaussen:2010aq}.
For Higgs doublets, constant shifts spoil the canonical form of the kinetic/gauge sector and are therefore usually avoided, except when expanding about a vacuum expectation value (VEV) after spontaneous symmetry breaking (SSB).

Consider a set of fields $\phi\subset y$ with identical gauge quantum numbers. One may then perform field reparametrizations
\begin{align}
  \tilde\phi = A\phi + \gamma,
\end{align}
where $A$ is taken (when possible) to preserve the kinetic terms, and $\phi^c = y\setminus\phi$ denotes the complementary set of fields, cf.~\eqref{E:varphiDef}.
If $A$ preserves the kinetic form (e.g.\ $A\in \Lg{O}(n)$ for real scalars), then the kinetic sector is unchanged in the new basis,
\begin{align}
  T[\phi,\phi^c] = T[\tilde\phi,\phi^c].
\end{align}
The potential is typically a polynomial in $\phi$, and can be rewritten as a function of the new fields,
\begin{align}
  V(\phi) \equiv \tilde V(\tilde\phi),
\end{align}
so that it describes the same theory although its explicit functional form generally changes, $\tilde V\neq V$.
Such models exhibit what is commonly referred to as \textit{reparametrization freedom}. Models with extended scalar sectors provide typical examples.
We will exploit this freedom to choose convenient bases and to avoid double counting of equivalent symmetry realizations.


To decide when two symmetries are equivalent under a field reparametrization, we must express the transformed coefficients $\tilde\eta^i$ of Proposition~\ref{P:AffRepar} in terms of the transformed field variables $\tilde y$. All scalar Lie point symmetries of the models considered in subsequent sections are of the form
\begin{align}\label{E:Xeqepy}
  X=\eta^i(y)\,\partial_{y^i},
\end{align}
and will be affine, i.e.
\begin{align}\label{E:etaAff}
  \eta(y)=By+a,
\end{align}
for some constant matrix $B$ and constant vector $a$. Under the affine reparametrization
\begin{align}\label{E:GenRepar2}
  \tilde y = Ay+\gamma,
\end{align}
the coefficients transform as $\tilde\eta=A\eta$, and hence
\begin{align}
  \tilde\eta
  =A(By+a)=ABy+Aa.
\end{align}
Substituting $y=A^{-1}(\tilde y-\gamma)$ from \eqref{E:GenRepar2} yields
\begin{align}\label{E:etaTyT}
  \tilde\eta(\tilde y)
  =ABA^{-1}\tilde y - ABA^{-1}\gamma + Aa.
\end{align}

Given two affine scalar symmetries $X_1$ and $X_2$ with coefficients  $(B_1,a_1)$ and $(B_2,a_2)$, respectively, equation \eqref{E:etaTyT} shows that $X_1$ and $X_2$ are equivalent under an affine reparametrization \eqref{E:GenRepar2} if there exist $A$ and $\gamma$ such that
\begin{align}\label{E:affEquiv}
  B_1 = AB_2A^{-1}, \qquad
  a_1 = -AB_2A^{-1}\gamma + Aa_2.
\end{align}
We call a reparametrization induced by an orthogonal matrix $A=O$ an \emph{orthogonal reparametrization}. For singlets we will only identify symmetries as equivalent when they are related by an orthogonal reparametrization, since such transformations preserve the canonical form of the kinetic terms.\footnote{$\Lg{U}(N)$ reparametrizations of $N$ Higgs doublets correspond to orthogonal transformations on the real component fields, but orthogonal transformations beyond the realification of $\Lg{U}(N)$ do not preserve the canonical kinetic form of NHDMs \cite{Olaussen:2010aq}.}

Moreover, proportional characteristics $\eta$ and $k\eta$ with $k\in\N{R}\setminus\{0\}$ generate the same one-parameter symmetry group up to a rescaling of the group parameter. Hence, two affine scalar symmetries are \emph{orthogonally equivalent} if and only if there exist $k\in\N{R}\setminus\{0\}$, an orthogonal matrix $O$, and $\gamma$ such that
\begin{align}\label{E:orthEquivSym}
  kB_1 = OB_2O^{T},\qquad
  ka_1 = -OB_2O^{T}\gamma + Oa_2.
\end{align}

Finally, two point symmetry algebras $\La g$ and $\La h$ of $\Delta=0$ will be called orthogonally equivalent or just \emph{equivalent}, denoted 
\begin{align}\label{E:OrthEquiv}
	\La g\eqsim_O \La h,
\end{align}
if one can be mapped to the other by an orthogonal reparametrization, so that they describe the same symmetry realization in different field bases. Concretely, let $\La h=\spa(X_1,\ldots,X_n)$ and define
\begin{align}
  \tilde{\La h}=\spa(\tilde X_1,\ldots,\tilde X_n),
\end{align}
where each $\tilde X_i$ is obtained from $X_i$ by the reparametrization \eqref{E:GenRepar2} with $A=O$ (so $\tilde{\La h}$ depends on the choice of $O$ and $\gamma$), and is written in the transformed variables,
\begin{align}
  \tilde X_i(\tilde y)=\tilde\eta_i^{j}(\tilde y)\,\partial_{\tilde y^j}.
\end{align}
We then say that $\La g$ and $\La h$ are equivalent if there exists an orthogonal matrix $O$ and a shift vector $\gamma$ such that
\begin{align}\label{E:orthEqAlgebras}
  \La g=\tilde{\La h}\big|_{\tilde y\to y}\,.
\end{align}

\section{SM+S}
\label{sec:SMS}
In this section we determine all Lie point symmetries of the Standard Model augmented by a real scalar gauge singlet. Since the components of the SM Higgs doublet and the singlet field transform differently under $\Lg{SU}(2)_L$, no non-trivial mixing of these scalars is expected, and the scalar Lie point symmetries are therefore comparatively constrained. Moreover, aside from constant singlet shifts and rephasings of the Higgs doublet, the model admits no further reparametrization freedom in the scalar sector, so the symmetry analysis is less involved than e.g.~in the 2HDM \cite{Solberg:2025ybf}.

\subsection{Lagrangian}
\label{sec:Lagrangian}
The (scalar) SM+S Lagrangian can be written as
\begin{align}\label{E:SMSlag}
	\mc{L}_\text{SMS}=-\tfrac{1}{4}W^a_{\mu\nu}W^{a\mu\nu}
  -\tfrac{1}{4}B_{\mu\nu}B^{\mu\nu}
  +  (D_\mu\Phi)^\dagger (D^\mu\Phi)
 +\tfrac{1}{2}\partial_\mu s \partial^\mu s - V(\Phi,s),
\end{align}
where the covariant derivative and the gauge field strength tensors are given by
\begin{align}
D_\mu \Phi
&= \Big(\partial_\mu + i g \frac{\sigma^a}{2} W^a_\mu + i g' \frac{1}{2} B_\mu\Big)\Phi,
\label{E:covarDerDef} \\
W^a_{\mu\nu}
&= \partial_\mu W^a_\nu - \partial_\nu W^a_\mu + g \epsilon^{abc}W^b_\mu W^c_\nu,
\label{E:Wfst}\\
B_{\mu\nu}
&= \partial_\mu B_\nu - \partial_\nu B_\mu,
\label{E:Bfst}
\end{align}
and $\sigma^a$ are the Pauli matrices ($a=1,2,3$).

The most general, renormalizable SM+S potential can then be defined as
\begin{align}\label{E:SMSpot}
V(\Phi,s)&=	-\mu^2 \Phi^\dagger \Phi + 
\lambda (\Phi^\dagger \Phi)^2 + \alpha s -\mu_s s^2+ \kappa s^3 + \lambda_s s^4 \nn \\
&+ \kappa_{s\phi} \Phi^\dagger \Phi s + \lambda_{s\phi} \Phi^\dagger \Phi s^2. 
 \end{align}
Here, $s$ is the real, scalar gauge singlet, while $\Phi$ is the SM Higgs doublet, 
 \begin{align}\label{E:HiggsDoubletSM}
	 \Phi = \frac{1}{\sqrt{2}}
\begin{pmatrix}
\phi_{1} + i \phi_{2} \\
\phi_{3} + i \phi_{4}
\end{pmatrix}, 
\end{align}
where the $\phi$'s are real, scalar fields.

\subsubsection{Linear terms}
\label{sec:LinearTermsSMS}
The parameter $\alpha$ can in many cases be eliminated by a constant field shift, 
\begin{align}\label{E:shiftsSMS}
	s \;\to\; s - \beta ,
\end{align}
with real $\beta$. 
 Under \eqref{E:shiftsSMS}, the tadpole coefficient transforms as
\begin{align}
	\alpha \;\to\; \tilde{\alpha}=\alpha + 2\mu_s \beta + 3\kappa \beta^2 - 4\lambda_s \beta^3 .
\end{align}
Yet, eliminating $\alpha$ is not always possible; for instance, if $\mu_s=\kappa=\lambda_s=0$, or if $\lambda_s=0$ with $3\kappa\alpha>\mu_s^2$, $\alpha$ cannot be absorbed and the potential remains linear in $s$.  We therefore retain the linear term $\alpha s$ in \eqref{E:SMSpot}.

One might object that such a tree-level potential is not ``physical'' because it lacks a global minimum.
However, a tree-level potential \(V\) does not, in general, have to possess a
global minimum: It may be unbounded from below while still admitting a local,
metastable minimum, or it may even lack any stationary point, whereas the
corresponding full effective potential \(V_{\text{eff}}\) may develop a stable
minimum
\cite{ColemanWeinberg1973,GildenerWeinberg1976,CamargoMolina2013Vevacious,Staub2018Reopen,KraussOpferkuchStaub2018UVLandscape,IvanovVazao2020YetAnotherLesson}.
More concretely, the one-loop correction to the renormalized effective
potential may be written \cite{Martin:2001vx}
\begin{align}
 \frac{1}{16\pi^2}V^{(1)}
 =
 \frac{1}{64\pi^2}
 \sum_n
 (-1)^{2s_n}(2s_n+1)(m_n^2)^2
 \left(
 \log\frac{m_n^2}{Q^2}-k_n
 \right),
 \label{eq:one-loop-effective-potential}
\end{align}
where the sum is over field-dependent squared-mass eigenvalues, with
multiplicities not accounted for by the spin factor \((2s_n+1)\) included in
the sum, \(s_n\) is the spin, \(Q\) is the renormalization scale, and \(k_n\)
are renormalization-scheme dependent constants.

With a non-vanishing quartic portal term
\(\lambda_{s\phi}\Phi^\dagger\Phi\,s^2\), the Higgs fluctuation obtains an
\(s_0\)-dependent contribution to its field-dependent squared mass when the
scalars are expanded about a spacetime-independent background,
\begin{align}
 h(x)=h_0+\eta_h(x), \qquad s(x)=s_0+\eta_s(x).
 \label{eq:scalar-background-expansion}
\end{align}
For \(\Phi=(0,h/\sqrt2)^T\), the portal term becomes
\begin{align}
 \lambda_{s\phi}\Phi^\dagger\Phi\,s^2
 =
 \frac{\lambda_{s\phi}}{2}h^2s^2
 =
 \frac{\lambda_{s\phi}}{2}(h_0+\eta_h)^2(s_0+\eta_s)^2 .
 \label{eq:quartic-portal-background}
\end{align}
Its quadratic part in the \(h\)- and \(s\)-fluctuations may be written as
\begin{align}
 V_{\rm portal}^{\rm quad}
 =
 \frac12
 \begin{pmatrix} \eta_h & \eta_s \end{pmatrix}
 \Delta \mc{M}_{\rm portal}^2
 \begin{pmatrix} \eta_h \\ \eta_s \end{pmatrix},
 \qquad
 \Delta \mc{M}_{\rm portal}^2
 =
 \lambda_{s\phi}
 \begin{pmatrix}
 s_0^2 & 2h_0s_0 \\
 2h_0s_0 & h_0^2
 \end{pmatrix}.
 \label{eq:portal-field-dependent-mass-squared-matrix}
\end{align}
For a general background \((h_0,s_0)\), this portal contribution is part of
the full field-dependent \(h\)-\(s\) squared-mass matrix, whose two eigenvalues
are the corresponding scalar entries in
\eqref{eq:one-loop-effective-potential}. Along the singlet background
direction \(h_0=0\), however, the \(h\)- and \(s\)-fluctuations do not mix, and
the Higgs-doublet squared-mass eigenvalue is simply
\begin{align}
 m_h^2(0,s_0)
 =
 -\mu^2+\kappa_{s\phi}s_0+\lambda_{s\phi}s_0^2 .
 \label{eq:field-dependent-higgs-mass}
\end{align}
Analogous portal-induced
field-dependent masses are displayed explicitly, for example, in
\cite{Chao:2014ina}.
The last term in \eqref{eq:field-dependent-higgs-mass} is the contribution
from the quartic portal term. Hence, for large \(|s_0|\),
\(m_h^2(0,s_0)\sim \lambda_{s\phi}s_0^2 \).
  Inserting \eqref{eq:field-dependent-higgs-mass} into
\eqref{eq:one-loop-effective-potential} shows that terms scaling as
\begin{equation}
	s_0^4\log(s_0^2)
\end{equation}
are generated radiatively. Such terms may modify the large-field behaviour of the effective potential
when the pure tree-level singlet potential contains no \(s^4\) term, i.e.
when \(\lambda_s=0\).  

If fermions are included, the top quark contributes in directions with a
non-vanishing Higgs background. Its field-dependent squared mass is
\begin{align}
 m_t^2(h_0)=\frac12 y_t^2 h_0^2 ,
\end{align}
so that, at fixed renormalization scale \(Q\), its leading one-loop
logarithmic contribution behaves as
\begin{align}
 -12\left(m_t^2(h_0)\right)^2
 \log\frac{m_t^2(h_0)}{Q^2}
 \sim
 -3y_t^4 h_0^4\log h_0^2 ,
\end{align}
up to the common prefactor in \eqref{eq:one-loop-effective-potential}.
Along the Higgs direction \(s_0=0\), the portal term gives the singlet
fluctuation a field-dependent squared mass scaling as
\begin{align}
 m_s^2(h_0)\sim \lambda_{s\phi}h_0^2 ,
\end{align}
and hence a scalar one-loop contribution scaling as
\begin{align}
 \lambda_{s\phi}^2 h_0^4\log h_0^2 .
\end{align}
Hence, if only these two contributions are compared, their contribution to the
coefficient of \(h_0^4\log h_0^2\) is proportional to
\(\lambda_{s\phi}^2-3y_t^4\), and can be positive for perturbative values of
\(\lambda_{s\phi}\). This is only a partial one-loop comparison along the Higgs direction, not a
criterion for boundedness from below of the full effective potential.

Moreover, the absence of linear terms when expanding \(V_{\text{eff}}\) about
one of its stationary points (e.g.\ the true vacuum) does not justify
discarding them in a symmetry analysis of the tree-level potential~\(V\),
since, as mentioned above, \(V\) need not possess a stationary point even when
\(V_{\text{eff}}\) does. In any case, \(V\) itself must be expanded about a
stationary point for its linear terms to vanish. Such an expansion, if
possible, also alters the form of the gauge sector through the
covariant-derivative terms; see Section~\ref{sec:LinearTerms} for further
discussion.

Thus, since we a priori do not know which tree-level potentials \(V\) without
stationary points correspond to distinct symmetry algebras, nor which of them
would be excluded or allowed after a separate effective-potential analysis,
we retain the linear tadpole terms in order to make our scalar symmetry
classifications of the SM+S and SM+2S models (without fermions) as general as
possible. The variational symmetries found for the tree-level
potential \(V\), including divergence symmetries, are, in the absence of anomalies, inherited by the quantum
theory, and hence by \(V_{\text{eff}}\). The non-variational symmetries remain relevant for the corresponding classical
field theory, namely the theory defined by the tree-level action, since they
map solutions of the Euler--Lagrange equations to solutions. The completeness of the classification also has additional
symmetry-theoretic value: It provides a reference point for identifying which
types of symmetries may be lost if future analyses of more complicated
singlet models are restricted to tree-level potentials with a stationary
point.

\subsection{Solving the determining equations}
\label{sec:DeterminingEquationsSMS}
We now proceed by applying \texttt{SYM} \cite{dimas2005sym} to find the determining 
equations \eqref{E:prDelta=0PDEsymCond} of the Euler--Lagrange equations
\begin{align}\label{E:ELeqsSMS}
	E(\mc{L}_\text{SMS})=0
\end{align}
of \eqref{E:SMSlag}, cf.~\eqref{E:EulerOpComps}.
There are 21 fields (16 gauge fields and 5 scalar fields) present in this model, and hence the Euler--Lagrange equations \eqref{E:ELeqsSMS} form a system of 21 equations.

Restricting to point symmetries with no spacetime transformations and requiring the gauge fields to be left invariant, we set
\begin{align}
  \xi^\mu&=0,\quad \text{for all}\quad 0\leq \mu \leq 3, \nn \\
  \eta^i&=0,\quad \text{for all}\quad 6\leq i \leq 21.
\end{align}
Moreover, we restrict to symmetries with no explicit spacetime dependence, i.e.\ $\eta^i=\eta^i(\phi_1,\ldots,\phi_5)$.
Then the infinitesimal generator takes the form
\begin{align}
  X=\sum_{i=1}^5 \eta^i(\phi_1,\ldots,\phi_5)\,\partial_{\phi_i},
\end{align}
where $\phi_1,\ldots,\phi_4$ denote the real doublet components and 
\begin{align}\label{E:phi5eqs}
	\phi_5\equiv s,
\end{align}
 so that we are considering (purely) scalar symmetries.

The simplest of the determining equations now read
\begin{align}
	\partial_{\phi_i} \partial_{\phi_j}\eta^k&=0,\quad \text{for all} \quad 1\leq i,j,k \leq 5 \\
	\partial_{y^i} \eta^j &=0,\quad \text{for all}\quad i> 5 \wedge 1\leq j \leq 5, 
\end{align}
 where $y^i$ for $i>5$ denote the gauge field variables, while $y^i\equiv \phi_i$ for $i\le 5$.
This means that the non-zero $\eta$'s are affine in the scalar fields and independent of the gauge field variables $y^i$ for $i>5$. 
Thus,
\begin{align}\label{E:SMSansatz}
	\eta^i = a_i + b_{ij}\phi_j,\qquad 1\le i\le 5,
\end{align}
   with an implicit sum over $j$ from 1 to 5.

Now, define the following Lie algebra generators,
\begin{align}
  X_1 &= \partial_s, \label{E:X1SMS} \\
  X_2 &= s\,\partial_s, \label{E:X2SMS} \\
  X_3 &= \Big(-\frac{\alpha}{\mu^2_s}+2s\Big)\partial_s, \qquad \text{for}\;\: \mu^2_s\ne 0 \label{E:X3SMS}\\
  X_Y &= -\phi_2 \partial_{\phi_1}+\phi_1 \partial_{\phi_2}-\phi_4\partial_{\phi_3}+\phi_3\partial_{\phi_4}, \label{E:XYSMS}
\end{align}
where \(X_Y\) generates the \(\Lg{U}(1)_Y\) hypercharge rephasing of the SM Higgs doublet,
\(\Phi \mapsto e^{i\epsilon}\Phi\) (or equivalently \(\Phi \mapsto e^{i\epsilon Y}\Phi\), with hypercharge \(Y=\tfrac12\) for the Higgs doublet, after rescaling \(\epsilon\)).
Then, substituting \eqref{E:SMSansatz} into the determining equations and requiring that the coefficient of each distinct monomial in each equation vanish, we obtain five solutions and corresponding Lie algebras (one of which is equivalent to another under a constant shift of $s$ and is therefore redundant in the reduced classification).

In the first solution, all $a_i=b_{ij}=0$, except for
\begin{gather}
  a_5,\, b_{55} \in \N{R}, \nn \\
  b_{12} = -b_{43} = -b_{21} = b_{34},
\end{gather}
which means there are three free parameters, and hence the Lie algebra is 3-dimensional, namely
\begin{align}\label{E:SMSsym1}
  \La{a}(1)\oplus \La{u}(1)_Y= \on{span}(X_1,X_2,X_Y),
\end{align}
where $\La{a}(1)=\on{span}(X_1,X_2)$ is the 2-dimensional, non-abelian affine Lie algebra. The parameter conditions
for the symmetry \eqref{E:SMSsym1} are
\begin{align}\label{E:SMSsym1conds}
  \alpha=\mu_s^2=\kappa_{s\phi}=\kappa=\lambda_{s\phi}=\lambda_s=0,
\end{align}
which means the scalar singlet is a massless, free field.

For the remaining solutions, the Lie symmetry algebras realized, together with the corresponding parameter conditions, are
\begin{align}
  \La{sh}\oplus \La{u}(1)_Y
    &= \on{span}(X_1,X_Y), \quad
    &&\alpha\ne 0,\: \mu_s^2=\kappa_{s\phi}=\kappa=\lambda_{s\phi}=\lambda_s=0,
    \label{E:LieAlgX1FreeMassiveSMS} \\
  \La{sc}\oplus \La{u}(1)_Y
    &= \on{span}(X_2,X_Y), \quad
    &&\mu_s^2\ne 0,\: \alpha=\kappa_{s\phi}=\kappa=\lambda_{s\phi}=\lambda_s=0,
    \label{E:LieAlgFreeMassiveSMS} \\
  \La{sc}'\oplus \La{u}(1)_Y
    &= \on{span}(X_3,X_Y), \quad
    &&\alpha, \mu_s^2\ne 0,\: \kappa_{s\phi}=\kappa=\lambda_{s\phi}=\lambda_s=0,
    \label{E:LieAlgX3FreeMassiveSMS} \\
  \La{u}(1)_Y
    &= \on{span}(X_Y), \quad
    &&\text{for all other parameter values.}
    \label{E:LieAlgUYFreeMassiveSMS}
\end{align}
Here $\La{sh}$, $\La{sc}$, and $\La{sc}'$ are one-dimensional Lie algebras (each isomorphic to $\N{R}$), distinguished by their action on the singlet field ($s$-shifts versus $s$-scalings).
Moreover,
\begin{align}
	\La{sc}\eqsim_O \La{sc}'.
\end{align}
(in the sense of \eqref{E:OrthEquiv} and Section~\ref{sec:AffineReparametrizations}), as explained below.

For a potential satisfying \eqref{E:LieAlgX3FreeMassiveSMS}, we may perform a shift \eqref{E:shiftsSMS} with
$\beta=\alpha/(2\mu^2_s)$ and hence eliminate $\alpha$, without introducing new parameters (a new constant term in the potential is irrelevant as it does not survive in the Euler--Lagrange equations, and therefore does not affect the symmetries).\footnote{In the affine form $\eta(s)=Bs+a$ (cf.\ \eqref{E:etaAff}), the generators $X_3$ and $X_2$ correspond to $(B_1,a_1)=(2,-\alpha/\mu_s^2)$ and $(B_2,a_2)=(1,0)$, respectively. Equation~\eqref{E:orthEquivSym} is satisfied with $O=1$, $k=1/2$, and $\gamma=\alpha/(2\mu_s^2)$, showing that $X_3$ and $X_2$ are orthogonally equivalent (i.e.\ related by an orthogonal affine reparametrization and a rescaling of the group parameter).}
 Hence, $\La{sc}\eqsim_O\La{sc}'$, i.e.\ the symmetries \eqref{E:LieAlgFreeMassiveSMS} and \eqref{E:LieAlgX3FreeMassiveSMS} are equivalent through a shift of the field $s$.
 Generators depending on potential parameters, like $X_3$ in \eqref{E:X3SMS}, are typical signs of redundant symmetries due to the potential not being maximally reduced.

The remaining symmetry algebras $\La{sh}\oplus \La{u}(1)_Y$ and $\La{sc}\oplus \La{u}(1)_Y$ are inequivalent since \eqref{E:orthEquivSym} would require $kB_1=OB_2O^T$, i.e.\ $0=1$, which is impossible. 
Therefore, the realizable, inequivalent scalar Lie point symmetry algebras in the SM+S may be written
\begin{align}
  \La{a}(1)\oplus \La{u}(1)_Y,\; \La{sh}\oplus \La{u}(1)_Y,\;  \La{sc}\oplus \La{u}(1)_Y,\;  \La{u}(1)_Y.
\end{align}


\subsection{Nature of the symmetries}
\label{sec:NatureOfTheSymmetries}
We start by calculating the effect of a linear combination of the generators
\eqref{E:X1SMS}, \eqref{E:X2SMS}, and \eqref{E:XYSMS} on the kinetic sector $T$,
\begin{align}\label{E:pr1ofallsymonTSMS}
	\on{pr}(k_1X_1+k_2X_2+k_3X_Y)(T)= k_2 \partial_\mu s \partial^\mu s,
\end{align}
where $T=\mc{L}_\text{SMS}+V$, cf.~\eqref{E:SMSlag}.
By Theorem~\ref{T:prXannTannV} (or Corollary~\ref{C:charVarScalSyms}), we can conclude from
\eqref{E:pr1ofallsymonTSMS} that $X_1$ generates a strict variational symmetry when $\alpha=0$ and
a divergence symmetry when $\alpha\neq 0$, cf.~\eqref{E:LieAlgX1FreeMassiveSMS}.
Here $\alpha\equiv \alpha_5$ in the general linear term $\alpha_i\phi_i$ used e.g.~in Theorem~\ref{T:prXannTannV}, where $i=1,\ldots,5$
and $\phi_5=s$ for the SM+S cf.~\eqref{E:HiggsDoubletSM} and \eqref{E:phi5eqs}; moreover $a_5=\eta^5(0)=1$ for $X_1$, cf.~\eqref{E:X1SMS}.
On the other hand, $X_Y$ always generates a strict variational symmetry, since for $X_Y$ we have $a_i=0$ for all $i$,
cf.~\eqref{E:XYSMS} (and furthermore $\alpha_i=0$ for $i\leq 4$).

It should not be surprising that $X_1$ is strictly variational in the absence of a linear term, since it generates
a constant shift $s\to s+c$ with $c\in\N{R}$, which leaves the Lagrangian invariant for a free, massless scalar singlet $s$,
cf.~\eqref{E:SMSsym1} and \eqref{E:SMSsym1conds}.
Thus, for the symmetry algebra $\La{a}(1)\oplus \La{u}(1)_Y$, cf.~\eqref{E:SMSsym1}, $X_1$ is strictly variational, since
the conditions \eqref{E:SMSsym1conds} require $\alpha=0$, whereas for $\La{sh}\oplus \La{u}(1)_Y$,
$X_1$ is a divergence symmetry since $\alpha\neq 0$, cf.~\eqref{E:LieAlgX1FreeMassiveSMS}.

We continue by checking the nature of $X_2$: Since
\begin{align}
	E_5(\on{pr} X_2 (T))= E_5(\partial_\mu s \partial^\mu s)=-2\partial_\mu \partial^\mu s,
\end{align}
where $E_5$ is the Euler operator corresponding to the field $s\equiv \phi_5$,
we can apply Corollary~\ref{C:charVarScalSyms} (iv) and conclude that $X_2$ is a non-variational symmetry.

As a consistency check, we calculate $E(\on{pr}X_2(\mc{L}_2))$, where $\mc{L}_2$ equals $\mc{L}_\text{SMS}$ with parameters
given by \eqref{E:LieAlgFreeMassiveSMS}, and find a non-vanishing result $4\mu_s^2 s -2 \partial_\mu \partial^\mu s$ in the
$s$-component of the Euler operator. Moreover, we find
\begin{align}
	\on{pr} X_2 \big(E(\mc{L}_2)\big)\big|_{E(\mc{L}_2)=0}=0,
\end{align}
which confirms that $X_2$ is a symmetry of the Euler--Lagrange equations of the SM+S with a free massive singlet $s$.
Here $\on{pr} X_2$ is effectively the second prolongation of $X_2$.

We conclude that there are four possible Lie point symmetry algebras in the SM+S, given by \eqref{E:SMSsym1},
\eqref{E:LieAlgX1FreeMassiveSMS}, \eqref{E:LieAlgFreeMassiveSMS}, and \eqref{E:LieAlgUYFreeMassiveSMS}.
For the first algebra, only the subalgebra $\La{sh}\oplus \La{u}(1)_Y=\on{span}(X_1,X_Y)$ is variational (it is strictly variational),
so for the parameter case \eqref{E:SMSsym1conds} we have $\La{g}_\text{EL}=\La{a}(1)\oplus \La{u}(1)_Y$, while
$\La{g}_\text{var}=\La{g}_\text{svar}=\La{sh}\oplus \La{u}(1)_Y$.
The second algebra, $\La{sh}\oplus \La{u}(1)_Y$, is variational, where one component is a divergence symmetry and the other is a strict variational symmetry;
hence $\La{g}_\text{EL}=\La{g}_\text{var}=\La{sh}\oplus \La{u}(1)_Y$ while $\La{g}_\text{svar}=\La{u}(1)_Y$, for the parameter case
\eqref{E:LieAlgX1FreeMassiveSMS}.
The third algebra, $\La{sc}\oplus \La{u}(1)_Y$, is non-variational, although $\La{u}(1)_Y$ is variational, which means
$\La{g}_\text{EL}=\La{sc}\oplus \La{u}(1)_Y$ while $\La{g}_\text{var}=\La{g}_\text{svar}=\La{u}(1)_Y$ for the parameter case
\eqref{E:LieAlgFreeMassiveSMS}.
The same conclusions hold for the parameter case \eqref{E:LieAlgX3FreeMassiveSMS}, since the two equivalent symmetry algebras are of the same symmetry type,
cf.~Proposition~\ref{P:AffRepar}.
For other parameters, $\La{g}_\text{EL}=\La{g}_\text{var}=\La{g}_\text{svar}=\La{u}(1)_Y$, cf.~\eqref{E:LieAlgUYFreeMassiveSMS}.

Finally, if we consider $X_Y$ to be trivial since it is always present, the only non-trivial variational symmetry of the SM+S is the scalar shift symmetry
$X_1=\partial_s$. It is a strict variational symmetry if $\alpha=0$, and a divergence symmetry otherwise.

\subsection{Algorithm for determining SM+S symmetry algebras}
\label{sec:ASimpleAlgorithmForDecidingSMSSymmetry}
The parameter conditions associated with the realizable Euler--Lagrange Lie point symmetry algebras
$\La{g}_\text{EL}$ in the SM+S, cf.~\eqref{E:SMSsym1}--\eqref{E:LieAlgUYFreeMassiveSMS}, yield a direct inspection-based algorithm for identifying the symmetry content of a given SM+S potential:
\begin{enumerate}
\item Read off the potential parameters and identify which parameter constraints in \eqref{E:SMSsym1}--\eqref{E:LieAlgUYFreeMassiveSMS} are satisfied.
\item  Match the satisfied constraints to the corresponding maximal algebra $\La{g}_\text{EL}$.
\item Read off the variational subalgebras $\La{g}_\text{var}$ and $\La{g}_\text{svar}$ from the case-by-case conclusions given at the end of Section~\ref{sec:NatureOfTheSymmetries}.
\end{enumerate}
This yields an algorithmic identification of the Lie point symmetry algebra of any numerical SM+S potential, without explicit symmetry calculations.

\section{SM+2S}
\label{sec:SM2S}
In this section we determine all scalar Lie point symmetries of the Standard Model extended by two real scalar singlets.
As in the 2HDM, the scalar sector admits a non-trivial reparametrization freedom: Since the two singlets carry identical quantum numbers, one may perform an \(\Lg{O}(2)\) basis rotation in singlet space, in addition to possible constant singlet shifts.
We will exploit this reparametrization freedom to keep the potential in a suitably reduced form, which both ensures solvability of the determining equations and prevents repeated occurrences of the same symmetry in different singlet bases.

\subsection{A general Lagrangian for the SM+KS}
\label{sec:AGeneralLagrangianForSMKS}
We write the SM+2S Lagrangian in a form applicable to the SM+KS for arbitrary $K\in \N{N}$ real scalar gauge singlets, 
similar to the treatment in \cite{Robens_2020}. 
Let
\begin{align}\label{E:SMKSlag}
	\mc{L}_\text{SMKS}=&-\tfrac{1}{4}W^a_{\mu\nu}W^{a\mu\nu}
  -\tfrac{1}{4}B_{\mu\nu}B^{\mu\nu}
  +  (D_\mu\Phi)^\dagger (D^\mu\Phi)
 +\tfrac{1}{2}\partial_\mu s_i \partial^\mu s_i - V(\Phi,s),
\end{align}
with covariant derivative $D_\mu$ and gauge field strength tensors given by \eqref{E:covarDerDef}, \eqref{E:Wfst}, and \eqref{E:Bfst}, 
and with the SM Higgs doublet defined in \eqref{E:HiggsDoubletSM}. 
The most general renormalizable SM+KS potential can be written as
\begin{align}\label{E:SMKSpot}
V(\Phi,s)=&-\mu^2 \Phi^\dagger \Phi + 
\lambda (\Phi^\dagger \Phi)^2 \nn \\
&+ \alpha_i s_i - \sum_{i\leq j} m_{ij} s_i s_j
+ \sum_{i\leq j\leq k}\kappa_{ijk} s_i s_j s_k
+ \sum_{i\leq j\leq k\leq l} \lambda_{ijkl} s_i s_j s_k s_l \nn \\
&+ \kappa_{i} s_i \Phi^\dagger \Phi  
+ \sum_{i\leq j} \lambda_{ij} s_i s_j \Phi^\dagger \Phi, 
\end{align}
where all sums run from $1$ to $K$.
We also note that the singlet kinetic term
\begin{align}
T_s = \tfrac{1}{2}(\partial_\mu s)^T (\partial^\mu s),
\end{align}
is invariant under an orthogonal basis change $O\in \Lg{O}(K)$, i.e.\ a reparametrization $\tilde{s}=Os$ with $s=(s_1,\ldots,s_K)^T$, since
\begin{align}
	T_s\to \tfrac{1}{2}(\partial_\mu \tilde{s})^T (\partial^\mu \tilde{s})
	= \tfrac{1}{2}(\partial_\mu s)^T O^T O (\partial^\mu s) =T_s.
\end{align}
The kinetic sector is thus preserved, and all singlets $s_i$ carry identical quantum numbers; hence the Lagrangian written in the new basis $\tilde{s}=Os$
represents the same physics, although the parameters of the potential $V$ are, in general, not invariant---in direct analogy with the reparametrization freedom
of the scalar sector in the 2HDM \cite{Branco_2012}.

\subsection{Lagrangian and reparametrizations of the SM+2S}
\label{sec:LagrangianAndReparametrizationsOfTheSM2S}
Henceforth we set \(K=2\) in the summations in \eqref{E:SMKSpot}, corresponding to the SM+2S, thereby obtaining the explicit potential
\begin{align}\label{E:SM2Spot}
V(\Phi,s_1,s_2)=\;&
-\mu^2\,\Phi^\dagger\Phi + \lambda\,(\Phi^\dagger\Phi)^2
\nn \\[2pt]
&+ \alpha_1 s_1 + \alpha_2 s_2
\nn \\[2pt]
&- \big( m_{11} s_1^2 + m_{12} s_1 s_2 + m_{22} s_2^2 \big)
\nn \\[2pt]
&+ \kappa_{111} s_1^3 + \kappa_{112} s_1^2 s_2 + \kappa_{122} s_1 s_2^2 + \kappa_{222} s_2^3
\nn \\[2pt]
&+ \lambda_{1111} s_1^4 + \lambda_{1112} s_1^3 s_2 + \lambda_{1122} s_1^2 s_2^2
+ \lambda_{1222} s_1 s_2^3 + \lambda_{2222} s_2^4
\nn \\[2pt]
&+ (\kappa_1 s_1 + \kappa_2 s_2)\,\Phi^\dagger\Phi
\nn \\[2pt]
&+ \big( \lambda_{11} s_1^2 + \lambda_{12} s_1 s_2 + \lambda_{22} s_2^2 \big)\,\Phi^\dagger\Phi \,.
\end{align}

We retain the linear terms $\alpha_i s_i$, since they cannot, in general, be eliminated by shifts of the singlets, see Section \ref{sec:LinearTerms}.
Nevertheless, we will use singlet shifts to simplify the parameterization: Under the reparametrizations
\begin{align}\label{E:fieldShiftsSM2S}
	\tilde{s}_1 = s_1+\gamma_1, \quad \tilde{s}_2 = s_2+\gamma_2,
\end{align}
the reparametrized Lagrangian satisfies 
\begin{align}
\tilde{\mc{L}}[\tilde{s},s^c]=\mc{L}[{s},s^c]=\mc{L}[\tilde{s}-\gamma,s^c], 	
\end{align}
where $\gamma=(\gamma_1,\gamma_2)^T$.
Suppressing the tildes in the new basis, the reparametrized Lagrangian is therefore obtained by performing the substitutions
\begin{align}\label{E:shiftsss}
	{s}_1 \to s_1-\gamma_1, \quad {s}_2 \to s_2-\gamma_2,
\end{align}
in the original Lagrangian $\mc{L}$.
The parameters of the potential then transform as follows:
\begin{align}
	\alpha_1 &\to \alpha _1+2 \gamma _1 m_{11}+\gamma _2 m_{12}+3 \gamma _1^2 \kappa
   _{111}+2 \gamma _2 \gamma _1 \kappa _{112}+\gamma _2^2 \kappa
   _{122}\nn \\&\quad \:-4 \gamma _1^3 \lambda _{1111}-3 \gamma _2 \gamma _1^2
   \lambda _{1112}-2 \gamma _2^2 \gamma _1 \lambda _{1122}-\gamma
   _2^3 \lambda _{1222},\label{E:shifta1} \\[2pt]
	\alpha_2 &\to \alpha _2+\gamma _1 m_{12}+2 \gamma _2 m_{22}+\gamma _1^2 \kappa
   _{112}+2 \gamma _2 \gamma _1 \kappa _{122}+3 \gamma _2^2 \kappa
   _{222}\nn \\&\quad \:-\gamma _1^3 \lambda _{1112}-2 \gamma _2
   \gamma _1^2 \lambda _{1122}-3 \gamma _2^2 \gamma _1 \lambda
   _{1222}-4 \gamma _2^3 \lambda _{2222}, \label{E:shifta2}  \\[2pt]
	\mu ^2	&\to \mu ^2+\gamma _1 \kappa _1+\gamma _2 \kappa _2-\gamma _1^2 \lambda
   _{11}-\gamma _1 \gamma _2 \lambda _{12}-\gamma _2^2 \lambda
   _{22}, \\[2pt]
	m_{11} &\to m_{11}+3 \gamma _1 \kappa _{111}+\gamma _2 \kappa _{112}-6 \gamma
   _1^2 \lambda _{1111}-3 \gamma _2 \gamma _1 \lambda
   _{1112}-\gamma _2^2 \lambda _{1122}, \label{E:shiftm11} \\[2pt]
	m_{12} &\to m_{12}+2 \gamma _1 \kappa _{112}+2 \gamma _2 \kappa _{122}-3 \gamma
   _1^2 \lambda _{1112}-4 \gamma _2 \gamma _1 \lambda _{1122}-3
   \gamma _2^2 \lambda _{1222},\label{E:shiftm12} \\[2pt]
m_{22} &\to	m_{22}+\gamma _1 \kappa _{122}+3 \gamma _2 \kappa _{222}-\gamma
   _1^2 \lambda _{1122}-3 \gamma _2 \gamma _1 \lambda
   _{1222}-6 \gamma _2^2 \lambda _{2222},\label{E:shiftm22} \\[2pt]
	\kappa
   _{1} &\to \kappa _1-2 \gamma _1 \lambda _{11}-\gamma _2 \lambda _{12} \label{E:shiftk1}  \\[2pt]
	\kappa
   _{2} &\to \kappa _2 -\gamma _1 \lambda _{12}-2 \gamma _2 \lambda _{22} \label{E:shiftk2} \\[2pt]
	\kappa
   _{111} &\to \kappa
   _{111}-4 \gamma _1 \lambda _{1111}-\gamma _2 \lambda _{1112}, \label{E:shiftk111} \\[2pt]
	\kappa
   _{112} &\to \kappa
   _{112}-3 \gamma _1 \lambda _{1112}-2 \gamma _2 \lambda _{1122},\label{E:shiftk112}  \\[2pt]
	\kappa
   _{122} &\to \kappa
   _{122}-2 \gamma _1 \lambda _{1122}-3 \gamma _2 \lambda _{1222}, \label{E:shiftk122}  \\[2pt]
	\kappa
   _{222}&\to \kappa
   _{222}-\gamma _1 \lambda _{1222}-4 \gamma _2 \lambda _{2222}. \label{E:shiftk222} 
\end{align}
The quartic couplings are not listed here since they are invariant under singlet shifts.
We also note that any two cubic couplings can generically be eliminated by a suitable choice of $(\gamma_1,\gamma_2)$, unless the relevant quartic couplings vanish, or unless the two shift-induced variations $\Delta\kappa(\gamma_1,\gamma_2)$ are linearly dependent.

Next, consider an $\Lg{O}(2)$ basis transformation of the singlets,
\begin{align}
\tilde{s}\equiv
\begin{pmatrix}
\tilde{s}_1 \\[2pt]
\tilde{s}_2
\end{pmatrix} =\begin{pmatrix}
\cos \theta & -\delta \sin \theta \\[2pt]
\sin \theta & \delta \cos \theta
\end{pmatrix}
\begin{pmatrix}
{s}_1 \\[2pt]
{s}_2
\end{pmatrix}
\equiv Os,
\end{align}
where $\delta=\pm 1$, $\theta\in[0,2\pi)$, and $\delta=-1$ yields an orthogonal transformation with determinant $-1$.
The reparametrized Lagrangian is then
\begin{align}
\tilde{\mc{L}}[\tilde{s},s^c]=\mc{L}[{s},s^c]=\mc{L}[O^T\tilde{s},s^c], 	
\end{align}
and, suppressing the tildes in the new basis, it is obtained by substituting $s\to O^T s$ in the original Lagrangian $\mc{L}$, i.e.
\begin{align}
	{s}_1 \to \cos(\theta) s_1 + \sin(\theta) s_2, \quad {s}_2 \to -\delta \sin(\theta) s_1 + \delta\cos(\theta) s_2.
\end{align}
In this case the parameters of the potential transform as:
\begin{align}\label{E:trafoParametersSO2}
	\alpha_1 &\to \alpha _1 c_{\theta }-\alpha _2 \delta  s_{\theta },  \\[2pt]
	\alpha_2 &\to \alpha _2 \delta  c_{\theta }+\alpha _1 s_{\theta }, \\[2pt]
	 m_{11} &\to c_{\theta }^2 m_{11}-\delta  c_{\theta } m_{12} s_{\theta }+m_{22}
   s_{\theta }^2, \\[2pt]
 m_{12} &\to	\delta  c_{2 \theta } m_{12}+m_{11} s_{2 \theta }-m_{22} s_{2
   \theta }, \\[2pt]
m_{22} &\to	c_{\theta }^2 m_{22}+\delta  c_{\theta } m_{12} s_{\theta }+m_{11}
   s_{\theta }^2, \\[2pt]
	\kappa_1 &\to c_{\theta } \kappa _1- \delta  \kappa _2
   s_{\theta }, \\[2pt]
	\kappa_2 &\to  \delta  c_{\theta } \kappa _2+s_{\theta
   }\kappa _1, \\[2pt]
	\kappa _{111} &\to c_{\theta }^3 \kappa _{111}-\delta  c_{\theta }^2 \kappa _{112}
   s_{\theta }+c_{\theta } \kappa _{122} s_{\theta }^2-\delta 
   \kappa _{222} s_{\theta }^3, \\[2pt]
	\kappa _{112} &\to\tfrac{1}{4} \delta  \left(c_{\theta }+3 c_{3 \theta }\right) \kappa
   _{112}+3 \delta  c_{\theta } \kappa _{222} s_{\theta }^2+3
   c_{\theta }^2 \kappa _{111} s_{\theta }+\tfrac{1}{4} \kappa
   _{122} \left(s_{\theta }-3 s_{3 \theta }\right), \\[2pt]
	\kappa _{122} &\to  \tfrac{1}{4} \left(c_{\theta }+3 c_{3 \theta }\right) \kappa
   _{122}-3 \delta  c_{\theta }^2 \kappa _{222} s_{\theta }+3
   c_{\theta } \kappa _{111} s_{\theta }^2-\tfrac{1}{4} \delta 
   \kappa _{112} \left(s_{\theta }-3 s_{3 \theta }\right),  \\[2pt]
	\kappa _{222} &\to \delta  c_{\theta }^3 \kappa _{222}+\delta  c_{\theta } \kappa
   _{112} s_{\theta }^2+c_{\theta }^2 \kappa _{122} s_{\theta
   }+\kappa _{111} s_{\theta }^3, \\[2pt]
	\lambda _{11} &\to  c_{\theta }^2 \lambda _{11}- \delta 
   c_{\theta } \lambda _{12} s_{\theta }+ \lambda _{22}
   s_{\theta }^2, \\[2pt]
	\lambda _{12} &\to  \delta  c_{2 \theta } \lambda _{12}+ \lambda
   _{11} s_{2 \theta }- \lambda _{22} s_{2 \theta }, \label{E:O2trafol12} \\[2pt]
	\lambda _{22} &\to  c_{\theta }^2 \lambda _{22}+ \delta 
   c_{\theta } \lambda _{12} s_{\theta }+ \lambda _{11}
   s_{\theta }^2, \\[2pt]
\lambda _{1111} &\to	c_{\theta }^4 \lambda _{1111}-\delta  c_{\theta }^3 \lambda _{1112}
   s_{\theta }-\delta  c_{\theta } \lambda _{1222} s_{\theta
   }^3+c_{\theta }^2 \lambda _{1122} s_{\theta }^2+\lambda _{2222}
   s_{\theta }^4, \\[2pt]
\lambda
   _{1112} &\to	\tfrac{1}{2} \delta  \left(c_{2 \theta }+c_{4 \theta }\right)
   \lambda _{1112}+4 c_{\theta }^3 \lambda _{1111} s_{\theta }-4
   c_{\theta } \lambda _{2222} s_{\theta }^3+\delta  \lambda
   _{1222} s_{\theta } s_{3 \theta }-\tfrac{1}{2} \lambda _{1122}
   s_{4 \theta }, \label{E:O2trafol1112} \\[2pt]
	\lambda
   _{1122} &\to \tfrac{1}{4} \left(3 c_{4 \theta }+1\right) \lambda
   _{1122}+\tfrac{3}{4} \delta  \lambda _{1112} s_{4 \theta
   }-\tfrac{3}{4} \delta  \lambda _{1222} s_{4 \theta }+\tfrac{3}{2}
   \lambda _{1111} s_{2 \theta }^2+\tfrac{3}{2} \lambda _{2222} s_{2
   \theta }^2, \\[2pt]
	\lambda
   _{1222} &\to  \tfrac{1}{2} \delta  \left(c_{2 \theta }+c_{4
   \theta }\right) \lambda _{1222}+ \tfrac{1}{2} \delta  \left(c_{2 \theta }-c_{4 \theta }\right)
   \lambda _{1112}\nn \\[2pt] &\quad \:-4 c_{\theta }^3 \lambda _{2222}
   s_{\theta }+4 c_{\theta } \lambda _{1111} s_{\theta
   }^3+\tfrac{1}{2} \lambda _{1122} s_{4 \theta }, \\[2pt]
	\lambda
   _{2222} &\to c_{\theta }^4 \lambda _{2222}+\delta  c_{\theta }^3 \lambda _{1222}
   s_{\theta }+\delta  c_{\theta } \lambda _{1112} s_{\theta
   }^3+c_{\theta }^2 \lambda _{1122} s_{\theta }^2+\lambda _{1111}
   s_{\theta }^4, \label{E:trafoParametersSO2lambda2222}
\end{align}
whereas $\mu^2$ and $\lambda$ are invariant. Since \(\tilde{\alpha}=O\alpha\) and \(\tilde{s}=Os\), a rotational reparametrization leaves the linear terms unchanged, \(\tilde{\alpha}^{T}\tilde{s}=\alpha^{T}s\), as it should. Moreover, since \(\tilde{a}=\tilde{\eta}(0)=O\eta(0)=Oa\) for a scalar symmetry with characteristic
\(\eta\), it follows that
\begin{align}
  \tilde{a}^{T}\tilde{\alpha} = a^{T}\alpha,
\end{align}
and hence a divergence symmetry (with \(\pr X(T)=0\)) remains a divergence symmetry under an
orthogonal reparametrization, in accordance with Corollary~\ref{C:charVarScalSyms}(ii) and
Proposition~\ref{P:AffRepar}.

The quartic coupling $\lambda_{1111}$ cannot always be transformed to zero; a counter-example is $\lambda_{1111}=\lambda_{2222}=1$ with all other
$\lambda_{ijkl}=0$.
However, $\lambda_{1112}$ can always be set to zero: The transformed coupling $\tilde{\lambda}_{1112}$, given by the right-hand side of
\eqref{E:O2trafol1112}, satisfies
\begin{align}
	\int_0^{2\pi} \tilde{\lambda}_{1112}(\theta)\, d\theta=0,
\end{align}
and since $\tilde{\lambda}_{1112}(\theta)$ is continuous, there exists $\theta_0\in\langle 0,2\pi\rangle$ such that
\begin{align}\label{E:l1112to0}
\tilde{\lambda}_{1112}(\theta_0)=0.
\end{align}
The same argument applies to $\lambda_{1222}$.
Turning to the cubic singlet couplings, one likewise has
\begin{align}
	\forall i\, \forall j\, \forall k \ \exists\, \theta \in [0,2\pi\rangle:\;
\tilde{\kappa}_{ijk}(\theta) = 0,
\end{align}
but, in general, not for the same $\theta$ for different $(i,j,k)$.
We begin the reduction of the SM+2S potential by setting $\lambda_{1112}=0$. For an alternative, $\Lg{SO}(2)$-adapted formulation of the potential that allows more transparent reductions of the quartic and cubic parameters, see Appendix~\ref{sec:SymmetryAnalysisWithLgSO2AdaptedSM2SPotential}.

\subsubsection{Linear terms}
\label{sec:LinearTerms}
We perform the symmetry analysis on the tree-level potential \eqref{E:SM2Spot}. As for the SM+S in Section~\ref{sec:LinearTermsSMS}, the field shifts \eqref{E:fieldShiftsSM2S} cannot, in general, be employed to remove all linear terms for arbitrary parameter values; cf.~\eqref{E:shifta1} and \eqref{E:shifta2}. Indeed, eliminating the terms linear in the shifted singlets requires solving the coupled system $\tilde{\alpha}_1(\gamma_1,\gamma_2)=0$ and $\tilde{\alpha}_2(\gamma_1,\gamma_2)=0$, which may fail to admit any real solution in $(\gamma_1,\gamma_2)$ even for tree-level potentials with extrema; see the argument below.
 Symmetry analyses are carried out at tree level because renormalization, in the absence of quantum anomalies, preserves variational symmetries, so that the effective potential $V_{\text{eff}}$ has the same (variational) symmetries as $V$. A physical effective potential $V_{\text{eff}}$ should possess a minimum, and if one expands all fields about such a minimum, no linear terms are present. However, this argument cannot be used to remove the linear terms from the tree-level potential $V$, since it may happen that $V$ has no minimum while $V_{\text{eff}}$ does; cf.~the discussion of the linear term in the SM+S in Section~\ref{sec:LinearTermsSMS}. To remain completely general, we therefore keep the linear terms in our symmetry analysis of $V$, just as we did for the SM+S.

Even if one were to demand that the tree-level potential $V$ have a stationary point, expanding about a stationary point with nonzero components along the Higgs-doublet directions $\phi_i$, $i=1,\dots,4$, modifies the explicit form of the SM ``kinetic'' sector (including Higgs--gauge interactions), e.g.\ by generating gauge boson mass terms. Although the symmetries are not destroyed (they are only ``hidden''), this field shift (an affine reparametrization) replaces the original linear (tadpole) terms by new quadratic and higher-order terms in the shifted Lagrangian. While the resulting parameters are not independent, they would nevertheless increase the number of cases to be considered in the symmetry analysis.

Furthermore, expanding only in the singlet directions does not, in general, remove all terms that are linear in the singlets. Let
\begin{align}
\phi=(\phi_1,\ldots,\phi_4,s_1,s_2),
\end{align}
and suppose $V$ has a stationary point at $\phi=v$. Then
\begin{align}\label{E:singletStatConds}
0=\left.\frac{\partial V}{\partial \phi_i}\right|_{\phi=v}\qquad \forall i\in \{1,\ldots,6\}.
\end{align}
Expanding about the full stationary point, $\phi=\tilde{\phi}+v$, and defining
\begin{align}\label{E:Vtilde1}
\tilde{V}(\tilde{\phi})=V(\tilde{\phi}+v),
\end{align}
yields
\begin{align}\label{E:singletStatCondsTransf}
0=\left.\frac{\partial \tilde{V}}{\partial \tilde{\phi}_i}\right|_{\tilde{\phi}=0} \qquad \forall i\in \{1,\ldots,6\},
\end{align}
so in particular the terms linear in the shifted fields vanish, e.g.\ $\tilde{\alpha}_1=0$ and $\tilde{\alpha}_2=0$.

On the other hand, consider a singlet-only shift $w=(0,0,0,0,v_5,v_6)$, i.e.\ $\phi=\tilde{\phi}+w$, and (re-)define
\begin{align}\label{E:Vtilde2}
\tilde{V}(\tilde{\phi}) = V(\tilde{\phi}+w),
\end{align}
which is, in general, a different function than the $\tilde{V}$ defined in \eqref{E:Vtilde1}.
Then the stationary conditions \eqref{E:singletStatCondsTransf} are evaluated at
\begin{align}
	\tilde{\phi}=(v_1,v_2,v_3,v_4,0,0),
\end{align}
rather than at $\tilde{\phi}=0$. In this case the singlet stationary conditions read, using \eqref{E:SM2Spot},
\begin{align}\label{E:singletOnlyShiftTadpole}
0=\left.\frac{\partial \tilde{V}}{\partial \tilde{s}_1}\right|_{\substack{\tilde{s}_1=\tilde{s}_2=0\\ \tilde{\phi}_a=v_a}}
=\tilde{\alpha}_1 + \tilde{\kappa}_1\,(\Phi^\dagger\Phi)\big|_{\tilde{\phi}_a=v_a}
=\tilde{\alpha}_1 + \frac{1}{2}\tilde{\kappa}_1\,(v_1^2+v_2^2+v_3^2+v_4^2),
\end{align}
and analogously for $\tilde{s}_2$. Thus, unless $\tilde{\kappa}_i=0$ for the chosen shift, a singlet-only shift does not force $\tilde{\alpha}_1=\tilde{\alpha}_2=0$; rather, it fixes $\tilde{\alpha}_i$ in terms of the Higgs-doublet background through \eqref{E:singletOnlyShiftTadpole}.
 In addition, the portal terms $\tilde{\kappa}_i\,\tilde{s}_i\,\Phi^\dagger\Phi$ generically survive a singlet-only shift, and the purely linear singlet terms $\tilde{\alpha}_i\,\tilde{s}_i$ vanish in general only when expanding about a stationary point of the full scalar sector.

\subsection{Determining equations}
\label{sec:DeterminingEquationsSM2S}
As before, we apply \texttt{SYM} \cite{dimas2005sym} to calculate the determining equations \eqref{E:prDelta=0PDEsymCond} of the 22 Euler-Lagrange equations 
\begin{align}
E(\mc{L}_\text{SM2S})=0,	
\end{align}
one for each field.
We will only study scalar symmetries, and hence set
\begin{align}
	\xi=0, \quad \eta^i=0,\quad \forall i \geq 7,
\end{align}
 i.e.~we are considering the infinitesimal generator
\begin{align}\label{E:infGenSM2Sskalar}
	X= \sum_{i=1}^6 \eta^i \partial_{\phi_i},
\end{align}
 with the identifications
\begin{align}
	\phi_5\equiv s_1, \quad \phi_6 \equiv s_2,
\end{align}
 and where the first four $\phi_i$'s are the component fields of the SM Higgs doublet \eqref{E:HiggsDoubletSM}.
Again, the simplest determining equations include
\begin{align}\label{E:simplestDetEqsSM2S}
  \partial_{\phi_i}\partial_{\phi_j}\eta^k=0, \qquad 1\le i,j,k\le 6,
\end{align}
which implies that $\eta^k$ is affine in the scalar fields,
\begin{align}\label{E:etaAffineSM2S}
  \eta^k=a_k+b_{k\ell}\phi_\ell, \qquad 1\le k,\ell\le 6.
\end{align}
Similar to the SM+S, the determining equations include
\begin{align}
\partial_{y_j}\eta^i=0,\qquad i=1,\ldots,6,\ \ j\ge 7,
\end{align}
so $\eta^1,\ldots,\eta^6$ are independent of the gauge field variables $y_j$. Here $y_j$ denotes the full set of field variables; for $j\le 6$ we have $y_j\equiv \phi_j$, while $y_j$ with $j\ge 7$ are the gauge field variables.
In any case, we restrict attention to scalar symmetries only and therefore do not consider gauge-field-dependent generators. 

Then, substituting \eqref{E:etaAffineSM2S} into the determining equations and solving only the subset that is independent of the potential parameters (i.e.\ contains no parameters from $V$) yields
\begin{align}
a_i&=0, && i=1,\ldots,4, \\
b_{jk}&=0, && (j,k)\notin\{(1,2),(2,1),(3,4),(4,3),(5,5),(5,6),(6,5),(6,6)\},
\end{align}
where, at this stage,
\begin{gather}
a_5,a_6,b_{55},b_{56},b_{65},b_{66}\in\mathbb{R},\\
b_{12}=-b_{21}=-b_{43}=b_{34}\in\mathbb{R}.
\end{gather}
Substituting these results into the remaining determining equations leaves the undetermined constants
\begin{align}\label{E:B3}
B_3=\{ a_5, a_6, b_{55}, b_{56}, b_{65}, b_{66} \}
\end{align}
appearing in the equations. The values and mutual relations of these constants may be further restricted when the classifying equations (i.e.\ the determining equations containing parameters from $V$) are imposed; in particular, they may set some of them to zero or yield additional equalities.
One additional constant, which we take to be $b_{34}$, remains undetermined but does not occur in the remaining equations; it corresponds to the ever-present symmetry $X_Y$.

\subsection{Parameter cases and reductions of the SM+2S potential}
\label{sec:ParameterCasesAndReductionsOfTheSM2SPotential}
To avoid unnecessary repetitions of equivalent Lie symmetry algebras, including ``exotic'' cases (i.e.\ instances where the symmetry generators depend explicitly on the potential parameters) and to make the determining equations solvable, we will reduce the potential and divide our analysis into four branches.
The starting point for all branches is the elimination of $\lambda_{1112}$,
\begin{align}\label{E:l1112to0_2}
  \lambda_{1112}\to 0,
\end{align}
by an $\Lg{SO}(2)$ reparametrization of the potential, cf.~\eqref{E:l1112to0}.
The four branches are then determined by whether $\lambda_{1111}$ and $\lambda_{1122}$ vanish or not (four possible combinations) in the reparametrized potential where $\lambda_{1112}= 0$.

\subsubsection{Branch I: $\lambda_{1111}\ne 0,\; \lambda_{1122}\ne 0$}
\label{sec:BranchOne}

We now assume
\begin{align}\label{E:l1111l1122ne0}
	\lambda_{1111}\ne 0,\quad \lambda_{1122}\ne 0.
\end{align}
Then we can eliminate the parameters $\kappa_{111}$ and $\kappa_{112}$ by two shifts \eqref{E:shiftsss}, cf.~\eqref{E:shiftk111} and \eqref{E:shiftk112}, without altering any of the parameters $\lambda_{ijkl}$, since they are not affected by such shifts. Thus
\begin{align}\label{E:k111k112to0}
	\kappa_{111}\to 0,\quad \kappa_{112}\to 0.
\end{align}

\paragraph{Leaf~1}
\label{sec:Leaf1}

Hence, the reparametrization freedom is exhausted, and we 
have arrived at Leaf~1 in Fig.~\ref{fig:reduction-tree0}, highlighted in red.
By solving the determining equations with the assumptions
 \eqref{E:l1112to0_2}, \eqref{E:l1111l1122ne0} and \eqref{E:k111k112to0} by \texttt{Mathematica}'s built-in \texttt{Reduce} function, we obtain two solutions for the parameters $B_3$ \eqref{E:B3}. The first solution is
\begin{align}\label{E:asbsN1}
	a_5=a_6=b_{55}=b_{66}=0,\quad b_{56}=-b_{65},
\end{align}
with the parameter conditions \eqref{E:consB1N1a}. In addition, one parameter (taken to be $b_{34}$) is free, as mentioned below \eqref{E:B3}. Substituting \eqref{E:asbsN1}
into \eqref{E:etaAffineSM2S} and \eqref{E:infGenSM2Sskalar}
yields the following symmetry algebra:
\begin{gather}
	\La{so}(2)\oplus \La{u}(1)_Y=\on{span}(s_2\partial_{s_1}-s_1\partial_{s_2},X_Y), \\
	\text{for}\quad
\alpha_{1}=\alpha_{2}=m_{12}=\kappa_{1}=\kappa_{2}=\kappa_{122}=\kappa_{222}=\lambda_{12}=\lambda_{1222}=0, \nn \\
m_{11}=m_{22},\ \lambda_{11}=\lambda_{22},\ \lambda_{1111}=\lambda_{2222},\ \lambda_{1122}=2\lambda_{2222}. \label{E:consB1N1a}
\end{gather}

Here $X_Y$ was given in \eqref{E:XYSMS}, and $\La{so}(2)\cong \N{R}$ generates a rotation.
The conditions \eqref{E:consB1N1a} are in addition to
\eqref{E:l1112to0_2}, \eqref{E:l1111l1122ne0} and \eqref{E:k111k112to0}, and correspond to the potential
\begin{align}\label{E:b1so2pot}
	V=m_{11}s^Ts+\lambda_{11}s^Ts\Phi^\dag \Phi+\lambda_{1111}(s^Ts)^2 -\mu^2 \Phi^\dagger \Phi + 
\lambda (\Phi^\dagger \Phi)^2,
\end{align}
with $s^T=(s_1,s_2)$, which indeed is invariant under $\Lg{SO}(2)$ rotations.
Since the kinetic terms are also invariant under $\Lg{SO}(2)$, the full Lagrangian is invariant, and this is a strict variational symmetry.

The other solution yields only the ubiquitous algebra $\La{u}(1)_Y$, which is the maximal algebra in Leaf~1 for all parameter values not satisfying \eqref{E:consB1N1a}.
For instance, after eliminating $\lambda_{1112}$ by an $\Lg{SO}(2)$ rotation in singlet space, suppose that $\lambda_{1111}\neq 0$ and $\lambda_{1122}\neq 0$.
Then $\kappa_{111}$ and $\kappa_{112}$ can be eliminated by two shifts.
If we then find, e.g., that $m_{11}\neq m_{22}$ (contradicting \eqref{E:consB1N1a}), the maximal symmetry algebra is $\La{u}(1)_Y$.

\begin{figure}[htbp]
\centering
\scalebox{1.0}{
\begin{forest}
for tree={
  draw, rounded corners, thick,
  align=center, inner sep=4pt,
  l sep=10mm, s sep=8mm,
  edge={->, thick}
}
[{$\lambda_{1112}\xrightarrow{\ \mathrm{SO}(2)\ } 0$}, very thick,
  [{$\lambda_{1111}\neq 0$\\$\lambda_{1122}\neq 0$}, edge label={node[midway, sloped, inner sep=1pt, fill=white]{I}}
    [{$\kappa_{111}\to 0$\\$\kappa_{112}\to 0$}, draw=red, very thick, 
  label={[circle, draw, fill=yellow!30, inner sep=1pt, font=\scriptsize]
         north east:{1}}]
  ]
  [{$\lambda_{1111}=0$\\$\lambda_{1122}\neq 0$}, edge label={node[midway, sloped, inner sep=1pt, fill=white]{II}}
    [{$\kappa_{112}\to 0$\\$\kappa_{122}\to 0$}, draw=red, very thick, 
  label={[circle, draw, fill=yellow!30, inner sep=1pt, font=\scriptsize]
         north east:{2}}]
  ]]
\end{forest}
}
\caption{\textbf{Reduction tree, Branches I and II.}
Each path starting from the root at the top and terminating at a leaf (outlined in red) at the bottom corresponds to a reduced potential.
For each leaf, the determining equations for the corresponding reduced potential are solved, and the associated symmetries are derived.
Leaf~1 is discussed in Section~\ref{sec:BranchOne}, while Leaf~2 is treated in Section~\ref{sec:BranchTwo}.}
\label{fig:reduction-tree0}
\end{figure}

\subsubsection{Branch II: $\lambda_{1111}= 0,\; \lambda_{1122}\ne 0$}
\label{sec:BranchTwo}

Now consider the case
\begin{align}\label{E:l1111eq0l1122ne0}
	\lambda_{1111}=0,\quad \lambda_{1122}\ne 0.
\end{align}
We can eliminate $\kappa_{112}$ by an appropriate choice of $\gamma_2$, and subsequently set $\kappa_{122}$ to zero by an appropriate choice of $\gamma_1$, cf.~\eqref{E:shiftk112} and \eqref{E:shiftk122}, respectively. 

\paragraph{Leaf~2}
\label{sec:Leaf2}

This means
\begin{align}\label{E:k112k122to0}
	\kappa_{112}\to 0, \quad \kappa_{122}\to 0,
\end{align}
and the situation is given by Leaf~2 in Fig.~\ref{fig:reduction-tree0}.
Solving the determining equations under the assumptions of Leaf~2, i.e.\ \eqref{E:l1112to0_2}, \eqref{E:l1111eq0l1122ne0} and \eqref{E:k112k122to0}, we find that all parameters of $B_3$ vanish, and the symmetry algebra is the ever-present
\begin{align}
	\La{u}(1)_Y=\on{span}(X_Y).
\end{align}

\subsubsection{Branch III: $\lambda_{1111} \ne 0,\; \lambda_{1122}= 0$}
\label{sec:BranchThree}
\label{sec:Node1}
Now suppose
\begin{align}\label{E:l1111ne0l1122eq0}
	\lambda_{1111} \ne 0,\; \lambda_{1122}= 0,
\end{align}
and we are hence on Branch III in Fig.~\ref{fig:reduction-tree1}.


\begin{sidewaysfigure}
\centering
\resizebox{1.0\textheight}{!}{
\begin{forest}
for tree={
  draw, rounded corners, thick,
  align=center, inner sep=4pt,
  l sep=10mm, s sep=8mm,
  edge={->, thick}
}
[{$\lambda_{1112}\xrightarrow{\ \mathrm{SO}(2)\ } 0$}, very thick,
  [{$\lambda_{1111}\neq 0$\\$\lambda_{1122}=0$}, edge label={node[midway, sloped, inner sep=1pt, fill=white]{III}}
    [{$\kappa_{111}\to 0$\\($\gamma_1$ fixed)}
     [{$\lambda_{1222}=0$?}
        [{$\lambda_{2222}=0$?}, edge label={node[midway, sloped, inner sep=1pt, fill=white, font=\small]{yes}}
        [{$\kappa_{112}=0$?}, edge label={node[midway, sloped, inner sep=1pt, fill=white, font=\small]{yes}}, draw=red, very thick, 
  label={[circle, draw, fill=yellow!30, inner sep=1pt, font=\scriptsize]
         north east:{3}}
        [{$\kappa_{122}=0$?}, edge label={node[midway, sloped, inner sep=1pt, fill=white, font=\small]{yes}}
          [{$\kappa_{222}=0$?}, edge label={node[midway, sloped, inner sep=1pt, fill=white, font=\small]{yes}} [{$m_{12}= 0$?}, edge label={node[midway, sloped, inner sep=1pt, fill=white, font=\small]{yes}} [{$m_{22}= 0$?}, edge label={node[midway, sloped, inner sep=1pt, fill=white, font=\small]{yes}} [{$\varnothing$}, edge label={node[midway, sloped, inner sep=1pt, fill=white, font=\small]{yes}}] [{$\alpha_2\to 0$}, edge label={node[midway, sloped, inner sep=1pt, fill=white, font=\small]{no}}, draw=green, very thick]] [{$\alpha_1\to 0$}, edge label={node[midway, sloped, inner sep=1pt, fill=white, font=\small]{no}}]] [{$m_{22}\to 0$}, edge label={node[midway, sloped, inner sep=1pt, fill=white, font=\small]{no}}]]
          [{$m_{12}\to 0$}, edge label={node[midway, sloped, inner sep=1pt, fill=white, font=\small]{no}}]
        ]
        [{$m_{11}\to 0$}, edge label={node[midway, sloped, inner sep=1pt, fill=white, font=\small]{no}}]
      ] 
          [{$\kappa_{222}\to 0$}, edge label={node[midway, sloped, inner sep=1pt, fill=white, font=\small]{no}}, draw=red, very thick, 
  label={[circle, draw, fill=yellow!30, inner sep=1pt, font=\scriptsize]
         north east:{4}}]]
        [{$\kappa_{122}\to 0$}, edge label={node[midway, sloped, inner sep=1pt, fill=white, font=\small]{no}}, draw=red, very thick, 
  label={[circle, draw, fill=yellow!30, inner sep=1pt, font=\scriptsize]
         north east:{5}}]
      ]
    ]
  ]
[{$\lambda_{1111}=0$\\$\lambda_{1122}=0$}, edge label={node[midway, sloped, inner sep=1pt, fill=white]{IV}}
  [{$\lambda_{1222}=0$?}
    [{$\lambda_{2222}=0$?}, edge label={node[midway, sloped, inner sep=1pt, fill=white, font=\small]{yes}}
      [{$\lambda_{12}\to 0$}, edge label={node[midway, sloped, inner sep=1pt, fill=white, font=\small]{yes}}
        [{$\lambda_{11}=0$?}
          [{$\lambda_{22}=0$?}, edge label={node[midway, sloped, inner sep=1pt, fill=white, font=\small]{yes}}[{$\kappa_{111}\to 0$}, edge label={node[midway, sloped, inner sep=1pt, fill=white, font=\small]{yes}}[{$\kappa_{112}= 0$?} [{$\kappa_{122}=0$?}, edge label={node[midway, sloped, inner sep=1pt, fill=white, font=\small]{yes}} [{$\kappa_{222}=0$?\\Continued in Fig.~\ref{fig:reduction-tree2}}, edge label={node[midway, sloped, inner sep=1pt, fill=white, font=\small]{yes}},draw=blue, very thick][{$m_{12}\to 0\:$ ($\gamma_2$ fixed) \\$m_{22}\to 0\:$ ($\gamma_1$ fixed)}, draw=red, very thick, 
  label={[circle, draw, fill=yellow!30, inner sep=1pt, font=\scriptsize]
         north east:{15}}, edge label={node[midway, sloped, inner sep=1pt, fill=white, font=\small]{no}} ]][{$m_{11}\to 0 \:$ ($\gamma_2$ fixed)\\$m_{12}\to 0\:$ ($\gamma_1$ fixed)}, draw=red, very thick, 
  label={[circle, draw, fill=yellow!30, inner sep=1pt, font=\scriptsize]
         north east:{16}}, edge label={node[midway, sloped, inner sep=1pt, fill=white, font=\small]{no}}]] ]
          [{$\kappa_2\to 0$ \\ ($\gamma_2$ fixed)}, edge label={node[midway, sloped, inner sep=1pt, fill=white, font=\small]{no}} [{$m_{11}= 0$?} [{$\varnothing$}, draw=red, very thick, 
  label={[circle, draw, fill=yellow!30, inner sep=1pt, font=\scriptsize]
         south east:{17}}, edge label={node[midway, sloped, inner sep=1pt, fill=white, font=\small]{yes}}][{$\kappa_{111}= 0$?}, edge label={node[midway, sloped, inner sep=1pt, fill=white, font=\small]{no}}[{$\kappa_{112}= 0$?}, edge label={node[midway, sloped, inner sep=1pt, fill=white]{yes}}
    [{$\alpha_1\to 0$ 
		}, draw=red, very thick, 
  label={[circle, draw, fill=yellow!30, inner sep=1pt, font=\scriptsize]
         north west:{18}}, edge label={node[midway, sloped, inner sep=1pt, fill=white, font=\small]{yes}}][{$m_{12}\to 0$
				}, draw=red, very thick, 
  label={[circle, draw, fill=yellow!30, inner sep=1pt, font=\scriptsize]
         north east:{19}}, edge label={node[midway, sloped, inner sep=1pt, fill=white, font=\small]{no}}]
  ]
  [{$m_{11}\to 0$\\ ($\gamma_1$ fixed)}, draw=red, very thick, 
  label={[circle, draw, fill=yellow!30, inner sep=1pt, font=\scriptsize]
         north east:{20}}, edge label={node[midway, sloped, inner sep=1pt, fill=white, font=\small]{no}}
  ]]]]]
          [{$\kappa_1\to 0$\\($\gamma_1$ fixed)}, edge label={node[midway, sloped, inner sep=1pt, fill=white, font=\small]{no}} [{$\lambda_{22}=0$?} [{$m_{22}=0$?}, edge label={node[midway, sloped, inner sep=1pt, fill=white, font=\small]{yes}} [{$\varnothing$}, draw=red, very thick, 
  label={[circle, draw, fill=yellow!30, inner sep=1pt, font=\scriptsize]
         south east:{21}}, edge label={node[midway, sloped, inner sep=1pt, fill=white, font=\small]{yes}}][{$\kappa_{222}=0$?}, edge label={node[midway, sloped, inner sep=1pt, fill=white, font=\small]{no}}[{$\kappa_{122}= 0$?}, edge label={node[midway, sloped, inner sep=1pt, fill=white]{yes}}
    [{$\alpha_2\to 0$ 
		}, draw=red, very thick, 
  label={[circle, draw, fill=yellow!30, inner sep=1pt, font=\scriptsize]
         north east:{22}}, edge label={node[midway, sloped, inner sep=1pt, fill=white, font=\small]{yes}}][{$m_{12}\to 0$
				}, draw=red, very thick, 
  label={[circle, draw, fill=yellow!30, inner sep=1pt, font=\scriptsize]
         north east:{23}}, edge label={node[midway, sloped, inner sep=1pt, fill=white, font=\small]{no}}]
  ]
  [{$m_{22}\to 0$ \\ ($\gamma_2$ fixed)}, draw=red, very thick, 
  label={[circle, draw, fill=yellow!30, inner sep=1pt, font=\scriptsize]
         north east:{24}}, edge label={node[midway, sloped, inner sep=1pt, fill=white, font=\small]{no}}
  ]]][{$\kappa_2\to 0$}, draw=red, very thick, 
  label={[circle, draw, fill=yellow!30, inner sep=1pt, font=\scriptsize]
         north east:{25}}, edge label={node[midway, sloped, inner sep=1pt, fill=white, font=\small]{no}}]]]
        ]
      ]
      [{$\kappa_{222}\to 0$\\ ($\gamma_2$ fixed)}, edge label={node[midway, sloped, inner sep=1pt, fill=white, font=\small]{no}} [{$\kappa_{122}=0$?} [{$\kappa_{112}=0$?}, edge label={node[midway, sloped, inner sep=1pt, fill=white, font=\small]{yes}} [{$\kappa_{111}=0$?}, edge label={node[midway, sloped, inner sep=1pt, fill=white, font=\small]{yes}} [{$m_{11}=0$?}, edge label={node[midway, sloped, inner sep=1pt, fill=white, font=\small]{yes}} [{$\varnothing$}, draw=red, very thick, 
  label={[circle, draw, fill=yellow!30, inner sep=1pt, font=\scriptsize]
         north west:{26}}, edge label={node[midway, sloped, inner sep=1pt, fill=white, font=\small]{yes}}][{$\alpha_1\to 0$\\ $m_{11}\ne 0$}, draw=red, very thick, 
  label={[circle, draw, fill=yellow!30, inner sep=1pt, font=\scriptsize]
         north east:{27}}, edge label={node[midway, sloped, inner sep=1pt, fill=white, font=\small]{no}}]][{$m_{11}\to 0$}, draw=red, very thick, 
  label={[circle, draw, fill=yellow!30, inner sep=1pt, font=\scriptsize]
         north east:{28}}, edge label={node[midway, sloped, inner sep=1pt, fill=white, font=\small]{no}}]][{$m_{12}\to 0$}, draw=red, very thick, 
  label={[circle, draw, fill=yellow!30, inner sep=1pt, font=\scriptsize]
         north east:{29}}, edge label={node[midway, sloped, inner sep=1pt, fill=white, font=\small]{no}}]][{$m_{22}\to 0$}, draw=red, very thick, 
  label={[circle, draw, fill=yellow!30, inner sep=1pt, font=\scriptsize]
         north east:{30}}, edge label={node[midway, sloped, inner sep=1pt, fill=white, font=\small]{no}}]]]
    ]
    [{$\kappa_{122}\to 0$\\$\kappa_{222}\to 0$}, draw=red, very thick, 
  label={[circle, draw, fill=yellow!30, inner sep=1pt, font=\scriptsize]
         north east:{31}}, edge label={node[midway, sloped, inner sep=1pt, fill=white, font=\small]{no}}]
  ]
]
]
\end{forest}
}
\caption{\textbf{Reduction tree, Branches III and IV.}
Red boxes indicate key nodes, numbered sequentially (3--31).
All but one of these are leaves corresponding to reduced potentials, while Node~3 represents an intermediate reduction stage.
Some leaves may still allow further (possibly nonlinear) reductions.
For each key node, the determining equations are solved and the corresponding symmetries are derived.
The empty set $\varnothing$ denotes that no further reductions apply.
}
\label{fig:reduction-tree1}

\end{sidewaysfigure}


 Since $\lambda_{1111} \ne 0$ we may choose $\gamma_1$ such that 
\begin{align}\label{E:bIIIka111to0}
	\kappa_{111}\to 0,
\end{align}
cf.~\eqref{E:shiftk111}. 

\paragraph{Node 3}
\label{sec:Node3}
Moreover, if
\begin{align}\label{E:Node3l1222=l2222=0}
	\lambda_{1222}=\lambda_{2222}=0,
\end{align}
no further $\kappa_{ijk}$ can be eliminated by a shift in $\gamma_2$, but we can still try to derive the possible symmetries for Node~3, without putting any restrictions on the parameter $\kappa_{112}$ of Node 3, cf.~Fig.~\ref{fig:reduction-tree1}.
With these assumptions, the determining equations yield
four different solutions, where the first is
\begin{align}
	a_5=b_{55}=b_{56}=b_{65}=0,
\end{align}
corresponding to the additional parameter conditions
\begin{align}\label{E:parCondsB3N3sol1}
	\alpha_2= m_{12}=m_{22}=\kappa_2=\kappa_{112}=\kappa_{122}=\kappa_{222}=\lambda_{12}=\lambda_{22}=0,
\end{align}
with symmetry algebra
\begin{align}\label{E:LaB3N3sol1}
	\La{a}(1)_2\oplus \La{u}(1)_Y=\on{span}(\partial_{s_2},s_2\partial_{s_2},X_Y),
\end{align}
where the index $2$ of $\La{a}(1)_2$ corresponds to the field $s_2$.  

The second solution also includes $a_6=0$ and is valid under the same set of parameter conditions as in \eqref{E:parCondsB3N3sol1},
except that $m_{22}\ne 0$.
In this case the corresponding symmetry algebra is
\begin{align}\label{E:LaB3N3sol2}
	\La{sc}_2\oplus \La{u}(1)_Y=\on{span}(s_2\partial_{s_2},X_Y),
\end{align}
where $\La{sc}_2\cong \N{R}$ is the scaling algebra of $s_2$.

The third solution,
\begin{align}\label{E:Node3sol3}
  a_5=b_{55}=b_{56}=b_{65}=0, \quad a_6 m_{22}=-\frac{b_{66}\alpha_2}{2},
\end{align}
is typical of potentials that are not maximally reduced, since it yields a symmetry generator that
depends explicitly on the potential parameters:
\begin{align}\label{E:Node3thirdsol}
  \N{R}\oplus \La{u}(1)_Y \cong
  \on{span}\Big((2s_2-\frac{\alpha_2}{m_{22}})\partial_{s_2},\, X_Y\Big),
\end{align}
which indicates that the potential should be reduced further, as we have done for the other
numbered key nodes (the leaves). See also~\eqref{E:X3SMS} and the subsequent discussion. Another drawback of intermediate reduction stages such as Node~3 is that the determining equations
may be too complicated to solve.
The solution \eqref{E:Node3sol3} is valid under the parameter conditions
\begin{align}\label{E:parCondsB3N3sol3}
	m_{12}=\kappa_2=\kappa _{112}=\kappa _{122}=\kappa _{222}=\lambda
   _{12}=\lambda _{22}=0,\quad \alpha_2\ne 0.
\end{align}

If 
\begin{align}\label{E:Node3m22ne0}
	m_{22}\ne 0,
\end{align}
we may eliminate $\alpha_2$ by an appropriate choice of $\gamma_2$ in \eqref{E:shifta2}, i.e.
\begin{align}\label{E:parCondsB3N3sol3b}
	\alpha_2\to 0,
\end{align}
cf.~the green leaf on Branch~III in Fig.~\ref{fig:reduction-tree1}.
This does not reintroduce any of the vanishing parameters, cf.~\eqref{E:shiftm12}--\eqref{E:shiftk222}.
Thus the potential is equivalent to one with $\alpha_2=0$, and the parameter-dependent generator in \eqref{E:Node3thirdsol} disappears.
Re-solving the determining equations with \eqref{E:l1112to0_2}, \eqref{E:l1111ne0l1122eq0}, \eqref{E:Node3l1222=l2222=0}, \eqref{E:parCondsB3N3sol3}, \eqref{E:Node3m22ne0}, and \eqref{E:parCondsB3N3sol3b} then yields the same algebra as in solution two, i.e.\ \eqref{E:LaB3N3sol2}.

If instead  
\begin{align}\label{E:Node3m22=0}
	m_{22}=0,
\end{align}
an inspection of \eqref{E:Node3sol3} or a re-calculation of the determining equations with  \eqref{E:l1112to0_2}, \eqref{E:l1111ne0l1122eq0}, \eqref{E:Node3l1222=l2222=0}, \eqref{E:parCondsB3N3sol3} and \eqref{E:Node3m22=0} yields the symmetry algebra
\begin{align}\label{E:L3sh2+u1}
	\La{sh}_2 \oplus \La{u}(1)_Y=\on{span}(\partial_{s_2},X_Y),
\end{align}
where $\La{sh}_2 \cong \N{R}$ is a shift algebra.

\paragraph{Leaf 4}
\label{sec:Node4}
We now consider the case where 
\begin{align}
	\lambda_{1222}=0,\quad \lambda_{2222}\ne 0,
\end{align}
in addition to \eqref{E:l1111ne0l1122eq0} and \eqref{E:bIIIka111to0} (which holds for all leaves of Branch III).
Since $\lambda_{2222}\ne 0$ we may set
\begin{align}
	\kappa_{222} \to 0,
\end{align}
by adjusting $\gamma_2$, cf.~\eqref{E:shiftk222}. We thus arrive at Leaf~4 in Fig.~\ref{fig:reduction-tree1}.
Solving the determining equations for the reduced potential at Leaf~4 shows that the symmetry algebra is trivial, \(\La{u}(1)_Y\).

\paragraph{Leaf 5}
\label{sec:Node5}
For the final case of Branch III, we assume
\begin{align}
	\lambda_{1222}\ne 0,
\end{align}
and we may set
\begin{align}
	\kappa_{122}\to 0,
\end{align}
by adjusting $\gamma_2$. Then the only solution again yields the trivial algebra $\La{u}(1)_Y$.

\subsubsection{Branch IV: $\lambda_{1111}=\lambda_{1122}=0$}
\label{sec:BranchFour}
For the final branch of potential reductions, Branch IV,
we assume
\begin{align}
	\lambda_{1111}=\lambda_{1122}=0,
\end{align}
cf.~Figs.~\ref{fig:reduction-tree1} and~\ref{fig:reduction-tree2}.

\paragraph{Leaf 6}
\label{sec:Leaf6}
Furthermore, if we assume
\begin{align}
	\lambda_{1222}=\lambda_{2222}=0,
\end{align}
we may set
\begin{align}
	\lambda_{12}\to 0,
\end{align}
through a specific choice of $\theta$, cf.~\eqref{E:O2trafol12} and the fact that $\int_{0}^{2\pi}\big(A\cos(2\theta)+ B\sin(2\theta)\big)\,d\theta=0$. This will not reintroduce any $\lambda_{ijkl}$, because all of these are zero in the first place. If we also assume
\begin{align}
	\lambda_{11}=\lambda_{22}= 0,
\end{align}
we can set
\begin{align}
	\kappa_{111}\to 0,
\end{align}
but if 
\begin{align}
	\kappa_{112}=\kappa_{122}=\kappa_{222}=0,
\end{align}
we can make a new choice for $\theta$ without introducing 
parameters $\kappa_{ijk}$. Furthermore, if 
\begin{align}\label{E:k1=k2=0Leaf6}
	\kappa_1=\kappa_2=0,
\end{align}
we may set
\begin{align}
	m_{12}\to 0,
\end{align}
i.e.~diagonalize the mass-squared matrix of the singlets through 
adjusting $\theta$. However, if 
\begin{align}\label{E:m11=m22=0Leaf6}
	m_{11}=m_{22}=0,
\end{align}
we may set
\begin{align}
	\alpha_1\to 0,
\end{align}
by a new choice of $\theta$ (or, alternatively, we could have set $\alpha_2\to 0$), and none of the vanishing parameters above will reappear. We have then ended up at Leaf~6 in Fig.~\ref{fig:reduction-tree2}.


\begin{figure}[htbp]
\centering
\scalebox{1.0}{
\begin{forest}
for tree={
  draw, rounded corners, thick,
  align=center, inner sep=4pt,
  l sep=10mm, s sep=8mm,
  edge={->, thick}
}
[{$\kappa_{222}=0$? \\ ($\lambda_{ijkl}=0, \lambda_{ij}=0, \kappa_{1jk}=0$)}, draw=blue, very thick
 [{$\kappa_1=0 \wedge \kappa_2=0\;$?}, edge label={node[midway, sloped, inner sep=1pt, fill=white, font=\small]{yes}} [{$m_{12}\to 0$ \\ ($\theta$ fixed)}, edge label={node[midway, sloped, inner sep=1pt, fill=white, font=\small]{yes}} [{$m_{11}=0$?}[{$m_{22}=0$?}, edge label={node[midway, sloped, inner sep=1pt, fill=white, font=\small]{yes}} [{$\alpha_1\to 0$\\ ($\theta$ fixed)}, draw=red, very thick, 
  label={[circle, draw, fill=yellow!30, inner sep=1pt, font=\scriptsize]
   north east:{6}}, edge label={node[midway, sloped, inner sep=1pt, fill=white, font=\small]{yes}}][{$\alpha_2 \to 0$ \\  ($\gamma_2$ fixed)}, draw=red, very thick, 
  label={[circle, draw, fill=yellow!30, inner sep=1pt, font=\scriptsize]
         north east:{7}}, edge label={node[midway, sloped, inner sep=1pt, fill=white, font=\small]{no}}]][{$\alpha_1\to 0$\\($\gamma_1$ fixed)}, edge label={node[midway, sloped, inner sep=1pt, fill=white, font=\small]{no}} [{$m_{22}=0$?} [{$\varnothing$}, draw=red, very thick, 
  label={[circle, draw, fill=yellow!30, inner sep=1pt, font=\scriptsize]
         north west:{8}}, edge label={node[midway, sloped, inner sep=1pt, fill=white, font=\small]{yes}}][{$\alpha_2\to 0$\\ ($\gamma_2$ fixed)}, draw=red, very thick, 
  label={[circle, draw, fill=yellow!30, inner sep=1pt, font=\scriptsize]
         north east:{9}}, edge label={node[midway, sloped, inner sep=1pt, fill=white, font=\small]{no}}]]] ]  ][{$\kappa_1\to 0$,\:$\kappa_2\ne 0$\\ ($\theta$ fixed)}, edge label={node[midway, sloped, inner sep=1pt, fill=white, font=\small]{no}}  [{$m_{11}=0$?} [{$m_{12}=0$?}, edge label={node[midway, sloped, inner sep=1pt, fill=white, font=\small]{yes}} [{$\varnothing$}, draw=red, very thick, 
  label={[circle, draw, fill=yellow!30, inner sep=1pt, font=\scriptsize]
         north west:{10}}, edge label={node[midway, sloped, inner sep=1pt, fill=white, font=\small]{yes}}][{$\alpha_1 \to 0$\\ ($\gamma_2$ fixed)}, draw=red, very thick, 
  label={[circle, draw, fill=yellow!30, inner sep=1pt, font=\scriptsize]
         north east:{11}}, edge label={node[midway, sloped, inner sep=1pt, fill=white, font=\small]{no}}]][{$\alpha_1 \to 0$\\ ($\gamma_1$ fixed)}, draw=red, very thick, 
  label={[circle, draw, fill=yellow!30, inner sep=1pt, font=\scriptsize]
         south east:{12}}, edge label={node[midway, sloped, inner sep=1pt, fill=white, font=\small]{no}}]] ]] 
[{$m_{22}\to 0$\\($\gamma_2$ fixed)}, edge label={node[midway, sloped, inner sep=1pt, fill=white, font=\small]{no}} [{$m_{11}=0$?} [{$\varnothing$}, draw=red, very thick, 
  label={[circle, draw, fill=yellow!30, inner sep=1pt, font=\scriptsize]
         south east:{13}}, edge label={node[midway, sloped, inner sep=1pt, fill=white, font=\small]{yes}}][{$\alpha_1\to 0$\\ ($\gamma_1$ fixed)}, draw=red, very thick, 
  label={[circle, draw, fill=yellow!30, inner sep=1pt, font=\scriptsize]
         north east:{14}}, edge label={node[midway, sloped, inner sep=1pt, fill=white, font=\small]{no}}]]] ]
\end{forest}
}
\caption{\textbf{Reduction tree, Branch IV (continued).}
Each path starting from the root in Fig.~\ref{fig:reduction-tree1} and terminating at a leaf (numbered and outlined in red) corresponds to a reduced potential.
For each leaf, the determining equations for the reduced potential are solved and the associated symmetries are derived.
The empty set $\varnothing$ denotes that no further reductions apply.}
\label{fig:reduction-tree2}
\end{figure}


We can then solve the determining equations under the parameter conditions for Leaf~6 given above, which yields
two solutions. In the first, all six parameters of the set $B_3$
are free, which corresponds to the 7-dimensional symmetry algebra \cite{Olver_1995}
\begin{align}\label{E:symAlgKinTermsKeq2}
	\La{a}(2) \oplus \La{u}(1)_Y = \on{span}(X_1,X_2,X_3,X_4,X_5,X_6,X_Y),
\end{align}
where
{\allowdisplaybreaks[0]
\begin{align}
	X_1 &=\partial_{s_1}, \\
	X_2 &=s_1\partial_{s_1}, \\
	X_3 &=s_2 \partial_{s_1}, \\
	X_4 &=\partial_{s_2}, \\
	X_5 &=s_1 \partial_{s_2}, \\
	X_6 &=s_2 \partial_{s_2},
\end{align}
}
which holds when the additional condition
\begin{align}
	\alpha_2=0,
\end{align}
is satisfied. Thus, the only parameters that need not vanish are
\begin{align}
	\mu, \lambda \in \N{R},
\end{align}
but since they may vanish, \eqref{E:symAlgKinTermsKeq2} is also the
symmetry algebra of the kinetic terms.
The affine algebra of the plane, $\La{a}(2)$ (also known as $\La{aff}(2)$), is the Lie algebra of the reparametrization group $\Lg{A}(2)$ (or $\Lg{Aff}(2)$) of the SM+2S. This is analogous to
the fact that the symmetry algebra $\La{su}(2)$ of the kinetic terms of the two-Higgs-doublet model (2HDM) is the Lie algebra of its reparametrization group $\Lg{SU}(2)$, although in this case only reparametrizations that preserve the form of the kinetic terms correspond to symmetries, whereas $\Lg{A}(2)$ will generally not preserve the form of the kinetic terms of the SM+2S, but will still correspond to symmetries of the Euler--Lagrange equations.

We can check the nature of the symmetries of the kinetic terms
$T$ (or the Lagrangian $\mc{L}_0$ with no singlet fields in the potential) by solving
\begin{align}\label{E:prXTSM2S}
	\on{pr}(k_iX_i)(\mc{L}_0)=k_2\partial_\mu s_1 \partial^\mu s_1
	+k_6\partial_\mu s_2 \partial^\mu s_2 +(k_3+k_5)\partial_\mu s_1\partial^\mu s_2=0,
\end{align}
which means that $X_1$, $X_4$ and $X_3-X_5$ (corresponding to $k_5=-k_3$), in addition to $X_Y$,
are strict variational symmetries for $\mc{L}_0$. Moreover,
by computing $E(\on{pr}(k_iX_i)(\mc{L}_0))$ we find that the only non-zero elements are
\begin{align}
	E_5(\on{pr}(k_iX_i)(\mc{L}_0))&= -2k_2 \partial_\mu \partial^\mu s_1
	-(k_3+k_5)\partial_\mu \partial^\mu s_2, \label{E:E5prXTSM2S} \\
	E_6(\on{pr}(k_iX_i)(\mc{L}_0))&= -2k_6 \partial_\mu \partial^\mu s_2
	-(k_3+k_5)\partial_\mu \partial^\mu s_1, \label{E:E6prXTSM2S}
\end{align}
hence there are no divergence symmetries for $\mc{L}_0$, since we get no new values for the parameters $k_i$ that set these expressions to zero.
However, in theories where a potential with non-vanishing linear terms is present, the symmetries $X_1$ and $X_4$ may be degraded to divergence symmetries, cf.~Corollary~\ref{C:charVarScalSyms} or Proposition~\ref{P:symSM+2S}.
Considering the Lagrangian $\mc{L}_0$ with no singlet fields in the potential, we obtain
\begin{align}
	\La{g}_\text{svar} = \La{g}_\text{var}= \spa(X_1,X_4,X_3-X_5,X_Y)
	= \La{e}(2)\oplus \La{u}(1)_Y,
\end{align}
where $\La{e}(2)$ is the Euclidean Lie algebra in two dimensions,\footnote{The Euclidean Lie algebra $\La{e}(2)$ can also be written as
$\N{R}^2 \rtimes \La{so}(2)$, where $\N{R}^2$ is spanned by the
two commuting shift generators.} while
\begin{align}
	\La{g}_\text{EL}=\spa(X_1,\ldots,X_6,X_Y)= \La{a}(2) \oplus \La{u}(1)_Y,
\end{align}
cf.~\eqref{E:symAlgIncl}. An explicit calculation of $\pr\!\big(k_iX_i\big)\,\big(E(\mc{L}_0)\big)$ for this model yields expressions containing only
terms proportional to $E_i(\mc{L}_0)$ for $i=5,6$ (i.e.~corresponding to the singlets), confirming that $\La{a}(2)$ is a
symmetry algebra of the associated field equations $E(\mc{L}_0)=0$.

The other solution to the determining equations under the parameter conditions for Leaf~6, valid when 
\begin{align}
	\alpha_2\ne 0,
\end{align}
 yields the symmetry algebra
\begin{align}
 \La{d}_4 \oplus \La{u}(1)_Y = \on{span}(X_1,X_2,X_4,X_5,X_Y).
\end{align}
Here we use the notation of \cite{andrada2005product,ovando2006four}.\footnote{The solvable Lie algebra $\La{d}_4$ can also be written as
$\N{R}^2 \rtimes \La{a}(1)$, where $\N{R}^2$ again is generated by the two shift symmetries.} 
Explicit calculations of $\on{pr}X(\mc{L})$, $E(\on{pr}X(\mc{L}))$
 and $\on{pr}X(E(\mc{L}))$ for a potential where $\alpha_2$ is the only non-vanishing parameter confirm that $X_1$ is a strict variational symmetry, $X_4$ is a divergence symmetry while $X_2$
and $X_5$ are non-variational symmetries of the field equations.

\paragraph{Leaf 7}
\label{sec:Leaf7}
If the parameter conditions follow the same path in Figures
\ref{fig:reduction-tree1} and \ref{fig:reduction-tree2} as that of Leaf~6, except that
\begin{align}
	m_{22}\ne 0,
\end{align}
in the next-to-final node (and hence $\alpha_1$ is still undetermined), we cannot adjust $\theta$ without
changing the parameters $m_{ij}$. However, we may eliminate
$\alpha_2$, i.e.~set
\begin{align}
	\alpha_2\to 0,
\end{align}
by choosing a specific value for $\gamma_2$, without changing
any other parameter, and we end up at Leaf~7 in Fig.~\ref{fig:reduction-tree2}.
We then obtain two possible symmetry algebras,
\begin{align}
\La{a}(1)_1\oplus \La{sc}_2 \oplus \La{u}(1)_Y &=\on{span}(X_1,X_2,X_6,X_Y), \quad \alpha_1=0, \\
\La{sh}_1\oplus \La{sc}_2 \oplus \La{u}(1)_Y &=\on{span}(X_1,X_6,X_Y), \quad \alpha_1\ne 0.
\end{align}

\paragraph{Leaf 8}
\label{sec:Leaf8}
Assume that the situation now is the same as before
\eqref{E:m11=m22=0Leaf6} in the paragraph on Leaf~6, but that now
\begin{align}
  m_{11}\ne 0.
\end{align}
Then we can choose a $\gamma_1$ such that
\begin{align}
  \alpha_1\to 0,
\end{align}
cf.~\eqref{E:shifta1}. If we also assume
\begin{align}\label{E:m22=0Leaf8}
  m_{22}=0,
\end{align}
we end up in Leaf~8 in Fig.~\ref{fig:reduction-tree2}, and can find the corresponding symmetries
by solving the determining equations for the parameter conditions described above. The two solutions correspond to symmetry algebras
\begin{align}
  \La{sc}_1 \oplus \La{a}(1)_2 \oplus \La{u}(1)_Y &= \spa(X_2,X_4,X_6,X_Y), \quad \alpha_2=0, \\
  \La{sc}_1 \oplus \La{sh}_2 \oplus \La{u}(1)_Y &= \spa(X_2,X_4,X_Y), \quad \alpha_2\ne 0.
\end{align}

\paragraph{Leaf 9}
\label{sec:Leaf9}
Assume conditions identical as for Leaf~8, only with
\begin{align}
  m_{22}\ne 0,
\end{align}
cf.~\eqref{E:m22=0Leaf8}. We may then choose a $\gamma_2$ such that
\begin{align}
  \alpha_2\to 0,
\end{align}
as implied by Leaf~9 in Fig.~\ref{fig:reduction-tree2}. The corresponding symmetries are
\begin{align}
  \La{gl}(2,\N{R}) \oplus \La{u}(1)_Y&= \spa(X_2,X_3,X_5,X_6,X_Y), \quad m_{11}=m_{22}, \label{E:Leaf9_gl2}\\
  \La{sc}_1 \oplus \La{sc}_2 \oplus \La{u}(1)_Y &= \spa(X_2,X_6,X_Y), \quad m_{11}\ne m_{22}. \label{E:Leaf9_sc1sc2}
\end{align}
An explicit calculation of \(\pr X(\mc{L})\) and \(E(\pr X(\mc{L}))\) for the case \eqref{E:Leaf9_gl2}, with
\(X=\sum_{i\in\sigma} k_i X_i\) and \(\sigma=\{2,3,5,6\}\), shows that the only variational subalgebra of
\(\La{gl}(2,\N{R})\) is the one-dimensional algebra spanned by \((X_3-X_5)\) (and this generator is strictly variational). Moreover, evaluating $\pr X\!\left(E(\mc{L})\right)$ yields only terms proportional to $E_i(\mc{L})$ for $i$
corresponding to the singlets, confirming that $\La{gl}(2,\N{R})$ is a symmetry algebra of the field equations.

\paragraph{Leaf 10}
\label{sec:Leaf10}
Let the parameter conditions be as above \eqref{E:k1=k2=0Leaf6} for Leaf~6, that is,
\begin{align}
  \lambda_{ijkl}=\lambda_{mn}=\kappa_{opq}=0,
\end{align}
for all indices, but now assume that \eqref{E:k1=k2=0Leaf6} does not hold, i.e.~at least one $\kappa_i$ is non-zero.
Then we can choose a $\theta$ such that
\begin{align}
  \kappa_1 \to 0, \quad \kappa_2 \ne 0.
\end{align}
Subsequently, assume  
\begin{align}
  m_{11}=m_{12}=0,
\end{align}
and we arrive at Leaf~10 in Fig.~\ref{fig:reduction-tree2}. The corresponding symmetries are then
\begin{align}
  \La{a}(1)_1 \oplus \La{u}(1)_Y &= \spa(X_1,X_2,X_Y), \quad \alpha_1=0, \\
  \La{sh}_1 \oplus \La{u}(1)_Y &= \spa(X_1,X_Y), \quad \alpha_1\ne 0.
\end{align}

\paragraph{Leaf 11}
\label{sec:Leaf11}
Assume exactly the same parameter conditions as for Leaf~10, but with
\begin{align}
  m_{12}\ne 0.
\end{align}
We may then pick a $\gamma_2$ such that
\begin{align}
  \alpha_1\to 0,
\end{align}
which corresponds to the path of parameter assumptions leading to Leaf~11. The resulting symmetry algebra is then just $\La{u}(1)_Y$.

\paragraph{Leaf 12}
\label{sec:Leaf12}
Assume the same parameter conditions as for Leaf~10, only with
\begin{align}
  m_{11}\ne 0, \quad m_{12}\in \N{R},
\end{align}
which means there are no assumptions on $m_{12}$. Since $m_{11}\ne 0$ we may shift $s_1$, i.e.~choose a $\gamma_1$
such that
\begin{align}
  \alpha_1\to 0,
\end{align}
which corresponds to Leaf~12. Then the symmetries are given by
\begin{align}
  \La{sc}_1\oplus \La{u}(1)_Y &= \spa(X_2,X_Y), \quad m_{12}=0, \\
  \La{u}(1)_Y &= \spa(X_Y), \quad m_{12}\ne 0.
\end{align}

Leaves~13--31 of the reduction tree, shown in Figs.~\ref{fig:reduction-tree1}--\ref{fig:reduction-tree2},
do not yield any new symmetry algebras beyond those already appearing in Leaves~1--12, and are therefore
presented in Appendix~\ref{app:Leaves13to31}.

The following proposition determines the variational symmetry generators that emerge from the analysis of the reduction tree:
\begin{proposition}
\label{P:symSM+2S}
All symmetry algebras $\La{g}$ that are realizable in the SM+2S
are subalgebras of the symmetry algebra of the kinetic terms,
$\La{a}(2)\oplus \La{u}(1)_Y=\spa(X_1,\ldots,X_6,X_Y)$. Moreover, all elements of any realizable symmetry algebra $\La{g}$
in the SM+2S are non-variational, except for
\begin{enumerate}[label=(\roman*)]
  \item $X_1=\partial_{s_1}$, which is a strict variational symmetry if $\alpha_1=0$, and a divergence symmetry if $\alpha_1\ne 0$.
  \item $X_4=\partial_{s_2}$, which is a strict variational symmetry if $\alpha_2=0$, and a divergence symmetry if $\alpha_2\ne 0$.
  \item $X_3-X_5=s_2\partial_{s_1}-s_1\partial_{s_2}$, which is always a strict variational symmetry.
  \item $X_Y=-\phi_2 \partial_{\phi_1}+\phi_1 \partial_{\phi_2}-\phi_4\partial_{\phi_3}+\phi_3\partial_{\phi_4}$, which is always a strict variational symmetry.
\end{enumerate}
\end{proposition}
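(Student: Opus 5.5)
The plan has two parts. First, the subalgebra claim will follow from the reduction-tree analysis. Second, the variational type of each generator will be settled with Corollary~\ref{C:charVarScalSyms} and Proposition~\ref{P:AffRepar}.

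For the first part, I will recall that in every leaf (1--31) the determining equations already force the characteristic to have the form \eqref{E:etaAffineSM2S}, with only $B_3=\{a_5,a_6,b_{55},b_{56},b_{65},b_{66}\}$ and $b_{34}$ (with $b_{12}=-b_{21}=-b_{43}=b_{34}$) surviving. The resulting generator is therefore $a_5X_1+a_6X_4+b_{55}X_2+b_{56}X_3+b_{65}X_5+b_{66}X_6+b_{34}X_Y$, which lies in $\La{a}(2)\oplus\La{u}(1)_Y$. Every realizable algebra is obtained at some leaf up to an orthogonal affine reparametrization. Such a reparametrization maps the affine generators $\eta=Bs+a$ on singlet space to affine generators, by \eqref{E:etaTyT}, and fixes $X_Y$. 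So the inclusion holds in every basis.

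For the second part, I will take a general element $X=\sum_i k_iX_i+k_YX_Y$ and apply Corollary~\ref{C:charVarScalSyms}. Its hypotheses need checking first.
\begin{itemize}
\item $T$ consists of the gauge kinetic terms and the covariant-derivative terms. Each term is either at least linear in derivatives or at least quadratic in the gauge fields $\varphi^c$, since a term like $\Phi^\dagger W W\Phi$ is quadratic in $W$.
\item $E(\mc L)=0$ has no polynomial consequences. The Euler--Lagrange equations contain second-derivative terms $\partial_\mu\partial^\mu\phi_i$ that can be prescribed freely at a point, so no relation $p(\phi)=0$ can follow.
\end{itemize}

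With the hypotheses in place, the computation \eqref{E:prXTSM2S} extends to the full $T$ with $X_Y$ added, using $\on{pr}X_Y(T)=0$ as in \eqref{E:pr1ofallsymonTSMS}. This gives
\[
\on{pr}X(T)=k_2\,\partial_\mu s_1\partial^\mu s_1+k_6\,\partial_\mu s_2\partial^\mu s_2+(k_3+k_5)\,\partial_\mu s_1\partial^\mu s_2 .
\]
Its Euler operator, \eqref{E:E5prXTSM2S}--\eqref{E:E6prXTSM2S}, vanishes iff $k_2=k_6=0$ and $k_3=-k_5$, and in that case $\on{pr}X(T)=0$ identically.

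By Corollary~\ref{C:charVarScalSyms}(iv), the variational elements of any realizable $\La g$ are therefore exactly those in $\spa(X_1,X_4,X_3-X_5,X_Y)$, and everything else is non-variational. Among the variational ones $\on{pr}X(T)=0$, so part (i) of the Corollary says $X$ is strictly variational iff $a_i\alpha_i=k_1\alpha_1+k_4\alpha_2=0$, and otherwise it is a divergence symmetry. This gives (i)--(iv) directly:
\begin{itemize}
\item $X_1$ has $a=(1,0)$ on the singlets, so it is strictly variational iff $\alpha_1=0$.
\item $X_4$ has $a=(0,1)$, so it is strictly variational iff $\alpha_2=0$.
\item $X_3-X_5$ and $X_Y$ have $a=0$, so they are always strictly variational.
\end{itemize}

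The main subtlety is the phrase ``non-variational'' for elements outside this span. The claim should be read as: an element is variational only if it lies in the span. It is not a claim that each basis element other than those four is individually non-variational regardless of the combination. I will state it in that form.

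A second point concerns generators expressed in a reduced basis versus the original one. This is handled by Proposition~\ref{P:AffRepar} together with the invariance $\tilde a^T\tilde\alpha=a^T\alpha$ shown above. The shift part $\gamma$ alters $\alpha$ but also alters $a$ consistently, because the symmetry type itself is preserved under the reparametrization.
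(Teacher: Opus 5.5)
Your proposal is correct and follows essentially the paper's argument: the kinetic-term computation \eqref{E:prXTSM2S}--\eqref{E:E6prXTSM2S} combined with Corollary~\ref{C:charVarScalSyms}(i),(ii),(iv). The differences are minor. You derive the subalgebra claim a priori from the potential-independent determining equations of Section~\ref{sec:DeterminingEquationsSM2S}, whereas the paper inspects the leaves. You also treat arbitrary linear combinations $\sum_i k_iX_i+k_YX_Y$ explicitly, so the exceptions are correctly read as the span of the four listed generators, a point the paper's proof leaves implicit.
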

\begin{proof}
Inspecting the results from all leaves shows that all realized symmetry algebras are subalgebras of
$\La{a}(2)\oplus \La{u}(1)_Y$. For variational symmetries, this also follows {a priori} from
Proposition~\ref{P:prXannTannVconv}. All realized symmetry algebras may be written as the span of some of the
elements in the set
\begin{align}
	 \aleph=\{X_1,\ldots,X_6,X_Y,X_3-X_5\}.
\end{align}
The calculations in \eqref{E:prXTSM2S}--\eqref{E:E6prXTSM2S} showed that $X_1$, $X_4$, and $X_3-X_5$ were the only
strict variational symmetries of the kinetic terms of the SM+2S, while the other elements of $\aleph$ were
non-variational, except for $X_Y$. The symmetry $X_Y$ is strictly variational for any SM+2S, independent of the
exact potential $V$.

By Corollary~\ref{C:charVarScalSyms}(i) (or Theorem~\ref{T:prXannTannV}(i)), $X_3-X_5$ is strictly variational, since
$\on{pr}(X_3-X_5)(T)=0$ by \eqref{E:prXTSM2S} and $a_1=a_2=0$, as there are no shifts (i.e.~no $X_1$ or $X_4$)
involved in the generator. Moreover, because $X_1$ and $X_4$ are strict variational symmetries of the kinetic terms,
we have $\on{pr}X_i(T)=0$ for $i=1,4$. Since $a_1=1$ and $a_2=0$ for $X_1$, Corollary~\ref{C:charVarScalSyms}(i),(ii)
shows that $X_1$ is strictly variational if $\alpha_1=0$ and a divergence symmetry if $\alpha_1\ne 0$. For $X_4$,
$a=(0,1)^T$, and Corollary~\ref{C:charVarScalSyms}(i),(ii) yields, in the same manner, that $X_4$ is strictly
variational if $\alpha_2=0$ and a divergence symmetry if $\alpha_2\ne 0$.
\end{proof}

\subsection{Symmetry classification in SM+2S}
\label{sec:ClassificationSymmetriesSM2S}

Orthogonal affine reparametrizations \(\tilde{s}=Os+\gamma\), where \(O\) is orthogonal,
preserve the canonical form of the kinetic terms. We denote symmetry generators and
symmetry algebras that can be mapped into each other under such reparametrizations as
(orthogonally) equivalent; cf.~Section~\ref{sec:AffineReparametrizations}. The equivalence
(or non-equivalence) of the symmetry algebras found in
Section~\ref{sec:ParameterCasesAndReductionsOfTheSM2SPotential} can then be decided by the
following principles, in addition to Proposition~\ref{P:symSM+2S} above:
\begin{itemize}
  \item Equivalent symmetry algebras must be isomorphic (in particular, they must have the
  same dimension); cf.~Proposition~\ref{P:AffRepar}.
	 \item A symmetry generator \(X\) is mapped to a symmetry generator \(\tilde{X}\) of the
  same type (SVS/DS/NVS); cf.~Proposition~\ref{P:AffRepar}.
\item Consider an affine scalar symmetry of the form \(\eta = Bs + a\).
Under orthogonal reparametrizations \(\tilde{s}=Os+\gamma\), the corresponding pairs \((B,a)\)
transform according to \eqref{E:orthEquivSym}. For convenience, we restate \eqref{E:orthEquivSym}:
\begin{equation*}
  k B_1 = O B_2 O^{T}, \qquad
  k a_1 = - O B_2 O^{T}\gamma + O a_2 .
\end{equation*}
Hence orthogonally equivalent linear parts are related by orthogonal similarity up to the overall
factor \(k\), implying that \(\rank(B)\) is preserved, and that the eigenvalues agree up to the same
proportionality factor \(k\).
\end{itemize}

\subsubsection{Inequivalent symmetry algebras of the field equations}
\label{sec:SymmetryAlgebrasFieldEquations}

We now consider the inequivalent Euler--Lagrange symmetry algebras \(\La{g}_\text{EL}\) that are realized by some SM+2S model.
As we have seen, there are many models for which the trivial \(\La{u}(1)_Y\) algebra is the
maximal symmetry algebra. The realizable, non-trivial maximal symmetry algebras of the
SM+2S are as follows:

\paragraph{(1+1)d algebras}
\label{sec:11DAlgebras}

In the following, we suppress the ubiquitous \(\La{u}(1)_Y\) summand, which is present in all
symmetry algebras. There are then five possible \((1+1)\)d algebras, specified by their
non-trivial \(1\)d summands,
\begin{align}
  \La{sh}_1,\; \La{sh}_2,\; \La{sc}_1,\; \La{sc}_2,\; \La{so}(2).
\end{align}

For \(\La{sh}_1\) and \(\La{sh}_2\), the corresponding matrices \(B_i\) (\(i=1,2\)) both vanish
(and both symmetries are variational; cf.~Proposition~\ref{P:symSM+2S}). Moreover, for
\(\La{sh}_2\) we have \(\eta_2=a_2=(0,1)^T\). Since
\begin{align}\label{E:sh2tosh1}
  a_1=(1,0)^T = O a_2,
\end{align}
cf.~\eqref{E:orthEquivSym} with \(k=1\) (and $\gamma$ arbitrary), for
\begin{align}\label{E:m01-10}
  O = \begin{pmatrix}
    0 & 1 \\
    -1 & 0
  \end{pmatrix},
\end{align}
the two symmetries are orthogonally equivalent. That is, \(\La{sh}_2\) is mapped to
\(\La{sh}_1\) under the reparametrization defined by \(O\), and we write
\begin{align}
  \La{sh}_1 \eqsim_O \La{sh}_2 \eqsim_O \La{sh},
\end{align}
where \(\La{sh}\) is used as a common representative.

The algebras \(\La{sc}_1\) and \(\La{sc}_2\) are both non-variational; cf.~Proposition~\ref{P:symSM+2S}.
The corresponding matrices
\begin{align}\label{E:B1B2scs}
  B_1=\begin{pmatrix}
    1 & 0 \\
    0 & 0
  \end{pmatrix}, \quad
  B_2=\begin{pmatrix}
    0 & 0 \\
    0 & 1
  \end{pmatrix},
\end{align}
have the same spectrum \(\{0,1\}\) (and hence proportional spectra for \(k=1\)), which is a necessary condition for orthogonal equivalence.
In fact, they are orthogonally equivalent, since
\begin{align}\label{E:B1=OB2OT}
  B_1 = O B_2 O^{T},
\end{align}
with \(a=\gamma=0\) and \(O\) given in \eqref{E:m01-10}; cf.~\eqref{E:orthEquivSym} with \(k=1\).
Thus,
\begin{align}
  \La{sc}_1 \eqsim_O \La{sc}_2 \eqsim_O \La{sc}.
\end{align}
In contrast, \(\La{sc} \neqsim_O \La{sh}\): for \(\La{sh}\) one has \(B=0\), whereas for
\(\La{sc}\) the matrix \(B\) has a nonzero eigenvalue, so the spectra cannot be proportional.
Moreover, \(\La{sh}\) is variational while \(\La{sc}\) is non-variational.

The last \((1+1)\)d algebra has a non-trivial summand \(\La{so}(2)\), and cannot be equivalent to
any of the other algebras, since the corresponding matrix is
\begin{align}
  B_{\La{so}(2)}=\begin{pmatrix}
    0 & 1 \\
    -1 & 0
  \end{pmatrix},
\end{align}
with eigenvalues \(\lambda=\pm i\), which cannot be proportional (for any \(k\in \N{R}\)) to the
real eigenvalues occurring in the other \((1+1)\)d cases. Hence, the three possible inequivalent
\((1+1)\)d symmetry algebras of the SM+2S are
\begin{align}\label{E:1pl1ELAlgebras}
  \La{sh}\oplus \La{u}(1)_Y,\; \La{sc}\oplus \La{u}(1)_Y,\; \La{so}(2)\oplus \La{u}(1)_Y.
\end{align}
Note that all algebras in \eqref{E:1pl1ELAlgebras} are abstractly isomorphic to \(\N{R}^2\),
but they induce inequivalent actions on the fields (up to orthogonal affine reparametrizations),
and are therefore not equivalent.

\paragraph{(2+1)d algebras}
\label{sec:21DAlgebras}

Omitting the ubiquitous \(\La{u}(1)_Y\) summand, the possible \((2+1)\)d algebras are
\begin{align}
  \La{a}(1)_1,\; \La{a}(1)_2,\; \La{sh}_1\oplus \La{sc}_2,\;
  \La{sc}_1\oplus \La{sh}_2,\; \La{sc}_1\oplus \La{sc}_2.
\end{align}
The only candidates for equivalence are the pair \(\La{a}(1)_i\) (they are mutually isomorphic, but not
isomorphic to the remaining algebras), and the pair of algebras of the form
\(\La{sh}\oplus\La{sc}\), since they each contain one variational and one non-variational summand;
cf.~Proposition~\ref{P:symSM+2S}.

Now, the matrix \(O\) given in \eqref{E:m01-10}, with \(k=1\) and \(\gamma=a=0\), maps \(B_2\) to
\(B_1\), cf.~\eqref{E:B1=OB2OT}, and hence maps the scaling generator \(s_2\partial_{s_2}\) to
\(s_1\partial_{s_1}\). Moreover, the same choice of \(O\), \(k\), and \(\gamma\) maps the shift
generator \(\partial_{s_2}\) to \(\partial_{s_1}\) by \eqref{E:sh2tosh1}. Hence,
\begin{equation}\label{E:a11_a12_equiv}
  \La{a}(1)_1 \eqsim_O \La{a}(1)_2 .
\end{equation}
The two algebras of the form \(\La{sh}\oplus\La{sc}\) are equivalent in a similar manner: With the
same \(O\) and \(\gamma=0\) one has \(\partial_{s_2}\mapsto \partial_{s_1}\) and
\(s_1\partial_{s_1}\mapsto s_2\partial_{s_2}\), hence
\begin{align}\label{E:shsc_equiv}
  \La{sh}_1\oplus \La{sc}_2\eqsim_O \La{sc}_1\oplus \La{sh}_2.
\end{align}

Thus, there are three inequivalent realizable \((2+1)\)d symmetry algebras in the SM+2S,
\begin{align}
  \La{a}(1)_1\oplus \La{u}(1)_Y,\qquad
  \La{sh}_1\oplus \La{sc}_2\oplus \La{u}(1)_Y,\qquad
  \La{sc}_1\oplus \La{sc}_2\oplus \La{u}(1)_Y.
\end{align}
We here retain the indices to avoid confusion with other implementations of the same abstract Lie-algebra type,
e.g.~\(\La{a}(1)=\spa(s_1\partial_{s_1},\, s_1\partial_{s_2})\).

\paragraph{(3+1)d algebras}
\label{sec:31DAlgebras}
In our analysis, we found only one \((3+1)\)d algebra of Euler--Lagrange type, namely
\begin{align}
  \La{a}(1)_1\oplus \La{sc}_2 \oplus \La{u}(1)_Y \, .
\end{align}

\paragraph{(4+1)d algebras}
\label{sec:41DAlgebras}
We found two \((4+1)\)d algebras,
\begin{align}
  \La{d}_4\oplus \La{u}(1)_Y,\qquad \La{gl}(2,\N{R})\oplus \La{u}(1)_Y,
\end{align}
which are inequivalent since they are not isomorphic. Moreover, \(\La{d}_4\) contains two
variational generators, whereas \(\La{gl}(2,\N{R})\) contains none.

\paragraph{(6+1)d algebras}
\label{sec:61DAlgebras}
There are no \((5+1)\)d symmetry algebras in the SM+2S. However, there is a \((6+1)\)d algebra,
namely the symmetry algebra of the kinetic terms,
\begin{align}
  \La{a}(2)\oplus \La{u}(1)_Y .
\end{align}

\subsubsection{Inequivalent variational symmetry algebras}
\label{sec:InequivalentVariationalSymmetryAlgebras}

According to Proposition~\ref{P:symSM+2S}, the variational symmetry generators are
\(X_1\), \(X_4\), \(X_3-X_5\), and \(X_Y\). Hence, the realizable variational symmetry algebras
\(\La{g}_\text{var}\) are precisely the subalgebras generated by these elements (with
\(\La{u}(1)_Y=\spa(X_Y)\) present in all cases). Inspecting the realized symmetry algebras of the
Euler--Lagrange equations listed in Section~\ref{sec:SymmetryAlgebrasFieldEquations}, we conclude that
the only possible inequivalent variational symmetry algebras of the SM+2S are
\begin{align}\label{E:VarSymAlgsSM2S}
  \La{sh}\oplus \La{u}(1)_Y,\qquad
  \La{so}(2)\oplus \La{u}(1)_Y,\qquad
  \La{e}(2)\oplus \La{u}(1)_Y,\qquad
  \La{sh}_1\oplus \La{sh}_2\oplus \La{u}(1)_Y.
\end{align}
Here, the first two algebras are realized as maximal symmetry algebras for suitable parameter
choices in the SM+2S. In contrast, \(\La{e}(2)\oplus \La{u}(1)_Y\) and
\begin{align}\label{E:shsq}
  \La{sh}^2 \equiv \La{sh}_1\oplus \La{sh}_2 \cong \N{R}^2
\end{align}
only occur as maximal variational subalgebras of larger symmetry algebras of the Euler--Lagrange
equations, namely \(\La{a}(2)\) and \(\La{d}_4\) (Leaf~6), respectively.

\subsubsection{Inequivalent strict variational symmetry algebras}
\label{sec:InequivalentStrictlyVariationalSymmetryAlgebras}

We now determine which of the variational symmetry algebras realized in the SM+2S,
cf.~\eqref{E:VarSymAlgsSM2S}, can also be realized as strict variational symmetry algebras
\(\La{g}_\text{svar}\). The algebra \(\La{sh}\oplus \La{u}(1)_Y\) exists in strict variational
form; for instance, \(\alpha_2\) may vanish for \(\La{sh}_2\oplus \La{u}(1)_Y\) at Leaf~3,
cf.~\eqref{E:L3sh2+u1} and Proposition~\ref{P:symSM+2S}. Moreover,
\(\La{so}(2)\oplus \La{u}(1)_Y\) is always strictly variational, since both generators are strictly
variational; cf.~Proposition~\ref{P:symSM+2S}. The algebra \(\La{e}(2)\oplus \La{u}(1)_Y\) is the
largest strict variational subalgebra of the symmetry algebra of the kinetic terms, as noted at
Leaf~6.
Finally, \(\La{sh}^2\equiv \La{sh}_1\oplus \La{sh}_2\) in \eqref{E:shsq} occurs only as a maximal
variational subalgebra of \(\La{d}_4\) at Leaf~6, under the condition \(\alpha_2\neq 0\). In this
case, Proposition~\ref{P:symSM+2S} implies that \(X_4=\partial_{s_2}\) necessarily generates a divergence symmetry,
so the maximal strict variational subalgebra of \(\La{d}_4\) is just \(\La{sh}_1\). 

Thus, the
possible inequivalent maximal strict variational symmetry algebras realizable in an SM+2S model
are
\begin{align}\label{E:SvarSymAlgsSM2S}
  \La{sh}\oplus \La{u}(1)_Y,\qquad
  \La{so}(2)\oplus \La{u}(1)_Y,\qquad
  \La{e}(2)\oplus \La{u}(1)_Y.
\end{align}
Only the first two can coincide with the full symmetry algebra \(\La{g}_\text{EL}\) of a model,
since \(\La{e}(2)\oplus \La{u}(1)_Y\) occurs only as the largest strict variational subalgebra of
\(\La{g}_\text{EL}=\La{a}(2)\oplus \La{u}(1)_Y\), the symmetry algebra of the kinetic terms;
cf.~Leaf~6.

\subsection{Algorithm for determining SM+2S symmetry algebras}
\label{sec:ASimpleAlgorithmForDecidingSM2SSymmetry}

Figures~\ref{fig:reduction-tree0}--\ref{fig:reduction-tree2}, together with the key node/leaf
analyses in Section~\ref{sec:ParameterCasesAndReductionsOfTheSM2SPotential} and
Appendix~\ref{app:Leaves13to31}, yield the following efficient algorithm for determining, for any
given numerical instance of an SM+2S model (and, in some cases, also for potentials with a small
number of undetermined parameters), the maximal symmetry algebra of each type:
\begin{enumerate}
  \item Start at the root of Figures~\ref{fig:reduction-tree0}--\ref{fig:reduction-tree1}, and eliminate
  \(\lambda_{1112}\) by an \(\Lg{SO}(2)\) rotation.

  \item Recompute all potential parameters using
  eqs.~\eqref{E:trafoParametersSO2}--\eqref{E:trafoParametersSO2lambda2222} or
  eqs.~\eqref{E:shifta1}--\eqref{E:shiftk222}, employing the same rotation or shift as in the
  previous step.

\item Follow the appropriate path in Figures~\ref{fig:reduction-tree0}--\ref{fig:reduction-tree2}
until reaching a node whose content prescribes one or two reparametrizations of the form
\(p\to 0\) (for some parameter \(p\)).

\item If the node is non-key (i.e.\ neither outlined in red nor numbered), perform the parameter
elimination indicated at the node via an appropriate rotation or shift, and return to Step~2.
Otherwise, proceed.

  \item At a key node \(n\), use the corresponding key node analysis (Leaf~\(n\) for \(n\neq 3\), or
  Node~3) in Section~\ref{sec:ParameterCasesAndReductionsOfTheSM2SPotential} and
  Appendix~\ref{app:Leaves13to31} to determine the admissible maximal symmetry algebra(s)
  \(\La{g}_\text{EL}\): If the key node admits a unique algebra, conclude that algebra. If the key
  node is marked with \(\varnothing\), determine the algebra by comparing the parameters with the parameter conditions at the key node. Otherwise, perform the elimination(s)
  indicated at the key node, recompute the parameters, and determine the algebra from the
  recalculated potential parameters.

  \item Conclude that the maximal variational symmetry algebra \(\La{g}_\text{var}\) for the potential
  is generated by \(\La{g}_\text{EL}\cap \spa(W)\), where \(W=\{X_1,X_4,X_3-X_5,X_Y\}\).

  \item Conclude that the maximal strict variational symmetry algebra \(\La{g}_\text{svar}\) for the
  potential is generated by \(\La{g}_\text{EL}\cap \spa(W')\), where
  \(W'=\{X_1\delta_{0,{\alpha}_1},\,X_4\delta_{0,{\alpha}_2},\,X_3-X_5,\,X_Y\}\), for recalculated parameters ${\alpha}_i$.
\end{enumerate}
In the last step, \(X_1\) is excluded from $W'$ if the linear parameter \(\alpha_1\neq 0\),
cf.~Proposition~\ref{P:symSM+2S}(i), and similarly for \(X_4\). For this final step, the parameters
\(\alpha_i\) may sometimes have to be recomputed after performing the elimination prescribed at the
key node. This is unnecessary if \(\La{g}_\text{EL}\) does not contain \(X_1\) or \(X_4\).

In a discrete parameter scan one will generically not satisfy exact equalities such as \(p=0\),
unless \(p\) can be eliminated by a reparametrization. One can therefore introduce a
tolerance \(t>0\) and treat \(|p|<t\) as zero; see, e.g., \cite{Plantey:2024yfm}.

\section{Summary and outlook}
\label{sec:SummaryAndOutlook}
We have performed a classification of the scalar Lie point symmetries for the SM+S and SM+2S models.
For the SM+S, we found four inequivalent realizable scalar Lie point symmetry algebras
$\La{g}_\text{EL}$ of the Euler--Lagrange equations,
\begin{align}
  \La{u}(1)_Y,\; \La{sh}\oplus \La{u}(1)_Y,\; \La{sc}\oplus \La{u}(1)_Y,\;  \La{a}(1)\oplus \La{u}(1)_Y.
\end{align}
The inequivalent realizable variational scalar Lie point symmetry algebras $\La{g}_\text{var}$ and strict variational
scalar Lie point symmetry algebras $\La{g}_\text{svar}$ of the SM+S were
\begin{align}
  \La{u}(1)_Y  \quad \text{and}  \quad \La{sh}\oplus \La{u}(1)_Y.
\end{align}
Here the shift algebra $\La{sh}$ occurs as a strict variational symmetry algebra only when it is a
subalgebra of the larger non-variational symmetry algebra $\La{a}(1)$.
A simple algorithm for determining the symmetry algebras of a given SM+S model instance was provided in Section~\ref{sec:ASimpleAlgorithmForDecidingSMSSymmetry}.

We found 11 inequivalent realizable scalar Lie point symmetry algebras \(\La{g}_\text{EL}\) of the Euler--Lagrange equations of the SM+2S model. They can be written as\begin{gather}\label{E:summarySM2SEL}
\begin{aligned}
&\La{u}(1)_Y,\\
&\La{sh}\oplus \La{u}(1)_Y,\quad
 \La{sc}\oplus \La{u}(1)_Y,\quad
 \La{so}(2)\oplus \La{u}(1)_Y,\\
&\La{a}(1)\oplus \La{u}(1)_Y,\quad
 \La{sc}^2\oplus \La{u}(1)_Y,\quad
 \La{sh}\oplus \La{sc}\oplus \La{u}(1)_Y,\\
&\La{a}(1)\oplus \La{sc}\oplus \La{u}(1)_Y,\\
&\La{d}_4\oplus \La{u}(1)_Y,\quad
 \La{gl}(2,\N{R})\oplus \La{u}(1)_Y,\\
&\La{a}(2)\oplus \La{u}(1)_Y.
\end{aligned}
\end{gather}
Some of these algebras are isomorphic as abstract Lie algebras, yet their actions on the scalar fields are inequivalent; that is, their concrete vector-field realizations are not related by an orthogonal affine reparametrization of the scalar fields.

The inequivalent realizable variational scalar Lie point symmetry algebras $\La{g}_\text{var}$ were 
\begin{align}\label{E:summarySM2Svar}
 \La{u}(1)_Y,\; \La{sh}\oplus \La{u}(1)_Y,\; \La{so}(2)\oplus \La{u}(1)_Y,\;
  \La{e}(2)\oplus \La{u}(1)_Y,\; \La{sh}^2\oplus \La{u}(1)_Y,
\end{align}
where the last two occur only as variational subalgebras of larger symmetry algebras $\La{g}_\text{EL}$ of the field equations.

The inequivalent realizable strict variational scalar Lie point symmetry algebras $\La{g}_\text{svar}$ of the SM+2S were
\begin{align}\label{E:summarySM2Ssvar}
 \La{u}(1)_Y,\; \La{sh}\oplus \La{u}(1)_Y,\; \La{so}(2)\oplus \La{u}(1)_Y,\;
  \La{e}(2)\oplus \La{u}(1)_Y,
\end{align}
where the last algebra occurs only as a strict variational subalgebra of a larger symmetry algebra $\La{g}_\text{EL}$ of the field equations.
We note that we would have missed the variational symmetry algebra $\La{sh}^2\oplus \La{u}(1)_Y$ if we had restricted our analysis to strict variational symmetries. 

To distinguish between different types of symmetry generators and symmetry algebras, we proved in Corollary~\ref{C:charVarScalSyms} a characterization of strict variational (SVS), divergence (DS), and non-variational (NVS) generators, valid for a wide class of Lagrangians with potentials. In
Proposition~\ref{P:symSM+2S} we applied this characterization to the SM+2S to determine the
SVS/DS/NVS nature of its symmetry generators. Moreover, in Proposition~\ref{P:AffRepar} we showed
that this classification is preserved under affine reparametrizations.

Section~\ref{sec:ASimpleAlgorithmForDecidingSM2SSymmetry} furthermore presented an efficient
algorithm for identifying the scalar Lie point symmetry algebras of the SM+2S model for any given
numerical choice of the potential parameters (or, in some cases, with a few parameters left
undetermined), without explicitly solving the determining equations. This parameter-based identification can be useful for numerical parameter scans.

The automorphism groups of the Lie point symmetry algebras found in this work may be exploited to determine the realizable discrete symmetries of the SM+S and SM+2S models; see
\cite{hydon1999use,hydon2000construct}. Such an analysis lies outside the scope of the present work.

For classifications of Lie point symmetries in larger models such as
SM+\(K\)S with \(K>2\), more indirect methods may provide useful support for
the direct reduction-tree approach, given the increasing number of parameters
and the greater number of branches in the reduction tree
(cf.~Fig.~\ref{fig:reduction-tree1} for \(K=2\)).
One possible strategy (at least for variational symmetries) is to first determine the symmetry
algebra $\La{g}$ of the kinetic terms, and then identify the realizable Lie symmetry algebras of
the full Lagrangian by choosing one representative subalgebra $\La{h}\subset\La{g}$ from each class of subalgebras
equivalent under reparametrizations, and determining whether $\La{h}$ is
realizable for some choice of potential parameters. In general, not all such subalgebras are
realizable. For instance, in the SM+2S the algebra $\La{k}\oplus \La{u}(1)_Y$, with the boost
$\La{k}=\spa(s_2\partial_{s_1}+s_1\partial_{s_2})$, is not equivalent (under reparametrizations) to any of
the algebras \eqref{E:summarySM2SEL}--\eqref{E:summarySM2Ssvar} above (the associated linear part
$B$ of $\La{k}$ has eigenvalues $\pm 1$) and is therefore not realizable.
However, this route would not yield algorithms for deciding the symmetry algebras without invoking explicit symmetry calculations.

Another approach for models with singlets is to restrict attention to
tree-level potentials with a stationary point and to expand about such a
point, thereby eliminating the linear (tadpole) singlet terms. Nevertheless,
this would exclude tree-level potentials lacking a stationary point even
though the corresponding effective potential may have one, resulting in a less
general analysis. Moreover, expanding about such a stationary point generally leads to a more
complicated kinetic/gauge sector; see Sections~\ref{sec:LinearTermsSMS} and
\ref{sec:LinearTerms} for a discussion of both caveats.


\appendix

\section{Proofs of the technical results in Section~\ref{sec:GeneralResultsforSymmetryClassification}}
\label{app:TechnicalProofs}
This appendix collects the proofs of the technical results stated in
Section~\ref{sec:GeneralResultsforSymmetryClassification}. The proof of
Theorem~\ref{T:prXannTannV} uses the following standard definition of the
adjoint Fréchet derivative:

For an \(r\)-tuple of differential functions \(P=(P_1,\ldots,P_r)\), the
\emph{adjoint Fréchet derivative} \(D_P^\ast\) is the \(q\times r\) matrix
differential operator defined by \cite{olver1998applications}
\begin{align}\label{E:FrechetAdj}
  (D_P^\ast)_{ij}
  = \sum_{J} (-1)^{|J|} D_J \circ \frac{\partial P_j}{\partial y^i_J}\,,
\end{align}
where $D_J$ is given by \eqref{E:itTotder} and acts on the product of $\partial P_j/\partial y^i_J$ and the function to which $D_P^\ast$ is applied.
Equivalently, for any test function $\psi=(\psi_1,\ldots,\psi_r)$,
\begin{align}
  (D_P^\ast\psi)_i
  = \sum_{j,J} (-1)^{|J|} D_J\!\left(\frac{\partial P_j}{\partial y^i_J}\,\psi_j\right).
\end{align}
\PotentialSymmetryTheorem*
\begin{proof}
 As in the corresponding proof in \cite{Solberg:2025ybf}, we use the following identity, which holds in general \cite{olver1998applications}:
\begin{align}\label{E:commFrechet1}
	\on{pr}X_Q(E(\mc{L}))= E(\on{pr}X_Q(\mc{L}))-D_Q^\ast E(\mc{L}),
\end{align}
where $D_Q^\ast$ denotes the adjoint of the Fréchet derivative with respect to the characteristic $Q$ associated with $X$.
Since $X$ is in evolutionary form, \eqref{E:commFrechet1} yields
\begin{align}\label{E:commFrechet2}
	\on{pr}X(E(\mc{L}))= E(\on{pr}X(\mc{L}))-D_Q^\ast E(\mc{L}).
\end{align}
 Assuming $E(\mc{L})=0$,
the left-hand side of \eqref{E:commFrechet2} vanishes 
since $X$ is a symmetry of $E(\mc{L})=0$, and $E(\mc{L})=0$ implies the last term in \eqref{E:commFrechet2} vanishes as well, since total derivatives of $E(\mc{L})$ vanish when evaluated on any solution of the Euler--Lagrange equations.
Hence, under the assumption ${E(\mc{L})=0}$,
\begin{align}
	E\big(\on{pr}X(\mc{L})\big)=0.
\end{align}
Moreover, if we also assume 
\begin{align}\label{E:prXTisAdiv}
\on{pr}X(T)=d_\mu \beta^\mu, 	
\end{align}
for a (possibly vanishing) divergence $d_\mu \beta^\mu$,
then for all $i$,
\begin{align}\label{E:derPrPot0}
	E_i(\on{pr}X(V))=\frac{\partial}{\partial y^i}\on{pr}X(V)=0,
\end{align}
since $E(d_\mu \beta^\mu)=0$. Now, \eqref{E:derPrPot0} also 
must hold when $E(\mc{L})\ne 0$, since the system $E(\mc{L})= 0$
had no polynomial consequences. This means that $\on{pr}X(V)=C$ for a constant $C\in \N{R}$, and then
\begin{align}
	C=\on{pr}X(V)=X(V)=\eta^i\partial_{y^i}V,
\end{align}
and considering $y=0$ yields
\begin{align}
	C=\eta^i(0)(\partial_{y^i}V)|_{y=0} = a_i \alpha_i
\end{align}
where $\alpha_i=0$ for $i>m$. We have thus proved part (i) of the theorem. Moreover,
\begin{align}\label{E:prXLTV}
	\on{pr}X(\mc{L})=\on{pr}X(T)-a_i \alpha_i,
\end{align}
which proves part (ii), cf.~\eqref{E:condStrictVarSym} with $\xi=0$. 
Moreover, \eqref{E:prXLTV} can be written
\begin{align}
	\on{pr}X(\mc{L})= d_\mu (\beta^\mu-\frac{a_i \alpha_i x^\mu}{4})
	\equiv d_\mu B^\mu .
\end{align}
This proves part (iii) provided we can show that $\on{pr}X(T)=d_\mu\beta^\mu$ cannot be a non-zero real constant (and hence cannot cancel a non-zero $-\on{pr}X(V)$). Consequently, the strict variational condition $\on{pr}X(\mc L)=\on{pr}X(T)-\on{pr}X(V)=0$ can only be satisfied if $\on{pr}X(T)$ and $\on{pr}X(V)$ vanish separately, and thus cannot be fulfilled in the situations covered by (iii):
As $\xi=0$ the prolongation of $X$ applied to $T$ yields
\begin{align}\label{E:prXTxi=0}
	\on{pr}X(T) = \sum_{i,J} \Big(D_J \eta^i(y)\Big)\frac{\partial T}{\partial y^i_J}\in \N{R}\{y\},
\end{align}
 which means that each term of the polynomial $\on{pr}X(T)$ at least contains one factor from $\varphi^c \cup \on{der}(y)$ by assumption on $T$, that is, 
\begin{align}\label{E:prXTproptovarphicorder0}
	\on{pr}X(T) =\varphi^c_{\sigma(i)} q_i +
		y^{j}_{\tilde{J}(j)} r_j,
\end{align}
where $\sigma(i)\in \{1,2,\ldots,(q-m)\}$, the multi-index $|\tilde{J}(j)|\geq 1$
and $q_i, r_j \in \N{R} \{ y\}$ for all $i,j$. Indeed, if $|J|\geq 1$ in \eqref{E:prXTxi=0},
then $D_J\eta^i$ contains derivatives in each term, while for $|J|= 0$
each term of ${\partial T}/{\partial y^i_J}$ is at least linear in the fields $\varphi^c$.
Therefore, $\on{pr}X(T)$ can never equal a non-vanishing constant, 
and hence a strict variational symmetry requires $\on{pr}X(T)=0$ and $\on{pr}X(V)=0$ separately.
\end{proof}

\PotentialSymmetryConverse*
\begin{proof}
In both cases,
\begin{align}\label{E:prXVisaPolinVarphi}
	\on{pr}X(V) =X(V) = \eta^i(\varphi)\frac{\partial V}{\partial y^i}\in
	\N{R}[\varphi],
\end{align}
  is an ordinary, real polynomial in the variables $\varphi$.
On the other hand, since $\xi=0$ the prolongation of $X$ applied to $T$ yields
\begin{align}
	\on{pr}X(T) = \sum_{i,J}\Big(D_J \eta^i(\varphi)\Big)\frac{\partial T}{\partial y^i_J}\in \N{R}\{y\},
\end{align}
 which, as in the proof of Theorem \ref{T:prXannTannV}, means that each term of the polynomial $\on{pr}X(T)$ at least contains one factor from $\varphi^c \cup \on{der}(y)$, that is, 
\begin{align}\label{E:prXTproptovarphicorder}
	\on{pr}X(T) =\varphi^c_{\sigma(i)} q_i +
		y^{j}_{\tilde{J}(j)} r_j, 
\end{align}
where $\sigma(i)\in \{1,2,\ldots,(q-m)\}$, the multi-index $|\tilde{J}(j)|\geq 1$
and $q_i, r_j \in \N{R} \{ y\}$ for all $i,j$. Thus, comparing eqs.~\eqref{E:prXVisaPolinVarphi} and \eqref{E:prXTproptovarphicorder}, there can be no cancellations between the terms of $\on{pr}X(V)$ and $\on{pr}X(T)$. In case (i), we have $\on{pr}X(\mc{L})=0$, since $X$ is a strict variational symmetry, with $\xi=0$, cf.~\eqref{E:condStrictVarSym}. Then, in the absence of cancellations between $\on{pr}X(T)$ and $\on{pr}X(V)$, they must be annihilated separately, i.e.
\begin{align}\label{E:StrictVarSymAnnBothTandV}
	\on{pr}X(T)=\on{pr}X(V)=0.
\end{align}

For case (ii), assume $X$ is a divergence symmetry. Then the total divergence 
\begin{align}\label{E:prXLtotDiv}
	\on{pr}X(\mc{L})=d_\mu B^\mu \in \N{R}\{y\},
\end{align}
 will be a non-zero, polynomial function in $\N{R}\{y\}$. 
We will now show that since $\on{pr}X(\mc{L})$ equals a total divergence, the potential $V$ cannot contribute with more than a constant $C$ to $\on{pr}X(\mc{L})$, that
is,  $\on{pr}X(V)=C$: $\on{pr}X(\mc{L})$ is a total divergence if and only if 
\begin{align}\label{E:E(div)}
	E_i(\on{pr}X(\mc{L}))= \Big( \frac{\partial }{\partial y^i}
 - d_\mu\!\left(\frac{\partial }{\partial y^i_{,\mu}}\right)
 + d_\mu d_\nu\!\left(\frac{\partial }{\partial y^i_{,\mu\nu}}\right)
 - \cdots\Big) \on{pr}X(\mc{L})=0 \,,
\end{align}
   for all $i$, cf.~\eqref{E:EtotDiv}, where $d_\mu$ is given by \eqref{E:dmuexpanded}. 
Hence, 
\begin{align}\label{E:TermsFromPotAreZero}
	\frac{\partial \on{pr}X(V)}{\partial y^i}=0,
\end{align}
for all $i$, which is evident for $i>m=\on{dim}(\varphi)$ since $\on{pr}X(V)$ only depends on the fields $\varphi$. 
In case $i\leq m$ (and hence ${\partial }/{\partial y^i}={\partial }/{\partial \varphi^i}$) \eqref{E:TermsFromPotAreZero} also has to hold,
for if the expression was non-zero, it could not have been canceled by any other
term in \eqref{E:E(div)}, since all other terms then are proportional to variables $\varphi^c$ or derivatives, as spacetime variables do not occur explicitly in $\mc{L}$ and the total derivatives $d_\xi$ of \eqref{E:E(div)} hence will generate derivatives in each term.
 Therefore, 
\begin{align}\label{E:prXV=C}
	\on{pr}X(V)=C \in \N{R},
\end{align}
which implies  
\begin{align}\label{E:exprForTotDivT}
	\on{pr}X(T)= d_\mu \beta^\mu \equiv  d_\mu  (B^\mu+\frac{Cx^\mu}{4}), 
\end{align}
where $B^\mu$ was given in \eqref{E:prXLtotDiv}. 
As given by \eqref{E:prXTproptovarphicorder}, $\on{pr}X(T)$ cannot contain a constant term (hence, the
constant $C$ in \eqref{E:exprForTotDivT} must be canceled by another term in $B^\mu$).
Furthermore, $d_\mu \beta^\mu$
and $C$ cannot both be vanishing, since we then would have a strict variational symmetry, contrary to our assumption.
Finally, 
\eqref{E:prXVisaPolinVarphi},
\eqref{E:StrictVarSymAnnBothTandV} and \eqref{E:prXV=C} with all fields set to zero yields (for both cases)
\begin{align}
	\on{pr}X(V)=\eta^i(0)\partial_{y^i}(V)|_{y=0}=a_i\alpha_i,
\end{align}
which concludes the proof.
\end{proof}

\AffineReparametrizationProp*
\begin{proof}
Assume that $X$ is a point symmetry of $E(\mc L)=0$.
By the chain rule applied to \eqref{E:GenRepar},
\begin{align}
  \partial_{\tilde y^i}
  =\frac{\partial y^k}{\partial\tilde y^i}\,\partial_{y^k}
  =A^{-1}_{ki}\partial_{y^k}.
\end{align}
Hence, with $\tilde\eta^i=A_{ij}\eta^j$,
\begin{align}\label{E:x'eqX}
  \tilde X
  = \xi^\mu\partial_\mu + \tilde\eta^i\partial_{\tilde y^i}
  = \xi^\mu\partial_\mu + A_{ij}\eta^j A^{-1}_{ki}\partial_{y^k}
  = \xi^\mu\partial_\mu + \eta^k\partial_{y^k}
  = X.
\end{align}
Thus $X$ and $\tilde X$ represent the same vector field expressed in the $(x,y)$ and $(x,\tilde y)$ coordinates, respectively.

Since $A$ is constant, \eqref{E:GenRepar} and \eqref{E:Xtieti} extend to all derivatives:
\begin{align}\label{E:jetTrafoAff}
  \tilde y^i_J = A_{ij}y^j_J,
  \qquad
  \tilde\eta^i_J = A_{ij}\eta^j_J,
\end{align}
for all multi-indices $J$. Consequently,
\begin{align}\label{E:reparTildeyiJ}
  \partial_{\tilde y^i_J}=A^{-1}_{ki}\partial_{y^k_J},
\end{align}
and the prolongations coincide,
\begin{align}\label{E:prX'eqprX}
  \on{pr}\tilde X=\on{pr}X,
\end{align}
cf.~\eqref{E:kProlong}.

Let $\tilde E$ denote the Euler operator with respect to the fields $\tilde y$.
Then the Euler operators $E_i$ transform as
\begin{align}\label{E:EulerTrafoAff}
  \tilde E_i = A^{-1}_{ki}E_k
  \qquad\Longleftrightarrow\qquad
  \tilde E = A^{-T}E,
\end{align}
and hence, using \eqref{E:Ly=Ltyt},
\begin{align}\label{E:EtildeLtilde}
  \tilde E(\tilde{\mc L}) = A^{-T}E(\mc L).
\end{align}
Therefore,
\begin{align}
  \on{pr}\tilde X\!\big(\tilde E(\tilde{\mc L})\big)\Big|_{\tilde E(\tilde{\mc L})=0}
  &=\on{pr}X\!\big(A^{-T}E(\mc L)\big)\Big|_{A^{-T}E(\mc L)=0}\nn\\
  &=A^{-T}\on{pr}X\!\big(E(\mc L)\big)\Big|_{E(\mc L)=0}=0,
\end{align}
since $A^{-T}$ is invertible and $X$ is a symmetry of $E(\mc L)=0$.
Thus $\tilde X$ is a point symmetry of $\tilde{E}(\tilde{\mc L})=0$.
The converse direction follows by the same argument, since \eqref{E:GenRepar} is invertible.

To compare symmetry types, note that \eqref{E:prX'eqprX} and \eqref{E:Ly=Ltyt} imply
\begin{align}
  \on{pr}\tilde X(\tilde{\mc L})=\on{pr}X(\mc L).
\end{align}
Hence $\tilde X$ is a strict variational symmetry of $\tilde{\mc L}$ if and only if $X$ is a strict variational symmetry of $\mc L$ by \eqref{E:condStrictVarSym}. Likewise, $\tilde X$ is a divergence symmetry if and only if $X$ is, since
\begin{align}
  d_\mu\tilde\beta^\mu[\tilde y]=d_\mu\beta^\mu[y]
\end{align}
is a total divergence in either coordinate system, where $\tilde\beta^\mu$ is defined by
$\tilde\beta^\mu[\tilde y]=\beta^\mu[y]\equiv \beta^\mu\big(x,y,y^{(1)},\ldots\big)$ with $y=A^{-1}(\tilde y-\gamma)$.

Finally, define $\theta:\La g\to\tilde{\La g}$ by $\theta(X)=\tilde X$. By \eqref{E:x'eqX} we have $\tilde X=X$, so $\tilde{\La g}=\La g$ and $\theta$ is the identity map. Hence $\theta$ is a Lie algebra isomorphism.
\end{proof}
In Proposition \ref{P:AffRepar} we state only that $\La g$ and $\tilde{\La g}$ are isomorphic, rather than equal, although \eqref{E:x'eqX} identifies them. This is because they generally appear different when expressed in old and new coordinates, and because we will use the proposition to test whether, after reparametrizing one algebra and then suppressing the tildes, it may coincide with another fixed algebra; cf.~\eqref{E:orthEqAlgebras} and Sections \ref{sec:DeterminingEquationsSMS} and \ref{sec:SymmetryAlgebrasFieldEquations}. By suppressing the tildes, one loses the explicit identification and retains only the isomorphism.

\section{Completion of the reduction tree: Leaves 13--31}
\label{app:Leaves13to31}
This appendix completes the reduction-tree analysis by considering Leaves~13--31,
shown in Figs.~\ref{fig:reduction-tree1}--\ref{fig:reduction-tree2}. These leaves do not
yield any symmetry algebras beyond those already appearing in Leaves~1--12.

\paragraph{Leaf 13}
\label{sec:Leaf13}
Suppose, as for Leaf~6, that
\begin{align}
  \lambda_{ijkl}=0,\quad \lambda_{ij}=0,\quad \kappa_{1jk}=0,
\end{align}
for all indices, but that
\begin{align}
  \kappa_{222}\ne 0.
\end{align}
Then we can pick a $\gamma_2$ such that
\begin{align}
  m_{22}\to 0,
\end{align}
cf.~\eqref{E:shiftm22}. Moreover assume
\begin{align}
  m_{11}=0.
\end{align}
The symmetries corresponding to this case, given by Leaf~13, are
\begin{align}
  \La{a}(1)_1\oplus \La{u}(1)_Y &= \spa(X_1,X_2,X_Y), \quad \kappa_1=m_{12}=\alpha_1=0, \\
  \La{sh}_1\oplus \La{u}(1)_Y &= \spa(X_1,X_Y), \quad \kappa_1=m_{12}=0,\; \alpha_1\ne 0, \\
  \La{u}(1)_Y &= \spa(X_Y), \quad \text{for all other parameter values.}
\end{align}

\paragraph{Leaf 14}
\label{sec:Leaf14}
Assume the same parameter conditions as in Leaf~13, except that
\begin{align}
  m_{11}\ne 0,
\end{align}
and hence we can choose a $\gamma_1$ such that
\begin{align}
  \alpha_1\to 0,
\end{align}
as displayed in Leaf~14 in Fig.~\ref{fig:reduction-tree2}.
The corresponding symmetries are then
\begin{align}
  \La{sc}_1 \oplus \La{u}(1)_Y &= \spa(X_2,X_Y), \quad m_{12}=0, \\
  \La{u}(1)_Y &= \spa(X_Y), \quad m_{12}\ne 0.
\end{align}

\paragraph{Leaf 15}
\label{sec:Leaf15}
Now suppose
\begin{align}\label{E:parCondsLeaf15}
  \lambda_{ijkl}=0,\quad \lambda_{ij}=0,\quad \kappa_{111}=\kappa_{112}=0,
\end{align}
in the same manner as for Leaf~6, but with
\begin{align}
  \kappa_{122}\ne 0.
\end{align}
Then we may first pick a $\gamma_2$ that eliminates $m_{12}$, thereafter a $\gamma_1$ that eliminates $m_{22}$
without reintroducing $m_{12}$, cf.~\eqref{E:shiftm12} and \eqref{E:shiftm22}, that is,
\begin{align}
  m_{12} &\to 0, \\
  m_{22} &\to 0,
\end{align}
corresponding to Leaf~15 in Fig.~\ref{fig:reduction-tree1}. The symmetry in this case is just $\La{u}(1)_Y$.

\paragraph{Leaf 16}
\label{sec:Leaf16}
Assume the parameter conditions are the same as in 
\eqref{E:parCondsLeaf15}, with the only difference
\begin{align}
	\kappa_{112}\ne 0.
\end{align}
Then we may first choose a $\gamma_2$ that eliminates
$m_{11}$, then pick a $\gamma_1$ that removes $m_{12}$:
\begin{align}
	m_{11}\to 0, \\
	m_{12}\to 0,
\end{align}
as shown in Leaf 16 in Fig.~\ref{fig:reduction-tree1}.
The only possible symmetry algebra is then $\La{u}(1)_Y$.

\paragraph{Leaf 17}
\label{sec:Leaf17}
Suppose, in the same way as for Leaf~6, that
\begin{align}
  \lambda_{ijkl}=\lambda_{11}=\lambda_{12}=0,
\end{align}
but, in contrast to the case for Leaf~16 (and Leaf~6),
\begin{align}
  \lambda_{22}\ne 0.
\end{align}
Then a particular choice for $\gamma_2$ yields
\begin{align}
  \kappa_2\to 0,
\end{align}
cf.~\eqref{E:shiftk2}. Furthermore, assume
\begin{align}
  m_{11}=0,
\end{align}
which leads us to Leaf~17.
The symmetries in this case are then
\begin{align}
  \La{a}(1)_1\oplus \La{u}(1)_Y &= \spa(X_1,X_2,X_Y), \quad \kappa_1=\kappa_{111}=\kappa_{112}=\kappa_{122}=m_{12}=\alpha_1=0, \\
  \La{sh}_1 \oplus \La{u}(1)_Y &= \spa(X_1,X_Y), \quad \kappa_1=\kappa_{111}=\kappa_{112}=\kappa_{122}=m_{12}=0,\; \alpha_1\ne 0, \\
  \La{u}(1)_Y &= \spa(X_Y), \quad \text{for all other parameter values}.
\end{align}

\paragraph{Leaf 18}
\label{sec:Leaf18}
Now suppose the parameters are the same as for Leaf~17, only with
\begin{align}
  m_{11}\ne 0.
\end{align}
If we in addition assume
\begin{align}
  \kappa_{111}=\kappa_{112}=0,
\end{align}
we can eliminate $\alpha_1$ by a shift involving a special value of $\gamma_1$,
\begin{align}
  \alpha_1 \to 0,
\end{align}
cf.~\eqref{E:shifta1}, corresponding to Leaf~18 in Fig.~\ref{fig:reduction-tree1}. Then the possible symmetries are
\begin{align}
  \La{sc}_1 \oplus \La{u}(1)_Y &= \spa(X_2,X_Y), \quad \kappa_1=\kappa_{122}=m_{12}=0, \\
  \La{u}(1)_Y &= \spa(X_Y), \quad \text{for all other parameter values}.
\end{align}

\paragraph{Leaf 19}
\label{sec:Leaf19}
We now make the same assumptions as for Leaf~18, only with
\begin{align}
  \kappa_{112}\ne 0,
\end{align}
which makes it possible to eliminate $m_{12}$,
\begin{align}
  m_{12}\to 0,
\end{align}
by some choice of $\gamma_1$, as displayed in Leaf~19 in Fig.~\ref{fig:reduction-tree1}. Here $\alpha_1$ is
undetermined, i.e.~$\alpha_1\in \N{R}$. Then, at Leaf~19 the only admissible symmetry algebra is \(\La{u}(1)_Y\).

\paragraph{Leaf 20}
\label{sec:Leaf20}
Assume, as for the two previous leaves, that
\begin{align}
  \lambda_{ijkl}=\lambda_{11}=\lambda_{12}=\kappa_2=0,
\end{align}
whereas
\begin{align}
  \lambda_{22}, m_{11}, \kappa_{111} \ne 0.
\end{align}
Then we can set $m_{11}$ to zero by a choice of $\gamma_1$,
\begin{align}
  m_{11}\to 0,
\end{align}
corresponding to Leaf~20. The only symmetry is then $\La{u}(1)_Y$.

\paragraph{Leaf 21}
\label{sec:Leaf21}
Suppose, as for Leaf~17, that
\begin{align}
  \lambda_{ijkl}=\lambda_{12}=0,
\end{align}
for all indices, but that
\begin{align}
  \lambda_{11}\ne 0,
\end{align}
which allows us to set
\begin{align}
  \kappa_{1}\to 0,
\end{align}
by some choice of $\gamma_1$. Moreover, if
\begin{align}
  \lambda_{22}=m_{22}=0,
\end{align}
the situation is as given by Leaf~21 in Fig.~\ref{fig:reduction-tree1},
and we find the following symmetries:
\begin{align}
  \La{a}(1)_2 \oplus \La{u}(1)_Y &= \spa(X_4,X_6,X_Y), \quad \kappa_2=\kappa_{112}=\kappa_{122}=\kappa_{222}=m_{12}=\alpha_2=0, \\
  \La{sh}_2 \oplus \La{u}(1)_Y &= \spa(X_4,X_Y), \quad \kappa_2=\kappa_{112}=\kappa_{122}=\kappa_{222}=m_{12}=0,\; \alpha_2\ne 0, \\
  \La{u}(1)_Y &= \spa(X_Y), \quad \text{for all other parameter values}.
\end{align}

\paragraph{Leaf 22}
\label{sec:Leaf22}
Now suppose the situation is the same as for Leaf~21, only with
\begin{align}
  m_{22}\ne 0.
\end{align}
Then, if
\begin{align}
  \kappa_{122}=\kappa_{222}=0,
\end{align}
we can set
\begin{align}
  \alpha_2\to 0,
\end{align}
by choosing a specific value for $\gamma_2$, and we arrive at Leaf~22 in
Fig.~\ref{fig:reduction-tree1}. Then there are two possible symmetry algebras,
\begin{align}
  \La{sc}_2 \oplus \La{u}(1)_Y &= \spa(X_6,X_Y), \quad \kappa_2=\kappa_{112}=m_{12}=0, \\
  \La{u}(1)_Y &= \spa(X_Y), \quad \text{for all other parameter values}.
\end{align}

\paragraph{Leaf 23}
\label{sec:Leaf23}
We now assume the same conditions as for Leaf~22, only with
\begin{align}
  \kappa_{122}\ne 0,
\end{align}
and $\alpha_2$ undetermined. We can then choose a $\gamma_2$ such that
\begin{align}
  m_{12}\to 0,
\end{align}
since $\kappa_{122}\ne 0$. Then the only symmetry algebra is $\La{u}(1)_Y$.

\paragraph{Leaf 24}
\label{sec:Leaf24}
Assume the same conditions as for Leaf~23, only with
\begin{align}
  \kappa_{222}\ne 0,
\end{align}
while $m_{12}$ and $\kappa_{112}$ are undetermined. Then we can set
\begin{align}
  m_{22}\to 0,
\end{align}
by some choice of $\gamma_2$, and we end up at Leaf~24. Again, the only possible
symmetry algebra is $\La{u}(1)_Y$.

\paragraph{Leaf 25}
\label{sec:Leaf25}
Suppose, as for Leaf~24 and previous leaves,
\begin{align}
  \lambda_{ijkl}=\lambda_{12}=\kappa_1=0, \quad \lambda_{11}\ne 0,
\end{align}
whereas
\begin{align}
  \lambda_{22}\ne 0,
\end{align}
in contrast to earlier. We can then set
\begin{align}
  \kappa_2\to 0,
\end{align}
by a choice of $\gamma_2$, without altering $\kappa_1$, since $\lambda_{12}=0$,
cf.~\eqref{E:shiftk1}. Then there are two possible symmetry algebras,
\begin{align}
  \La{so}(2) \oplus \La{u}(1)_Y &= \spa\big((X_3-X_5),X_Y\big),  \label{E:Leaf25_so2}\\
  \text{for } \kappa_{ijk}&=m_{12}=\alpha_i=0,\ \lambda_{11}=\lambda_{22},\ m_{11}=m_{22}, \label{E:Leaf25_so2Pars} \\
  \La{u}(1)_Y &= \spa(X_Y), \quad \text{for all other parameter values},
\end{align}
where the conditions $\kappa_{ijk}=0$ and $\alpha_i=0$ are understood to hold for all indices $i,j,k$. The conditions \eqref{E:Leaf25_so2Pars} correspond to the same potential as in \eqref{E:b1so2pot} in Branch~I, but with $\lambda_{1111}=0$.

\paragraph{Leaf 26}
\label{sec:Leaf26}
Assume
\begin{align}
  \lambda_{1ijk}=0, \quad \lambda_{2222}\ne 0,
\end{align}
for all indices $i,j,k$. Then, by a choice of $\gamma_2$, we can set
\begin{align}
  \kappa_{222}\to 0.
\end{align}
 Moreover, suppose
\begin{align}
  \kappa_{ijk}=0,
\end{align}
for all other indices as well, and
\begin{align}
  m_{11}=0,
\end{align}
corresponding to Leaf~26. The possible symmetry algebras are then
\begin{align}
  \La{a}(1)_1\oplus \La{u}(1)_Y &= \spa(X_1,X_2,X_Y), \\
  &\text{for } \lambda_{11}=\lambda_{12}=\kappa_1=m_{12}=\alpha_1=0, \nonumber\\
  \La{sh}_1 \oplus \La{u}(1)_Y &= \spa(X_1,X_Y), \\
  &\text{for } \lambda_{11}=\lambda_{12}=\kappa_1=m_{12}=0,\; \alpha_1\ne 0, \nonumber\\
  \La{u}(1)_Y &= \spa(X_Y), \quad \text{for all other parameter values}.
\end{align}

\paragraph{Leaf 27}
\label{sec:Leaf27}
Assume the same parameter conditions as for Leaf~26, only with
\begin{align}
  m_{11}\ne 0,
\end{align}
and we can set
\begin{align}
  \alpha_1 \to 0,
\end{align}
by a choice of $\gamma_1$ without altering $m_{11}$, as given by Leaf~27. The two possible symmetry algebras are then
\begin{align}
  \La{sc}_1 \oplus \La{u}(1)_Y &= \spa(X_2,X_Y),
  \quad \lambda_{11}=\lambda_{12}=\kappa_{1}=m_{12}=0, \\
  \La{u}(1)_Y &= \spa(X_Y), \quad \text{for all other parameter values}.
\end{align}

\paragraph{Leaf 28}
\label{sec:Leaf28}
Now, let the assumptions be the same as for Leaf~27, only with
\begin{align}
  \kappa_{111}\ne 0,
\end{align}
and $m_{11}$ and $\alpha_1$ undetermined. We can then choose a $\gamma_1$ such that
\begin{align}
  m_{11}\to 0.
\end{align}
The only possible symmetry algebra is then $\La{u}(1)_Y$.

\paragraph{Leaf 29}
\label{sec:Leaf29}
For the next case, we make the same assumption as for Leaf~28, except that
\begin{align}
  \kappa_{112}\ne 0,
\end{align}
and the parameters $\kappa_{111}$ and $m_{11}$ are completely unknown. We can then set
\begin{align}
  m_{12}\to 0,
\end{align}
by some choice of $\gamma_1$. Then the only possible symmetry algebra is again $\La{u}(1)_Y$.

\paragraph{Leaf 30}
\label{sec:Leaf30}
Suppose all parameters are the same as for Leaf~29, only with
\begin{align}
  \kappa_{122}\ne 0,
\end{align}
and with no conditions on $\kappa_{112}$ and $m_{12}$. By picking an appropriate value of $\gamma_1$, we can set
\begin{align}
  m_{22}\to 0,
\end{align}
and we find only the trivial symmetry algebra $\La{u}(1)_Y$.

\paragraph{Leaf 31}
\label{sec:Leaf31}
Finally, assume that
\begin{align}
  \lambda_{11ij}=0, \quad \lambda_{1222}\ne 0,
\end{align}
for all indices $i$ and $j$. We can then set
\begin{align}
  \kappa_{122}\to 0,\quad \kappa_{222}\to 0,
\end{align}
by first choosing an appropriate value of $\gamma_2$ that nullifies $\kappa_{122}$ and might also change
$\kappa_{222}$. Afterwards, we can pick a value of $\gamma_1$ that eliminates $\kappa_{222}$ without changing
$\kappa_{122}$, cf.~\eqref{E:shiftk122} and \eqref{E:shiftk222}. Then the only admitted Lie point symmetry algebra is $\La{u}(1)_Y$.

\section{The $\Lg{SO}(2)$-adapted SM+2S potential}
\label{sec:SymmetryAnalysisWithLgSO2AdaptedSM2SPotential}

A more transparent reduction scheme for the quartic and cubic parameters can be obtained by expressing the SM+2S potential in a polynomial basis with definite transformation properties under $\Lg{SO}(2)$ reparametrizations.

Define
\begin{align}\label{E:defz}
	z=s_1+i s_2,
\end{align}
and note that under
\begin{align}\label{E:complexTrafoz}
	z\to \exp(i\theta)z
\end{align}
the combination $z^{n-k}(z^{\ast})^{k}$ transforms as
\begin{align}\label{E:mModesBasis}
 z^{n-k}(z^{\ast})^{k}\to \exp((n-2k)\theta i)z^{n-k}(z^{\ast})^{k}.
\end{align}
Moreover, for a fixed natural number $n$ and integers $k$ satisfying $0\leq k \leq n/2$, the real and imaginary parts of the monomials $z^{n-k}(z^{\ast})^{k}$, whenever non-zero, span the space of all real homogeneous polynomials in $s_1$ and $s_2$ of degree $n$, since these polynomials are linearly independent and their number matches the dimension of that space. For $n=4$, for example, one finds
\begin{align}\label{E:quarticSO2basis}
	P_{4\Re}^{(4)} &\equiv  \Re(z^4)=s_1^4-6s_1^2 s_2^2 + s_2^4, \nn \\
	P_{4\Im}^{(4)} &\equiv  \Im(z^4)=4s_1 s_2(s_1^2-s_2^2),\nn \\
	P_{2\Re}^{(4)} &\equiv  \Re(z^3z^\ast)= (s_1^2-s_2^2)(s_1^2+s_2^2), \nn \\
	P_{2\Im}^{(4)}&\equiv  \Im(z^3z^\ast)= 2s_1s_2 (s_1^2+s_2^2),\nn \\
	r^4 &\equiv z^2(z^\ast)^2= (s_1^2+s_2^2)^2,
\end{align}
and these five polynomials define a basis for all homogeneous real quartic polynomials in $s_1$ and $s_2$.\footnote{This is related to the harmonic--radial decomposition of homogeneous polynomials; see, e.g., Theorem~5.7 in \cite{AxlerBourdonRamey2001}. Here $r^4$ is radial, $P_{4\Re}^{(4)}$ and $P_{4\Im}^{(4)}$ are harmonic, and $P_{2\Re}^{(4)}$ and $P_{2\Im}^{(4)}$ are equal to $r^2$ times harmonic polynomials.}

The $\Lg{SO}(2)$ reparametrizations
\begin{align}
	s_1 &\to \cos(\theta) s_1- \sin(\theta)s_2, \nn \\
	s_2 &\to \sin(\theta) s_1+ \cos(\theta)s_2,
\end{align}
are equivalent to \eqref{E:complexTrafoz}, with $z$ given by \eqref{E:defz}. Define the doublets
\begin{align}
	P_4^{(4)} \equiv (P_{4\Re}^{(4)}, P_{4\Im}^{(4)})^T, \qquad
	P_2^{(4)} \equiv (P_{2\Re}^{(4)}, P_{2\Im}^{(4)})^T,
\end{align}
which, by \eqref{E:mModesBasis}, transform under these reparametrizations as
\begin{align}
	P_4^{(4)}\to R(4\theta) P_4^{(4)},\qquad P_2^{(4)}\to R(2\theta) P_2^{(4)},
\end{align}
where
\begin{align}\label{E:Rmatrmmodes}
	R(m\theta)=\begin{pmatrix}
\cos(m\theta) & -\sin(m\theta) \\
\sin(m\theta) & \cos(m\theta)
\end{pmatrix}.
\end{align}

Similarly, every homogeneous real cubic polynomial in $s_1$ and $s_2$ may be written as a linear combination of
\begin{align}
	P^{(3)}_{3\Re} &\equiv \Re(z^3)=s_1^3-3 s_1 s_2^2, \nn \\
	P^{(3)}_{3\Im} &\equiv \Im(z^3)=3 s_1^2 s_2-s_2^3, \nn \\
	P^{(3)}_{1\Re} &\equiv \Re(z^2z^\ast)= s_1 \left(s_1^2+s_2^2\right)\nn \\
	P^{(3)}_{1\Im} &\equiv \Im(z^2z^\ast)= s_2 \left(s_1^2+s_2^2\right).
\end{align}
Let
\begin{align}
	P_3^{(3)} \equiv (P_{3\Re}^{(3)}, P_{3\Im}^{(3)})^T, \qquad
P_1^{(3)}	\equiv (P_{1\Re}^{(3)}, P_{1\Im}^{(3)})^T,
\end{align}
so that, again by \eqref{E:mModesBasis},
\begin{align}
	P_3^{(3)}\to R(3\theta) P_3^{(3)},\qquad P_1^{(3)}\to R(\theta) P_1^{(3)}.
\end{align}

Likewise, every homogeneous real quadratic polynomial in $s_1$ and $s_2$ may be written as a sum of
\begin{align}
	P^{(2)}_{2\Re} &\equiv \Re(z^2)= s_1^2-s_2^2, \nn \\
	P^{(2)}_{2\Im} &\equiv \Im(z^2)= 2 s_1 s_2, \nn \\
	r^2 &\equiv z z^\ast= s_1^2+s_2^2,
\end{align}
where the doublet
\begin{align}
	P^{(2)}_2=(P^{(2)}_{2\Re},P^{(2)}_{2\Im})^T
\end{align}
transforms as
\begin{align}
	P^{(2)}_2\to R(2\theta)P^{(2)}_2,
\end{align}
under a $\Lg{SO}(2)$ reparametrization.

The most general SM+2S potential may then be written in $\Lg{SO}(2)$-adapted form as
\begin{align}\label{E:SO2adPot}
	V_\text{SM2S}&=V_\text{SM}
	+ \bar{\alpha}_{12} s  \nn \\
  &+m_1 r^2+\bar{m}_{23}P^{(2)}_2  \nn \\
	&+\bar{\kappa}_{12} s \Phi^\dag \Phi +\bar{\kappa}_{34} P_1^{(3)}+ \bar{\kappa}_{56} P_3^{(3)} \nn \\
	&+\lambda_1r^2\Phi^\dag \Phi+ \bar{\lambda}_{23} P^{(2)}_2\Phi^\dag \Phi \nn \\
	&\lambda_4 r^4+ \bar{\lambda}_{56} P_2^{(4)}+\bar{\lambda}_{78} P_4^{(4)},
\end{align}
with barred quantities defined as parameter doublets by
\begin{align}
	\bar{p}_{ij}=(p_i,p_j).
\end{align}
The unbarred parameters $\mu^2, \lambda, m_1, \lambda_1, \lambda_4$ are invariant under $\Lg{SO}(2)$ reparametrizations, whereas the barred parameters must transform as
\begin{align}\label{E:trafoParDoubletsSO2}
	 \bar{\alpha}_{12}, \bar{\kappa}_{12}, \bar{\kappa}_{34}
	&\to \bar{p}\, R^T(\theta),\nn \\
	\bar{m}_{23}, \bar{\lambda}_{23}, \bar{\lambda}_{56}
	&\to \bar{p}\, R^T(2\theta),\nn \\
	\bar{\kappa}_{56}
	&\to \bar{\kappa}_{56}\, R^T(3\theta),\nn \\
	\bar{\lambda}_{78}
	&\to \bar{\lambda}_{78}\, R^T(4\theta),
\end{align}
in order to keep the potential invariant, where $\bar{p}$ denotes any parameter doublet appearing on the left-hand side.
The relations between the old and new parameters are
\begin{align}
m_{11} &= m_1 + m_2, &
m_{12} &= 2m_3, &
m_{22} &= m_1 - m_2, \nn\\
\kappa_{111} &= \kappa_3 + \kappa_5, &
\kappa_{112} &= \kappa_4 + 3\kappa_6, &
\kappa_{122} &= \kappa_3 - 3\kappa_5, &
\kappa_{222} &= \kappa_4 - \kappa_6, \nn\\
\lambda_{11} &= \lambda_1 + \lambda_2, &
\lambda_{12} &= 2\lambda_3, &
\lambda_{22} &= \lambda_1 - \lambda_2, \nn\\
\lambda_{1111} &= \lambda_4 + \lambda_5 + \lambda_7, &
\lambda_{1112} &= 2(\lambda_6 + 2\lambda_8),\nn \\
\lambda_{1122} &= 2(\lambda_4 - 3\lambda_7), &
\lambda_{1222} &= 2(\lambda_6 - 2\lambda_8), &
\lambda_{2222} &= \lambda_4 - \lambda_5 + \lambda_7.
\end{align}

Let
\begin{align}
M_1=
\begin{pmatrix}
4\lambda_4+3\lambda_5 & 3\lambda_6\\
3\lambda_6 & 4\lambda_4-3\lambda_5
\end{pmatrix},
\qquad
M_3=
\begin{pmatrix}
\lambda_5+4\lambda_7 & -\lambda_6+4\lambda_8\\
\lambda_6+4\lambda_8 & \lambda_5-4\lambda_7
\end{pmatrix}.
\end{align}
Under shifts $\gamma=(\gamma_1,\gamma_2)^T$, with $s\to s-\gamma$, the cubic parameters transform as
\begin{align}\label{E:highDegparsShiftTrafo}
	 \bar{\kappa}_{34} \to
	\bar{\kappa}_{34} -M_1\gamma, \quad
	\bar{\kappa}_{56} \to
	\bar{\kappa}_{56} -M_3 \gamma.
\end{align}
Moreover, let
\begin{align}
	L=\begin{pmatrix}
\lambda_1+\lambda_2 & \lambda_3\\
\lambda_3 & \lambda_1-\lambda_2
\end{pmatrix},
\end{align}
then
\begin{align}\label{E:highDegparsShiftTrafo2}
	\bar{\kappa}_{12}\to \bar{\kappa}_{12} -2L\gamma,\quad
	\mu^2 \to \mu^2 + \bar{\kappa}_{12} \gamma - \gamma^TL \gamma.
\end{align}
Finally, the quadratic and linear parameters transform as
\begin{align}\label{E:lowDegparsTrafo}
	m_1&\to m_1+2 \gamma _1 \left(\kappa _3-3 \gamma _2 \lambda _6\right)+\gamma _2 \left(\gamma _2
   \left(3 \lambda _5-4 \lambda _4\right)+2 \kappa _4\right)-\left(\gamma _1^2 \left(4 \lambda _4+3
   \lambda _5\right)\right), \nn \\
	m_2 &\to m_2+\gamma _1 \left(3 \left(\kappa _5-4 \gamma _2 \lambda _8\right)+\kappa
   _3\right)\nn \\ &+\gamma _2 \left(\gamma _2 \left(2 \lambda _4-3 \lambda _5+6 \lambda _7\right)-\kappa _4+3
   \kappa _6\right)-\left(\gamma _1^2 \left(2 \lambda _4+3 \lambda _5+6 \lambda _7\right)\right), \nn \\
	m_3 &\to m_3+\gamma _2 \left(-4 \gamma _1 \left(\lambda _4-3 \lambda _7\right)+\kappa _3-3 \kappa
   _5\right)\nn \\ &+\gamma _1 \left(-3 \gamma _1 \left(\lambda _6+2 \lambda _8\right)+\kappa _4+3 \kappa
   _6\right)-3 \gamma _2^2 \left(\lambda _6-2 \lambda _8\right),  \\
	\alpha _1 &\to \alpha _1+2 \gamma _1 m_1+2 \gamma _1 m_2+2 \gamma _2 m_3+3 \gamma _1^2 \kappa _3+3
   \gamma _1^2 \kappa _5+2 \gamma _2 \gamma _1 \kappa _4+6 \gamma _2 \gamma _1 \kappa _6+\gamma _2^2
   \kappa _3-3 \gamma _2^2 \kappa _5\nn \\ &-4 \gamma _1^3 \lambda _4-4 \gamma _1^3 \lambda _5-4 \gamma _1^3
   \lambda _7-6 \gamma _2 \gamma _1^2 \lambda _6-12 \gamma _2 \gamma _1^2 \lambda _8-4 \gamma _2^2
   \gamma _1 \lambda _4+12 \gamma _2^2 \gamma _1 \lambda _7-2 \gamma _2^3 \lambda _6+4 \gamma _2^3
   \lambda _8, \nn \\
	\alpha _2 &\to \alpha _2+2 \gamma _1 m_3+2 \gamma _2 m_1-2 \gamma _2 m_2+\gamma _1^2 \kappa _4+3 \gamma
   _1^2 \kappa _6+2 \gamma _2 \gamma _1 \kappa _3-6 \gamma _2 \gamma _1 \kappa _5+3 \gamma _2^2 \kappa
   _4-3 \gamma _2^2 \kappa _6\nn \\ &-2 \gamma _1^3 \lambda _6-4 \gamma _1^3 \lambda _8-4 \gamma _2 \gamma _1^2
   \lambda _4+12 \gamma _2 \gamma _1^2 \lambda _7-6 \gamma _2^2 \gamma _1 \lambda _6+12 \gamma _2^2
   \gamma _1 \lambda _8-4 \gamma _2^3 \lambda _4+4 \gamma _2^3 \lambda _5-4 \gamma _2^3 \lambda _7, \nn
\end{align}
with no simple matrix representation.

When reducing the $\Lg{SO}(2)$-adapted potential \eqref{E:SO2adPot}, one may first ask whether the quartic parameters transforming with the highest $m$-value, cf.~\eqref{E:Rmatrmmodes}, namely $\bar{\lambda}_{78}$, vanish. If so, one may proceed to the quartic parameters with the next-highest $m$-value, namely $\bar{\lambda}_{56}$, and continue in the same way. If instead $\bar{\lambda}_{78}$ is non-zero, one may eliminate, for example, $\lambda_8$ by a suitable choice of $\theta$, cf.~\eqref{E:trafoParDoubletsSO2}. Since $\bar{\lambda}_{78}\neq 0$, this implies $\lambda_7\ne 0$. One may then proceed to eliminate two cubic parameters by shifts, cf.~\eqref{E:highDegparsShiftTrafo} and \eqref{E:highDegparsShiftTrafo2}, provided that at least one of the matrices $M_3$, $M_1$, or $L$ has non-vanishing determinant. In such cases, the determining equations are easily solved, and the corresponding symmetry algebras are readily determined.

However, suppose, for example, that one eliminates $\lambda_8$ by a rotation as described above, and that the transformed parameters then satisfy $\det(M_3)=\det(M_1)=\det(L)=0$. Then non-linear relations arise among the parameters entering these matrices. Solving the determining equations under these conditions yields 30 apparently distinct algebras, of which all but four are exotic; collectively, these exotic algebras involve potential parameters of every degree. Consequently, in order to determine the inequivalent symmetry algebras, one must split into different cases and eliminate parameters using the shift formulas \eqref{E:highDegparsShiftTrafo}, \eqref{E:highDegparsShiftTrafo2}, and \eqref{E:lowDegparsTrafo}, subject to the non-linear constraints $\det(M_3)=\det(M_1)=\det(L)=0$. Such cases complicate the reduction scheme for the \(\Lg{SO}(2)\)-adapted potential. For this reason, we instead work with the original potential
\eqref{E:SM2Spot}, with its more natural parameters.

\bibliographystyle{JHEP}

\bibliography{ref_arXiv_v2}

\end{document}